\newcommand{\ol}[1]{\textcolor{blue}{\ifmmode \text{[OL: #1]}\else [OL: #1] \fi}}
\newcommand{\vh}[1]{\textcolor{green!50!black}{\ifmmode \text{[VH: #1]}\else [VH: #1] \fi}}
\newcommand{\mh}[1]{\textcolor{purple}{\ifmmode \text{[MH: #1]}\else [MH: #1] \fi}}
\newcommand{\ph}[1]{\textcolor{red}{\ifmmode \text{[PH: #1]}\else [PH: #1] \fi}}
\newcommand{\lh}[1]{\textcolor{magenta}{\ifmmode \text{[LH: #1]}\else [LH: #1] \fi}}
\newcommand{\integers}[0]{\mathbb{Z}}
\newcommand{\bin}[0]{\mathbb{B}}
\newcommand{\semof}[1]{\llbracket#1\rrbracket}
\newcommand{\modencof}[1]{\langle#1\rangle}
\newcommand{\decof}[1]{\mathsf{dec}(#1)}
\newcommand{\wordof}[1]{\mathcal{L}(#1)}
\newcommand{\aut}[0]{\mathcal{A}}
\newcommand{\autof}[1]{\aut_{#1}}
\newcommand{\lang}[0]{\mathcal{L}}
\newcommand{\langof}[1]{\lang(#1)}
\newcommand{\ltr}[1]{\xrightarrow{#1}}
\newcommand{\post}[0]{\mathit{Post}}
\newcommand{\postof}[2]{\post(#1, #2)}
\newcommand{\acc}[0]{\mathit{Acc}}
\newcommand{\fin}[0]{\mathit{Fin}}
\newcommand{\finof}[2]{\fin(#1, #2)}
\newcommand{\reach}[0]{\mathit{Reach}}
\newcommand{\residue}[0]{\kappa}
\newcommand{\vecof}[1]{\vec{#1}}
\newcommand{\modof}[1]{\mathrel{\equiv_{#1}}}
\newcommand{\vars}[0]{\mathbb{X}}
\newcommand{\asgn}[0]{\nu}
\newcommand{\modclof}[2]{\left[#1\right]_{#2}}
\newcommand{\gcdof}[1]{\gcd(#1)}
\newcommand{\formulaof}[1]{\tikz[anchor=base,baseline]{
  \node[inner sep=0.5mm, fill=blue!5, minimum height=4mm] {$#1$};}}
\newcommand{\true}[0]{\top}
\newcommand{\false}[0]{\bot}
\newcommand{\metatrue}[0]{\mathit{true}}
\newcommand{\metafalse}[0]{\mathit{false}}
\newcommand{\defeq}[0]{\mathrel{\stackrel{\text{def}}{=}}}
\newcommand{\defiff}[0]{\mathrel{\stackrel{\text{def}}{\Leftrightarrow}}}
\newcommand{\fv}[0]{\mathit{fv}}
\newcommand{\proj}[0]{\pi}
\newcommand{\alphx}[0]{\Sigma_{\vars}}
\newcommand{\structsim}[0]{\preceq_{\mathsf{s}}}
\newcommand{\dec}[0]{\mathsf{blw}}
\newcommand{\inc}[0]{\mathsf{abv}}
\newcommand{\decat}[0]{\dec_{\mathit{atom}}}
\newcommand{\incat}[0]{\inc_{\mathit{atom}}}
\newcommand{\atom}[0]{\mathsf{atom}}
\newcommand{\flow}[0]{\mathsf{FA}}
\newcommand{\meet}[0]{\sqcap}
\newcommand{\range}[0]{\mathsf{range}}
\newcommand{\rangeat}[0]{\range_{\mathit{atom}}}
\newcommand{\undefined}[0]{\mathit{undef}}
\newcommand{\datadom}[0]{\mathbb{D}}
\newcommand{\subst}[2]{[#1/#2]}
\newcommand{\lash}[0]{\textsc{Lash}\xspace}
\newcommand{\mona}[0]{\textsc{Mona}\xspace}
\newcommand{\amaya}[0]{\textsc{Tool}\xspace}
\newcommand{\amayanoopt}[0]{$\amaya_{\textrm{noopt}}$\xspace}
\newcommand{\sylvan}[0]{\textsc{Sylvan}\xspace}
\newcommand{\ziii}[0]{Z3\xspace}
\newcommand{\cvc}[0]{\textsc{cvc5}\xspace}
\newcommand{\princess}[0]{\textsc{Princess}\xspace}
\newcommand{\Rplus}{\protect\hspace{-.1em}\protect\raisebox{.35ex}{\smaller{\smaller\textbf{+}}}}
\newcommand{\Cpp}{\mbox{C\Rplus\Rplus}\xspace}
\newcommand{\C}{\mbox{C}\xspace}
\newcommand{\RGBcircle}[1]{\ensuremath{{\color[RGB]{#1}\bullet}}}
\newcommand{\worklist}[0]{\mathsf{worklist}}
\newcommand{\pop}[0]{\mathsf{pop}}
\newcommand{\clcoNP}[0]{\textbf{co-NP}}
\newcommand{\clEXPSPACE}[0]{\textbf{EXPSPACE}}
\newcommand{\clNEXP}[0]{\textbf{NEXP}}
\newcommand{\clEXP}[0]{\textbf{EXP}}
\tikzset{
  >={Stealth[round,bend]},
}
\tikzstyle{automaton}=[
\newcommand{\twosymb}[2]{\big[\begin{smallmatrix} #1 \\ #2 \end{smallmatrix}\big]}
\newcommand{\twosymbfin}[2]{\protect\tikz[anchor=base,baseline,inner sep=0mm]{\protect\node[fill=red!20]{$\twosymb{#1}{#2}$};}}
\newcommand{\onesymb}[1]{\begin{smallmatrix}\left[#1\right]\end{smallmatrix}}
\tikzstyle{labsymb}=[fill=white,fill opacity=0.70,inner sep=0mm]
\newcommand{\smtcomp}[0]{\texttt{SMT-COMP}\xspace}
\newcommand{\frobenius}[0]{\texttt{Frobenius}\xspace}
\Crefname{algocf}{Algorithm}{Algorithms}
\title{Algebraic Reasoning Meets Automata in Solving\\ Linear Integer Arithmetic (Technical Report)
}
\author{
Peter~Habermehl\orcidID{0000-0002-7982-0946}\inst{2} \and
Vojtěch~Havlena\orcidID{0000-0003-4375-7954}\inst{1} \and
Michal~Hečko\orcidID{0009-0003-2428-8547}\inst{1}\and
Lukáš~Holík\orcidID{0000-0001-6957-1651}\inst{1} \and
Ondřej~Lengál\orcidID{0000-0002-3038-5875}\inst{1}
}
\institute{
  Faculty of Information Technology,
Brno University of Technology, Brno, Czech Republic \and 
Université Paris Cité, IRIF, Paris, France}
\begin{document}

\maketitle

\begin{abstract}

We present a new angle on solving quantified linear integer arithmetic
based on combining the automata-based approach, where numbers are understood as
bitvectors, with ideas from (nowadays prevalent) algebraic approaches, which
work directly with numbers.
%
%
This combination is enabled by a~fine-grained version of the duality between automata and arithmetic formulae.
In particular, we employ a~construction where states of automaton are obtained as derivatives of arithmetic formulae: then every state corresponds to a~formula.
Optimizations based on techniques and ideas transferred from the world of algebraic methods are used on thousands of automata states, which dramatically amplifies their effect. 
The merit of this combination of automata with algebraic methods 
is demonstrated by our prototype implementation being competitive to and even
superior to state-of-the-art SMT solvers.

\end{abstract}

\vspace{-8.0mm}
\section{Introduction}\label{sec:introduction}
\vspace{-2.0mm}



\emph{Linear integer arithmetic} (LIA), also known as \emph{Presburger
arithmetic}, is the first-order theory of integers with addition.
Its applications include e.g.
databases~\cite{Kuper00}, program analysis~\cite{Monniaux09},
synthesis~\cite{KuncakMPS12}, 
and it is an essential component of every aspiring SMT solver. 
Many other types of constraints 
can either be
reduced to LIA, or are decided using a~tight collaboration of a~solver for the
theory and a~LIA solver, e.g., in~the~theory of bitvectors~\cite{SchuleS06},
strings~\cite{ChenCHHLS23}, or arrays~\cite{GhilardiNRZ07}.
Current SMT solvers are strong enough in~solving large quantifier-free LIA
formulae. 
Their ability to handle quantifiers is, however, problematic to the extent of being impractical.
Even a~tiny formula with two quantifier alternations can be a~show
stopper for them. 
Handling quantifiers is an~area of lively research with numerous application
possibilities waiting for a practical solution, e.g., software model
checking~\cite{HeizmannHP13}, program synthesis~\cite{ReynoldsKTBD19}, or
theorem proving~\cite{HieronymiMOS0S22}.

Among existing techniques for handling quantifiers, 
the complete approaches based on 
quantifier elimination~\cite{Presburger29,Cooper72} and
automata~\cite{Buchi60,WolperB95,BoudetC96} 
have been mostly deemed not scalable and abandoned in practice. 
Current SMT solvers use mainly incomplete techniques 
originating, e.g., from solving the theory of uninterpreted
functions~\cite{ReynoldsKK17} and algebraic techniques, such as the
\emph{simplex} algorithm for quantifier-free formulae~\cite{Dantzig60}.

This work is the first step in leveraging a recent renaissance of practically
competitive automata technology for solving LIA.
This trend that has recently emerged in string constraint solving (e.g.~\cite{noodler,noodlerfm,ostrich,trau,z3strre}), 
processing regular expressions \cite{cox,mata,VeanesDerivatives21},   
reasoning about the SMT theory of bitvectors~\cite{strejdaQBF}, 
or regex matching
(e.g.~\cite{oopslamatching,re2,hyperscan,margusmatcher}).
The new advances are rooted in paradigms such as
usage of non-determinism and alternation, 
various flavours of symbolic representations, 
and combination with/or integration into SAT/SMT frameworks and with algebraic techniques.  
%

We particularly show that the automata-based procedure provides unique
opportunities to amplify certain algebraic optimizations that reason over the
semantic of formulae.
These optimizations then boost the inherent strong points of the automata-based
approach to the extent that it is able to overcome modern SMT solvers.
%
%
The core strong points of automata are orthogonal to those of algebraic methods, 
mainly due to treating numbers as strings of bits regardless of their numerical values.
%
%
Automata can thus represent large sets of solutions succinctly and can use powerful techniques, such as minimization, that have no counterpart in the algebraic world. 
This makes automata more efficient than the algebraic approaches already in their basic form, implemented e.g. in~\cite{WolperB95,BoudetC96}, on some classes of problems such as the \emph{Frobenius coin problem}~\cite{Haase18}.
\begin{wrapfigure}[15]{r}{0.51\textwidth}
  \vspace{-6mm}
  \hspace*{-3mm}
  \begin{minipage}{65mm}
    \includegraphics[width=65mm]{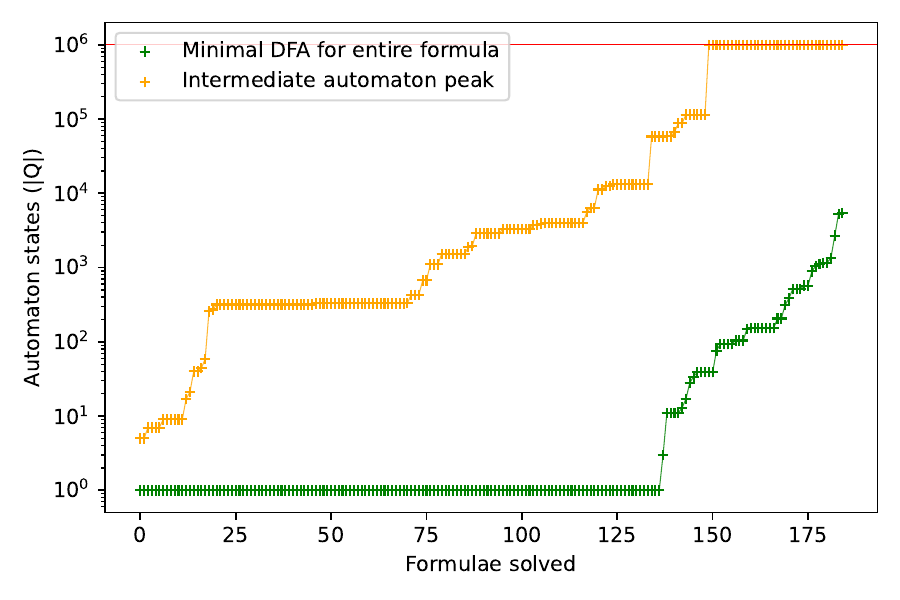}
  \end{minipage}
  \vspace{-4mm}
  \caption{
    Comparison of the peak intermediate automaton size and the size of the
    minimized DFA for the entire formula
    on the \smtcomp benchmark (cf.\ \cref{sec:experiments}).
  }
\label{fig:gap}
\end{wrapfigure}
In many practical cases, the automata construction, however, explodes.
The explosion usually happens when constructing an intermediate automaton for
a~sub-formula, although the minimal automaton for the entire formula is almost
always small.
The plot in \cref{fig:gap} shows that the gap between sizes of final and
intermediate automata in our benchmark is always several orders of magnitude large,
offering opportunities for optimizations.
%
%
In this paper, we present a~basic approach to breaching this gap by transferring techniques and ideas from the algebraic world to automata and using them to prune the vast state space. 

To this end, we combine the classical inductive automata construction with constructing formula derivatives, 
similar to derivatives of regular expressions \cite{Brzozowski64_derivatives,Antimirov96_partderivatives,VeanesDerivatives21} or WS1S/WS$k$S formulae~\cite{traytel,FiedorHJLV17,HavlenaHLV19}. 
Our construction directly generates states of an automaton of a nested formula, 
without the need to construct intermediate automata for sub-formulae first.
Although the derivative construction is not better than the inductive construction by itself, 
it gives an opportunity to optimize the state space \emph{on the fly},
before it gets a chance to explode. 
The optimization itself is negotiated by the \emph{fine-grained} version of the well-known \emph{automaton-formula duality}. 
In the derivative construction, \emph{every state} corresponds to a~LIA formula.
Applying equivalence-preserving formula rewriting on state formulae has
the effect of merging or pruning states, similar to what
DFA minimization could achieve after the entire automaton were constructed.

Our equivalence-preserving rewriting uses known algebraic techniques or ideas
originating from them.
%
%
%
First, we use basic formula simplification techniques, such as propagating true or false values or antiprenexing. 
Despite being simple, these simplifications have a large impact on performance.
Second, we use \emph{disjunction pruning}, which
replaces $\varphi_1 \lor \varphi_2 \lor \cdots \lor \varphi_k$
by $\varphi_2 \lor \cdots \lor \varphi_k$ if $\varphi_1$ is
entailed by the rest of the formula (this is close to the state pruning
techniques used in \cite{fivedeterm,FiedorHJLV17,DoyenR10}).
%
We also adopt the principle of \emph{quantifier
instantiation}~\cite{simplify,reynoldsConflictQI,geModelBasedQI},
where we detect cases when a~quantified variable can be substituted by one or several values, or when
a~linear congruence can be simplified to a~linear equation. We particularly use ideas from Cooper's quantifier elimination~\cite{Cooper72},
where a quantifier is expanded into a disjunction over a finite number of values, 
and from Omega test~\cite{Pugh91}, where a~variable with a one-side unbounded
range is substituted by the least restrictive value.

It is noteworthy that in the purely algebraic setting,
the same techniques could only be applied once on the input formula, with a negligible effect.
In the automata-based procedure, their power is amplified since
they are used on thousands of derivative states generated deep within automata after reading several bits of the solution.

Our prototype implementation is competitive with the best SMT solvers on benchmarks from SMT-COMP,
and, importantly, it is superior on quantifier-intensive instances.
We believe that more connections along the outlined direction, based on the
fine-grained duality between automata and formulae, can be found, and that the
work in this paper is the first step in bridging the worlds of automata
and algebraic approaches.
Many challenges in incorporating automata-based LIA reasoning into SMT solvers
still await but, we believe, can be tackled, as witnessed e.g.\ within the
recent successes of the integration of automata-based string
solvers~\cite{ChenCHHLS23,z3strre,ostrich}.


%

%

\vspace{-3.0mm}
\section{Preliminaries}\label{sec:prelims}
\vspace{-2.0mm}

We use~$\integers$ to denote the set of \emph{integers}, $\integers^+$ to
denote the set of \emph{positive integers}, and~$\bin$ to denote the set of
\emph{binary digits} $\{0,1\}$.
For $x, y \in \integers$ and $m \in \integers^+$, we use $x \modof m y$ to
denote that~$x$ is congruent with~$y$ modulo~$m$, i.e., there exists~$z \in
\integers$ s.t.\ $z \cdot m + x = y$; and~$x | y$ to denote that there
exists $z' \in \integers$ s.t.\ $y = z' \cdot x$.
Furthermore, we use~$\modclof x m$ to denote the unique integer
s.t.\ $0 \leq \modclof x m < m$ and $x \modof m \modclof x m$.
The following notation will be used for intervals of integers:
for $a,b \in \integers$, the set $\{x \in \integers \mid a \leq x \leq b\}$ is denoted as $[a,b]$,
the set $\{x \in \integers \mid a \leq x\}$ is denoted as $[a, {+\infty})$, and
the set $\{x \in \integers \mid x \leq b\}$ is denoted as $({-\infty}, b]$.
The \emph{greatest common divisor} of $a,b \in \integers$, denoted as
$\gcdof{a,b}$, is the largest integer such that $\gcdof{a,b}|a$ and
$\gcdof{a,b}|b$ (note that $\gcdof{a,0} = |a|$); if~$\gcdof{a,b} = 1$, we
say that~$a$ and~$b$ are \emph{coprime}.
For a~real number~$y$, $\lfloor y \rfloor$~denotes the \emph{floor} of~$y$,
i.e., the integer $\max\{z \in \integers \mid z \leq y\}$, and
$\lceil y \rceil$~denotes the \emph{ceiling} of~$y$,
i.e., the integer $\min\{z \in \integers \mid z \geq y\}$.

An~\emph{alphabet}~$\Sigma$ is a~finite non-empty set of \emph{symbols} and
a~\emph{word} $w = a_1 \ldots a_n$ of length~$n$ over~$\Sigma$ is a~finite
sequence of symbols from~$\Sigma$. 
If~$n = 0$, we call~$w$ the \emph{empty word} and denote it~$\epsilon$.
$\Sigma^+$~is the \mbox{set of all non-empty words over~$\Sigma$ and $\Sigma^* =
\Sigma^+ \cup \{\epsilon\}$.}

\vspace{-2mm}
\paragraph{Finite automata.}
In order to simplify constructions in the paper, we use a variation of finite
automata with accepting transitions instead of states.
A~(\emph{final transition acceptance-based) nondeterministic finite automaton} (FA) is a~five-tuple $\aut = (Q,
\Sigma, \delta, I, \acc)$ where
$Q$ is a~finite set of \emph{states},
$\Sigma$ is an~\emph{alphabet},
$\delta \subseteq Q \times \Sigma \times Q$ is a~\emph{transition relation},
$I \subseteq Q$ is a~set of \emph{initial states}, and
$\acc\colon \delta \to \{\metatrue, \metafalse\}$ is a transition-based acceptance condition. 
We often use $q \ltr a p$ to denote that $(q,a,p) \in \delta$.
A~\emph{run} of~$\aut$ over a~word $w = a_1 \ldots a_n$ is a~sequence of states
$\rho = q_0 q_1 \ldots q_n \in Q^{n+1}$ such that for all $1 \leq i \leq n$ it
holds that $q_{i-1} \ltr{a_i} q_i$ and $q_0 \in I$.
The run~$\rho$ is \emph{accepting} if $n \geq 1$ and $\acc(q_{n-1} \ltr{a_n} q_n)$
(i.e., if the last transition in the run is accepting)%
\footnote{
  Note that our FAs cannot accept the empty word~$\epsilon$, which corresponds in our use to
  the fact that in the two's complement encoding of integers, one needs at
  least one bit (the sign bit) to represent a number, see further.
}.
The language of~$\aut$, denoted as~$\langof \aut$, is defined as $\langof \aut
= \{w \in \Sigma^* \mid \text{there is an accepting run of } \aut \text{ on }
w\}$.
We further use $\lang_\aut(q)$ to denote the language of the FA obtained
from~$\aut$ by setting its set of initial states to~$\{q\}$ (if the context is
clear, we use just~$\langof q$).

$\aut$ is \emph{deterministic} (a~DFA) if $|I| \leq 1$ and for all
states~$q \in Q$ and symbols~$a \in \Sigma$, it holds that if $q \ltr a p$ and
$q \ltr a r$, then $p=r$.
On the other hand, $\aut$~is \emph{complete} if $|I| \geq 1$ and for all
states~$q \in Q$ and symbols $a \in \Sigma$, there is at least one state~$p \in
Q$ such that $q \ltr a p$.
For a~deterministic and complete~$\aut$, we abuse notation and treat~$\delta$
as a~function $\delta\colon Q \times \Sigma \to Q$.
A~DFA~$\aut$ is \emph{minimal} if $\forall q \in Q\colon \langof
q \neq \emptyset \land \forall p \in Q\colon p \neq q \Rightarrow \langof q
\neq \langof p$.
Hopcroft's~\cite{Hopcroft71} and Brzozowski's~\cite{Brzozowski63} algorithms
for obtaining a~minimal DFA can be modified for our definition of FAs .

\vspace{-2mm}
\paragraph{Linear integer arithmetic.}
Let~$\vars = \{x_1, \ldots, x_n\}$ be a~(finite) set of integer variables.
We will use $\vecof x$ to denote the vector $(x_1, \ldots, x_n)$.
Sometimes, we will treat $\vecof x$ as a~set, e.g., $y \in \vecof x$ denotes $y
\in \{x_1, \ldots, x_n\}$.
A~\emph{linear integer arithmetic} (LIA) formula~$\formulaof\varphi$ over~$\vars$ is
obtained using the following grammar:
\begin{align*}
  \formulaof{\varphi_{\mathit{atom}}} ::= \quad &
      \formulaof{\vecof a \cdot \vecof x = c} ~\mid~
      \formulaof{\vecof a \cdot \vecof x \leq c} ~\mid~
      \formulaof{\vecof a \cdot \vecof x \modof m c} ~\mid~
      \formulaof{\bot}
      \\
  \formulaof{\varphi} ::= \quad &
      \formulaof{\varphi_{\mathit{atom}}} ~\mid~
      \formulaof{\neg \varphi} ~\mid~
      \formulaof{\varphi \land \varphi} ~\mid~
      \formulaof{\varphi \lor \varphi} ~\mid~
      \formulaof{\exists y (\varphi)}
\end{align*}
where~$\vecof a$ is a~vector of~$n$ integer coefficients $(a_1, \ldots, a_n)
\in \integers^n$, $c \in \integers$~is a~constant, $m\in \integers^+$~is
a~\emph{modulus}, and $y \in \vars$
(one can derive the other connectives $\true$, $\rightarrow$,
$\leftrightarrow$, $\forall$, \ldots in the standard way)%
\footnote{
  Although the modulo constraint $\formulaof{\vecof a \cdot \vecof x \modof m c}$ could be
  safely removed without affecting the expressivity of the input language,
  keeping it allows a more efficient automata construction and application of
  certain heuristics (cf.\ \cref{sec:monotonicity}).
}.
Free variables of~$\formulaof \varphi$ are denoted as~$\fv(\formulaof \varphi)$.
Given a~formula~$\formulaof \varphi$, we say that an assignment $\asgn\colon \vars \to
\integers$ is a~\emph{model} of~$\formulaof \varphi$, denoted as $\asgn \models \formulaof \varphi$,
if~$\asgn$ satisfies~$\formulaof\varphi$ in the standard way.
Note that we use the same symbols $=, \leq, \modof m, \neg, \land, \lor,
\exists, \ldots$ in the syntactical language (where they are not to be
interpreted, with the exception of evaluation of constant expressions) of the logic as
well as in the meta-language.
In order to avoid ambiguity, we use the style $\formulaof \varphi$ for a~syntactic formula.
W.l.o.g.\ we assume that variables in~$\formulaof \varphi$ are unique, i.e.,
there is no overlap between quantified variables and also between free and
quantified variables.

In our decision procedure we represent integers as non-empty sequences of
binary digits $a_0 \ldots a_n \in \bin^+$ using the two's complement with the
\emph{least-significant bit first} (LSBF) encoding (i.e., the right-most bit
denotes the \emph{sign}).
Formally, the \emph{decoding} of a~binary word represents the integer
\begin{equation}
\label{eq:dec}
\modencof{a_0 \ldots a_n} = \sum_{i = 0}^{n-1} a_i \cdot 2^i - a_n \cdot 2^n.
\end{equation}
For instance, $\decof{0101} = -6$ and $\decof{010} = 2$.
Note that any integer has infinitely many representations in this encoding: the
shortest one and others obtained by repeating the sign bit any number of times.
In this paper, we work with the so-called \emph{binary assignments}.
A~binary assignment is an assignment $\asgn\colon \vars \to \bin^+$ s.t. for each $x_1, x_2 \in \vars$ the lengths of the words assigned to~$x_1$ and~$x_2$ match, i.e., $|\asgn(x_1)| = |\asgn(x_2)|$.
We overload the decoding operator $\modencof \cdot$ to binary assignments such that
$\modencof \asgn \colon \vars
\to \integers$ is defined as $\modencof \asgn = \{ x \mapsto \modencof y \mid \asgn(x) = y \}$.
%
A~\emph{binary model} of a~formula~$\formulaof \varphi$ is a~binary assignment~$\asgn$
such that $\modencof \asgn \models \varphi$.
We~denote the set of all binary models of 
a~LIA formula~$\formulaof \varphi$ as~$\semof{\formulaof \varphi}$ and
we write $\formulaof{\varphi_1} \Rightarrow \formulaof{\varphi_2}$ to denote
$\semof{\formulaof{\varphi_1}} \subseteq \semof{\formulaof{\varphi_2}}$ and
$\formulaof{\varphi_1} \Leftrightarrow \formulaof{\varphi_2}$ to denote
$\semof{\formulaof{\varphi_1}} = \semof{\formulaof{\varphi_2}}$.


\newcommand{\figPostAtoms}[0]{ 
\begin{figure}[t]
\begin{center}
\newlength{\longestPostFormula}
\settowidth{\longestPostFormula}{$\formulaof{\vecof a \cdot \vecof x \modof {2m+1} \modclof{\frac 1 2 (\residue + 2m + 1)}{2m+1}}$}
\begin{align*}
  \postof{\formulaof{\vecof a \cdot \vecof x \leq c}}{\sigma} ~ &\textstyle \defeq~
  \formulaof{\vecof a \cdot \vecof x \leq \lfloor\frac 1 2 \residue \rfloor}\hspace{10mm} \text{for }~\residue \defeq c - \vecof a \cdot \sigma\\
  \postof{\formulaof{\vecof a \cdot \vecof x = c}}{\sigma} ~ &\textstyle \defeq~
    \begin{cases}
      \makebox[\longestPostFormula][l]{$\formulaof{\vecof a \cdot \vecof x = \frac 1 2 \residue}$} & \text{if } 2 | \residue\\
      \formulaof\bot & \text{otherwise }
    \end{cases} \\
  \postof{\formulaof{\vecof a \cdot \vecof x \modof {2m} c}}{\sigma} ~ &{} \defeq~
    \begin{cases}
    \makebox[\longestPostFormula][l]{$\formulaof{\vecof a \cdot \vecof x \modof {m} \modclof{\frac 1 2 \residue} m}$} & \text{if } 2 | \residue \\
      \formulaof{\bot} & \text{otherwise }
    \end{cases} \\
  \postof{\formulaof{\vecof a \cdot \vecof x \modof {2m + 1} c}}{\sigma} ~ &{} \defeq~
    \begin{cases}
      \formulaof{\vecof a \cdot \vecof x \modof {2m+1} \modclof{\frac 1 2 \residue}{2m+1}} & \text{if } 2 | \residue \\
      \formulaof{\vecof a \cdot \vecof x \modof {2m+1} \modclof{\frac 1 2 (\residue + 2m + 1)}{2m+1}} & \text{otherwise}
    \end{cases} \\
  \postof{\formulaof{\bot}}{\sigma} ~ &{} \defeq~ \formulaof{\bot}
\end{align*}
%
\end{center}
\vspace{-5mm}
\caption{
  Definition of the transition function $\post$ for atomic formulae.
  Note that the right-hand sides contain constant expressions, so they will be
  evaluated.
 }
\label{fig:post_atoms}
\vspace{-3mm}
\end{figure}
}

\newcommand{\figFinAtoms}[0]{ 
\begin{wrapfigure}[8]{r}{5.6cm}
\vspace*{-10mm}
\hspace*{-5mm}
\begin{minipage}{1.15\linewidth}
\begin{align*}
  \finof{\formulaof{\vecof a \cdot \vecof x \leq c}}{\sigma} ~ &\textstyle \defiff~
    c + \vecof a \cdot \sigma \geq 0 \\[-1mm]
  \finof{\formulaof{\vecof a \cdot \vecof x = c}}{\sigma} ~ &\textstyle \defiff~
    c + \vecof a \cdot \sigma = 0\\[-1mm]
  \finof{\formulaof{\vecof a \cdot \vecof x \modof m c}}{\sigma} ~ &{} \defiff~
    c + \vecof a \cdot \sigma \modof m 0\\[-1mm]
  \finof{\formulaof{\bot}}{\sigma} ~ &{} \defiff~ \metafalse
\end{align*}
\vspace{-9mm}
\caption{Acceptance for atomic formulae.}
\label{fig:fin_atoms}
\end{minipage}
\end{wrapfigure}
}

\newcommand{\algAutConstr}[0]{ 
\begin{figure}[t]
\begin{algorithm}[H]
\caption{Atomic formulae automata construction.}
\KwIn{atomic formula~$\formulaof \varphi$}
\KwOut{FA $\autof \varphi$ such that $\langof{\autof \varphi} = \semof{\formulaof \varphi}$}
\label{alg:aut_constr}
$Q \gets \worklist \gets \{\formulaof \varphi\}$\;
\While{$\worklist \neq \emptyset$}{
  $\formulaof \psi \gets \worklist.\pop()$\;
  \ForEach{$a \in \alphx$}{
    $\formulaof{\psi'} \gets \postof{\formulaof \psi}{a}$\;
    \If{$\formulaof{\psi'} \notin Q$}{
      $Q \gets Q \cup \{\formulaof{\psi'}\}$\;
      $\worklist \gets \worklist \cup \{\formulaof{\psi'}\}$\;
    }
  }
}
\Return{$\autof \varphi = (Q, \alphx, \post \ol{restr?}, \{\formulaof \varphi\}, )$}\;
\end{algorithm}
\end{figure}
}

\newcommand{   
\begin{figure}[t]
\hfill
\begin{subfigure}[b]{0.45\linewidth}
\begin{center}
\begin{tikzpicture}
[
  automaton,
  node distance=20mm
]

\node[initial,state] (q1) {$1$};
\node[state,below of=q1] (q0) {$0$};
\node[state,right of=q1] (q-1) {$-1$};
\node[state,right of=q0] (q-2) {$-2$};
\node[state,right of=q-2] (q-3) {$-3$};

\draw[->] (q1) to node[centered,labsymb,pos=0.3] {$\twosymbfin 0 0$}
                  node[centered,labsymb,pos=0.7] {$\twosymbfin 1 0$} (q0);
\draw[->] (q1) to node[centered,labsymb,pos=0.3] {$\twosymbfin 0 1$}
                  node[centered,labsymb,pos=0.7] {$\twosymbfin 1 1$} (q-1);

\draw[->,loop below] (q0) to node[centered,labsymb] {$\twosymbfin 0 0$} (q0);
\draw[->] (q0) to node[centered,labsymb,pos=0.35] {$\twosymbfin 0 1$} 
                  node[centered,labsymb,pos=0.65] {$\twosymbfin 1 0$} (q-1);
\draw[->] (q0) to node[centered,labsymb] {$\twosymbfin 1 1$} (q-2);

\draw[->,loop right] (q-1) to node[centered,labsymb,pos=0.3,yshift=1mm] {$\twosymb 0 0$}
                              node[centered,labsymb,pos=0.7,yshift=-1mm] {$\twosymbfin 1 0$} (q-1);
\draw[->,bend left] (q-1) to node[centered,labsymb,pos=0.3] {$\twosymbfin 0 1$}
                             node[centered,labsymb,pos=0.6] {$\twosymbfin 1 1$} (q-2);

\draw[->,loop below] (q-2) to node[centered,labsymb,pos=0.3,xshift=1mm] {$\twosymbfin 0 1$}
                              node[centered,labsymb,pos=0.7,xshift=-1mm] {$\twosymb 1 0$} (q-2);
\draw[->,bend left] (q-2) to node[centered,labsymb] {$\twosymb 0 0$} (q-1);
\draw[->,bend left] (q-2) to node[centered,labsymb] {$\twosymbfin 1 1$} (q-3);

\draw[->,loop below] (q-3) to node[centered,labsymb,pos=0.3,xshift=1mm] {$\twosymb 0 1$}
                              node[centered,labsymb,pos=0.7,xshift=-1mm] {$\twosymbfin 1 1$} (q-3);
\draw[->,bend left] (q-3) to node[centered,labsymb,pos=0.3] {$\twosymb 0 0$}
                             node[centered,labsymb,pos=0.7] {$\twosymb 1 0$} (q-2);

\end{tikzpicture}
\end{center}
\vspace{-4mm}
\caption{The FA for $\formulaof{x + 2y \leq 1}$. A~state $\formulaof{x + 2y \leq
  c}$ is represented by ``$c$''.}
\label{label}
\end{subfigure}
\hfill
\begin{subfigure}[b]{0.40\linewidth}
\begin{center}
\begin{tikzpicture}
[
  automaton,
  node distance=20mm
]

\node[initial,state] (q2_6) {$2_{\modof 6}$};
\node[state,right of=q2_6] (q1_3) {$1_{\modof 3}$};
\node[state,below of=q2_6] (q0_3) {$0_{\modof 3}$};
\node[state,below of=q1_3] (q2_3) {$2_{\modof 3}$};

\draw[->] (q2_6) to node[centered,labsymb] {$\twosymb 0 0$} (q1_3);
\draw[->] (q2_6) to node[centered,labsymb] {$\twosymb 0 1$} (q0_3);

\draw[->,loop right] (q1_3) to node[centered,labsymb] {$\twosymbfin 0 1$} (q1_3);
\draw[->,bend left=15] (q1_3) to node[centered,labsymb] {$\twosymb 1 0$} (q0_3);
\draw[->,bend left] (q1_3) to node[centered,labsymb,pos=0.3] {$\twosymb 0 0$}
                              node[centered,labsymb,pos=0.7] {$\twosymb 1 1$} (q2_3);

\draw[->,loop left] (q0_3) to node[centered,labsymb,pos=0.3,yshift=-1mm] {$\twosymbfin 0 0$}
                              node[centered,labsymb,pos=0.7,yshift=1mm] {$\twosymbfin 1 1$} (q0_3);
\draw[->,bend left] (q0_3) to node[centered,labsymb] {$\twosymb 1 0$} (q1_3);
\draw[->,bend left=15] (q0_3) to node[centered,labsymb] {$\twosymb 0 1$} (q2_3);

\draw[->,loop right] (q2_3) to node[centered,labsymb] {$\twosymbfin 1 0$} (q2_3);
\draw[->,bend left=15] (q2_3) to node[centered,labsymb,pos=0.3] {$\twosymb 0 0$}
                              node[centered,labsymb,pos=0.7] {$\twosymb 1 1$} (q1_3);
\draw[->,bend left] (q2_3) to node[centered,labsymb] {$\twosymb 0 1$} (q0_3);

\end{tikzpicture}
\end{center}
\vspace{-4mm}
\caption{The FA for $\formulaof{x + 2y \modof 6 2}$. A~state $\formulaof{x + 2y \modof m
  c}$ is represented by ``$c_{\modof m}$''.}
\label{label}
\end{subfigure}
\hfill
\vspace{-2mm}
\caption{Examples of FAs for atomic formulae.
  The notation for symbols is $\twosymb x y$; red background denotes accepting
  transitions.
  }
\label{fig:example_atoms}
\vspace{-4mm}
\end{figure}
}[0]{   
\begin{figure}[t]
\hfill
\begin{subfigure}[b]{0.45\linewidth}
\begin{center}
\begin{tikzpicture}
[
  automaton,
  node distance=20mm
]

\node[initial,state] (q1) {$1$};
\node[state,below of=q1] (q0) {$0$};
\node[state,right of=q1] (q-1) {$-1$};
\node[state,right of=q0] (q-2) {$-2$};
\node[state,right of=q-2] (q-3) {$-3$};

\draw[->] (q1) to node[centered,labsymb,pos=0.3] {$\twosymbfin 0 0$}
                  node[centered,labsymb,pos=0.7] {$\twosymbfin 1 0$} (q0);
\draw[->] (q1) to node[centered,labsymb,pos=0.3] {$\twosymbfin 0 1$}
                  node[centered,labsymb,pos=0.7] {$\twosymbfin 1 1$} (q-1);

\draw[->,loop below] (q0) to node[centered,labsymb] {$\twosymbfin 0 0$} (q0);
\draw[->] (q0) to node[centered,labsymb,pos=0.35] {$\twosymbfin 0 1$} 
                  node[centered,labsymb,pos=0.65] {$\twosymbfin 1 0$} (q-1);
\draw[->] (q0) to node[centered,labsymb] {$\twosymbfin 1 1$} (q-2);

\draw[->,loop right] (q-1) to node[centered,labsymb,pos=0.3,yshift=1mm] {$\twosymb 0 0$}
                              node[centered,labsymb,pos=0.7,yshift=-1mm] {$\twosymbfin 1 0$} (q-1);
\draw[->,bend left] (q-1) to node[centered,labsymb,pos=0.3] {$\twosymbfin 0 1$}
                             node[centered,labsymb,pos=0.6] {$\twosymbfin 1 1$} (q-2);

\draw[->,loop below] (q-2) to node[centered,labsymb,pos=0.3,xshift=1mm] {$\twosymbfin 0 1$}
                              node[centered,labsymb,pos=0.7,xshift=-1mm] {$\twosymb 1 0$} (q-2);
\draw[->,bend left] (q-2) to node[centered,labsymb] {$\twosymb 0 0$} (q-1);
\draw[->,bend left] (q-2) to node[centered,labsymb] {$\twosymbfin 1 1$} (q-3);

\draw[->,loop below] (q-3) to node[centered,labsymb,pos=0.3,xshift=1mm] {$\twosymb 0 1$}
                              node[centered,labsymb,pos=0.7,xshift=-1mm] {$\twosymbfin 1 1$} (q-3);
\draw[->,bend left] (q-3) to node[centered,labsymb,pos=0.3] {$\twosymb 0 0$}
                             node[centered,labsymb,pos=0.7] {$\twosymb 1 0$} (q-2);

\end{tikzpicture}
\end{center}
\vspace{-4mm}
\caption{The FA for $\formulaof{x + 2y \leq 1}$. A~state $\formulaof{x + 2y \leq
  c}$ is represented by ``$c$''.}
\label{label}
\end{subfigure}
\hfill
\begin{subfigure}[b]{0.40\linewidth}
\begin{center}
\begin{tikzpicture}
[
  automaton,
  node distance=20mm
]

\node[initial,state] (q2_6) {$2_{\modof 6}$};
\node[state,right of=q2_6] (q1_3) {$1_{\modof 3}$};
\node[state,below of=q2_6] (q0_3) {$0_{\modof 3}$};
\node[state,below of=q1_3] (q2_3) {$2_{\modof 3}$};

\draw[->] (q2_6) to node[centered,labsymb] {$\twosymb 0 0$} (q1_3);
\draw[->] (q2_6) to node[centered,labsymb] {$\twosymb 0 1$} (q0_3);

\draw[->,loop right] (q1_3) to node[centered,labsymb] {$\twosymbfin 0 1$} (q1_3);
\draw[->,bend left=15] (q1_3) to node[centered,labsymb] {$\twosymb 1 0$} (q0_3);
\draw[->,bend left] (q1_3) to node[centered,labsymb,pos=0.3] {$\twosymb 0 0$}
                              node[centered,labsymb,pos=0.7] {$\twosymb 1 1$} (q2_3);

\draw[->,loop left] (q0_3) to node[centered,labsymb,pos=0.3,yshift=-1mm] {$\twosymbfin 0 0$}
                              node[centered,labsymb,pos=0.7,yshift=1mm] {$\twosymbfin 1 1$} (q0_3);
\draw[->,bend left] (q0_3) to node[centered,labsymb] {$\twosymb 1 0$} (q1_3);
\draw[->,bend left=15] (q0_3) to node[centered,labsymb] {$\twosymb 0 1$} (q2_3);

\draw[->,loop right] (q2_3) to node[centered,labsymb] {$\twosymbfin 1 0$} (q2_3);
\draw[->,bend left=15] (q2_3) to node[centered,labsymb,pos=0.3] {$\twosymb 0 0$}
                              node[centered,labsymb,pos=0.7] {$\twosymb 1 1$} (q1_3);
\draw[->,bend left] (q2_3) to node[centered,labsymb] {$\twosymb 0 1$} (q0_3);

\end{tikzpicture}
\end{center}
\vspace{-4mm}
\caption{The FA for $\formulaof{x + 2y \modof 6 2}$. A~state $\formulaof{x + 2y \modof m
  c}$ is represented by ``$c_{\modof m}$''.}
\label{label}
\end{subfigure}
\hfill
\vspace{-2mm}
\caption{Examples of FAs for atomic formulae.
  The notation for symbols is $\twosymb x y$; red background denotes accepting
  transitions.
  }
\label{fig:example_atoms}
\vspace{-4mm}
\end{figure}
}

\vspace{-0.0mm}
\section{Classical Automata-Based Decision Procedure for LIA}\label{sec:classical}
\vspace{-0.0mm}

The following \emph{classical decision procedure} is due to Boudet and
Comon~\cite{BoudetC96} (based on the ideas of \cite{Buchi62}) with
an extension to modulo constraints by
Durand-Gasselin and Habermehl~\cite{Durand-GasselinH10}.
Given a~set of variables~$\vars$, a~symbol $\sigma$ is a~mapping \mbox{$\sigma\colon
\vars \to \bin$} and $\alphx$~denotes the set of all symbols over~$\vars$.
For a~symbol $\sigma \in \alphx$ and a~variable
$x\in\vars$ we define the \emph{projection} $\proj_x(\sigma) = \{ \sigma'\in
\Sigma_{\vars} \mid \sigma'_{|\vars\setminus \{x\}} =
\sigma_{\left|\vars\setminus \{x\}\right.} \}$ where $\sigma_{|\vars \setminus \{x\}}$ is the
restriction of the function~$\sigma$ to the domain~$\vars \setminus \{x\}$.

For a~LIA formula~$\varphi$, the classical automata-based decision procedure builds
an FA $\aut_\varphi$ encoding all binary models of~$\varphi$.
We use a modification which uses automata with accepting edges instead of states.
It allows to construct deterministic automata for atomic formulae, later in \cref{sec:opt} also for complex formulae, 
and to eliminate an artificial final state present in the original construction that does not correspond to any arithmetic formula. 
The construction proceeds inductively as follows:

\figPostAtoms 

\paragraph{Base case.}
First, an~FA $\autof{\varphi_{\mathit{atom}}}$ is constructed for each
atomic formula~$\varphi_{\mathit{atom}}$ in~$\varphi$.
The states of~$\autof{\varphi_{\mathit{atom}}}$ are LIA formulae with
$\formulaof{\varphi_{\mathit{atom}}}$ being the (only) initial state.
$\autof{\varphi_{\mathit{atom}}}$'s structure is given by the
transition function $\post$, implemented via a~derivative
$\post(\formulaof{\varphi_{\mathit{atom}}}, \sigma)$ of
$\varphi_{\mathit{atom}}$ w.r.t.\ symbols~$\sigma \in
\alphx$ as given in \cref{fig:post_atoms} (an example will follow).

Intuitively, for $\post(\formulaof{\vecof a \cdot \vecof x = c}, \sigma)$, the
next state after reading~$\sigma$ is given by taking the least significant bits (LSBs)
of all variables ($\vecof x$) after being multiplied with the respective
coefficients ($\vecof a$) and subtracting this value from~$c$.
If the parity of the result is odd, we can terminate ($\vecof a \cdot \vecof x$
and~$c$ have a~different LSB, so they cannot match),
otherwise we can remove the LSB of the result, set it as
a~new~$c$, and continue.
One can imagine this process as performing a~long addition of several binary
numbers at once with~$c$ being the result (the subtraction from~$c$ can be seen
as working with \emph{carry}).
The intuition for a~formula $\formulaof{\vecof a \cdot \vecof x \leq c}$ is
similar.
On the other hand,
for a~formula $\formulaof{\vecof a \cdot \vecof x \modof{2m} c}$, i.e.,
a~congruence with an even modulus, if the parity of the
left-hand side ($\vecof a \cdot \vecof x$) and the right-hand side~($c$) does
not match (in other words, $c - \vecof a \cdot \vecof x$ is odd),
because the modulus is even we can terminate.
Otherwise, we remove the LSB of the modulus (i.e., divide it by two).
Lastly, let us mention the second case for the rule for a~formula of the form
$\formulaof{\vecof a \cdot \vecof x \modof{2m+1} c}$.
Here, since~$\residue$ is odd, we cannot divide it by two; however, adding the
modulus~($2m+1$) to~$\residue$ yields an even value equivalent to~$\residue$.

The states of $\aut_{\varphi_{\mathit{atom}}}$ are then all reachable formulae obtained from the
application of $\post$ from the initial state, with the reachability from
a~set of formulae~$S$ using symbols from~$\Gamma$ given using the least
fixpoint operator~$\mu$ as follows:
\begin{equation}
  \reach(S,\Gamma) = \mu Z\colon S \cup \{\post(\formulaof{\psi},a) \mid \formulaof{\psi} \in Z, a \in \Gamma\}
\end{equation}

\begin{lemma}
  $\reach(\{\formulaof{\varphi_{\mathit{atom}}}\}, \alphx)$ is finite
  for an atomic formula $\formulaof{\varphi_{\mathit{atom}}}$.
\end{lemma}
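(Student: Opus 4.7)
The plan is to do a case analysis on the shape of $\formulaof{\varphi_{\mathit{atom}}}$ and observe that in every rule of \cref{fig:post_atoms} the coefficient vector $\vecof a$ is preserved. Therefore, for each atomic form, the reachable states form a set parametrized only by the constant $c$ (and, in the congruence case, by the modulus $m$), so it suffices to bound the values these parameters can attain.

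For $\formulaof{\vecof a \cdot \vecof x \leq c}$, I would set $A = \sum_{i=1}^n |a_i|$ and note that for every symbol $\sigma \in \alphx$ the inner product $\vecof a \cdot \sigma$ lies in $[-A, A]$. The successor constant is then $c' = \lfloor\tfrac{1}{2}(c - \vecof a \cdot \sigma)\rfloor$, from which I would derive $|c'| \leq \tfrac{1}{2}(|c| + A)$. The key observation is that the interval $[-A, A]$ is invariant under this update (if $|c| \leq A$ then $|c'| \leq A$), and that if $|c| > A$ the update is a strict contraction toward this interval. Thus the orbit of any starting $c_0$ eventually enters $[-A, A]$ after $O(\log(|c_0|/A))$ steps and stays there, giving only finitely many reachable constants.

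The case $\formulaof{\vecof a \cdot \vecof x = c}$ is entirely analogous, with the mild variation that the update is $c' = \tfrac{1}{2}(c - \vecof a \cdot \sigma)$ whenever the parity matches and $\formulaof\bot$ otherwise; the same contraction argument bounds the reachable $c$ by $A$ eventually, with only finitely many transient values before. The formula $\formulaof\bot$ is a sink, contributing one state.

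For congruences $\formulaof{\vecof a \cdot \vecof x \modof m c}$, I would argue by induction on $m$. Inspecting the two rules for even and odd modulus, the successor modulus $m'$ is either $m/2$ (when $m$ is even) or $m$ itself (when $m$ is odd), and in either case the new right-hand side is reduced modulo $m'$, hence lies in $[0, m')$. Consequently the set of reachable moduli is contained in $\{m_0/2^k \mid 0 \leq k \leq \log_2 m_0\}$, and for each such modulus there are at most $m'$ possible right-hand sides; summing over the chain gives a finite total.

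The only non-routine step is the contraction bound for the inequality and equality cases; once that is in place, the remaining cases are direct syntactic inspection, and the overall finiteness of $\reach(\{\formulaof{\varphi_{\mathit{atom}}}\}, \alphx)$ follows since all reachable formulae share the shape and coefficient vector of the initial atom.
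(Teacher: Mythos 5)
Your argument is correct and, in substance, it is the same one the paper relies on: the paper simply delegates the equation and inequation cases to Propositions~1 and~3 of Boudet--Comon and handles moduli with the one-line observation that the right-hand side always lies in $[0,m-1]$, whereas you reconstruct the delegated part explicitly. Your reconstruction is the standard one: the coefficient vector $\vecof a$ is invariant under $\post$, so only the constant (and, for congruences, the modulus) can vary, and the update $c' = \lfloor\tfrac{1}{2}(c - \vecof a \cdot \sigma)\rfloor$ with $\vecof a \cdot \sigma \in [-A,A]$ traps the orbit in a finite interval after finitely many contracting steps; the modulus chain $m, m/2, \ldots$ stabilizes at the first odd value, matching the paper's remark. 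One cosmetic off-by-one: because of the floor, the bound $|c'| \leq \tfrac{1}{2}(|c|+A)$ is not literally true for negative values, and the orbit can get stuck at $-A-1$ rather than entering $[-A,A]$ (e.g., $A=1$, $c=-2$, $\vecof a\cdot\sigma = 1$ gives $c'=-2$); the correct trap is $[-A-1,A]$, which of course does not affect finiteness.
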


\begin{proof}
  The cases for linear equations and inequations follow
  from~\cite[Proposition~1]{BoudetC96} and \cite[Proposition~3]{BoudetC96}
  respectively.
  For moduli, the lemma follows from the fact that in the definition
  of~$\post$, the right-hand side of a~modulo is an integer
  from~$[0,m-1]$.
\qed
\end{proof}



\figFinAtoms  

$\post$ is deterministic, so it suffices to define the acceptance condition
for the derivatives only for each state and symbol, as given in
\cref{fig:fin_atoms}.
E.g., a~transition from $\formulaof{2 x_1 - 7 x_2 = 5}$ over $\sigma = \twosymb
1 1$ is accepting; the intuition is similar as for $\post$
with the difference that the last bit is the sign bit (cf.~\cref{eq:dec}), so it
is treated in the opposite way to other bits (therefore, there is the ``$+$'' sign on the right-hand sides of the definitions rather than the ``$-$'' sign as in \cref{fig:post_atoms}).
If we substitute into the example, we obtain $2 \cdot (-1) - 7 \cdot (-1) = -2 +7 = 5$.
The acceptance condition $\acc$ is then defined as $\acc(\formulaof{\varphi_1}
\ltr{\sigma} \postof{\formulaof{\varphi_1}}{\sigma}) \defeq \finof{\formulaof{\varphi_1}}{\sigma}$ and
$\aut_{\varphi_{\mathit{atom}}}$ is defined as the~FA
\begin{equation}
\autof{\varphi_{\mathit{atom}}} = (\reach(\{\formulaof{\varphi_{\mathit{atom}}}\}, \alphx),
  \alphx, \post, \{\formulaof{\varphi_{\mathit{atom}}}\}, \acc).
\end{equation}
%
Note that if an FA accepts a~word~$w$, it also accepts all words obtained by
appending any number of copies of the most significant bit (the sign) to~$w$.

\begin{example}
\cref{fig:example_atoms} gives examples of FAs for
$\formulaof{x + 2y \leq 1}$ and $\formulaof{x + 2y \modof 6 2}$.
\qed
\end{example}

\begin{figure}[t]
\hfill
\begin{subfigure}[b]{0.45\linewidth}
\begin{center}
\begin{tikzpicture}
[
  automaton,
  node distance=20mm
]

\node[initial,state] (q1) {$1$};
\node[state,below of=q1] (q0) {$0$};
\node[state,right of=q1] (q-1) {$-1$};
\node[state,right of=q0] (q-2) {$-2$};
\node[state,right of=q-2] (q-3) {$-3$};

\draw[->] (q1) to node[centered,labsymb,pos=0.3] {$\twosymbfin 0 0$}
                  node[centered,labsymb,pos=0.7] {$\twosymbfin 1 0$} (q0);
\draw[->] (q1) to node[centered,labsymb,pos=0.3] {$\twosymbfin 0 1$}
                  node[centered,labsymb,pos=0.7] {$\twosymbfin 1 1$} (q-1);

\draw[->,loop below] (q0) to node[centered,labsymb] {$\twosymbfin 0 0$} (q0);
\draw[->] (q0) to node[centered,labsymb,pos=0.35] {$\twosymbfin 0 1$} 
                  node[centered,labsymb,pos=0.65] {$\twosymbfin 1 0$} (q-1);
\draw[->] (q0) to node[centered,labsymb] {$\twosymbfin 1 1$} (q-2);

\draw[->,loop right] (q-1) to node[centered,labsymb,pos=0.3,yshift=1mm] {$\twosymb 0 0$}
                              node[centered,labsymb,pos=0.7,yshift=-1mm] {$\twosymbfin 1 0$} (q-1);
\draw[->,bend left] (q-1) to node[centered,labsymb,pos=0.3] {$\twosymbfin 0 1$}
                             node[centered,labsymb,pos=0.6] {$\twosymbfin 1 1$} (q-2);

\draw[->,loop below] (q-2) to node[centered,labsymb,pos=0.3,xshift=1mm] {$\twosymbfin 0 1$}
                              node[centered,labsymb,pos=0.7,xshift=-1mm] {$\twosymb 1 0$} (q-2);
\draw[->,bend left] (q-2) to node[centered,labsymb] {$\twosymb 0 0$} (q-1);
\draw[->,bend left] (q-2) to node[centered,labsymb] {$\twosymbfin 1 1$} (q-3);

\draw[->,loop below] (q-3) to node[centered,labsymb,pos=0.3,xshift=1mm] {$\twosymb 0 1$}
                              node[centered,labsymb,pos=0.7,xshift=-1mm] {$\twosymbfin 1 1$} (q-3);
\draw[->,bend left] (q-3) to node[centered,labsymb,pos=0.3] {$\twosymb 0 0$}
                             node[centered,labsymb,pos=0.7] {$\twosymb 1 0$} (q-2);

\end{tikzpicture}
\end{center}
\vspace{-4mm}
\caption{The FA for $\formulaof{x + 2y \leq 1}$. A~state $\formulaof{x + 2y \leq
  c}$ is represented by ``$c$''.}
\label{label}
\end{subfigure}
\hfill
\begin{subfigure}[b]{0.40\linewidth}
\begin{center}
\begin{tikzpicture}
[
  automaton,
  node distance=20mm
]

\node[initial,state] (q2_6) {$2_{\modof 6}$};
\node[state,right of=q2_6] (q1_3) {$1_{\modof 3}$};
\node[state,below of=q2_6] (q0_3) {$0_{\modof 3}$};
\node[state,below of=q1_3] (q2_3) {$2_{\modof 3}$};

\draw[->] (q2_6) to node[centered,labsymb] {$\twosymb 0 0$} (q1_3);
\draw[->] (q2_6) to node[centered,labsymb] {$\twosymb 0 1$} (q0_3);

\draw[->,loop right] (q1_3) to node[centered,labsymb] {$\twosymbfin 0 1$} (q1_3);
\draw[->,bend left=15] (q1_3) to node[centered,labsymb] {$\twosymb 1 0$} (q0_3);
\draw[->,bend left] (q1_3) to node[centered,labsymb,pos=0.3] {$\twosymb 0 0$}
                              node[centered,labsymb,pos=0.7] {$\twosymb 1 1$} (q2_3);

\draw[->,loop left] (q0_3) to node[centered,labsymb,pos=0.3,yshift=-1mm] {$\twosymbfin 0 0$}
                              node[centered,labsymb,pos=0.7,yshift=1mm] {$\twosymbfin 1 1$} (q0_3);
\draw[->,bend left] (q0_3) to node[centered,labsymb] {$\twosymb 1 0$} (q1_3);
\draw[->,bend left=15] (q0_3) to node[centered,labsymb] {$\twosymb 0 1$} (q2_3);

\draw[->,loop right] (q2_3) to node[centered,labsymb] {$\twosymbfin 1 0$} (q2_3);
\draw[->,bend left=15] (q2_3) to node[centered,labsymb,pos=0.3] {$\twosymb 0 0$}
                              node[centered,labsymb,pos=0.7] {$\twosymb 1 1$} (q1_3);
\draw[->,bend left] (q2_3) to node[centered,labsymb] {$\twosymb 0 1$} (q0_3);

\end{tikzpicture}
\end{center}
\vspace{-4mm}
\caption{The FA for $\formulaof{x + 2y \modof 6 2}$. A~state $\formulaof{x + 2y \modof m
  c}$ is represented by ``$c_{\modof m}$''.}
\label{label}
\end{subfigure}
\hfill
\vspace{-2mm}
\caption{Examples of FAs for atomic formulae.
  The notation for symbols is $\twosymb x y$; red background denotes accepting
  transitions.
  }
\label{fig:example_atoms}
\vspace{-4mm}
\end{figure}

\paragraph{Inductive case.}
The inductive cases for Boolean connectives are defined in the standard way: conjunction of two formulae
is implemented by taking the intersection of the two corresponding FAs,
disjunction by taking their union, and negation is
implemented by taking the complement (which may involve determinization via the
subset construction).
Formally,
let $\autof{\varphi_i} = (Q_{\varphi_i},\alphx,
\delta_{\varphi_i}, I_{\varphi_i}, \acc_{\varphi_i})$ for $i \in \{1,2\}$ with
$Q_{\varphi_1} \cap Q_{\varphi_2} = \emptyset$ be complete FAs.
Then,
\begin{itemize}
  \item  
    $\autof{\varphi_1 \land \varphi_2} = (Q_{\varphi_1} \times Q_{\varphi_2},\alphx,
    \delta_{\varphi_1 \land \varphi_2},
    I_{\varphi_1} \times I_{\varphi_2},\acc_{\varphi_1 \land \varphi_2})$
    where
    \begin{itemize}
      \item  
        $\delta_{\varphi_1 \land \varphi_2} = \{(q_1, q_2) \ltr \sigma (p_1, p_2) \mid
        q_1 \ltr \sigma p_1 \in \delta_{\varphi_1}, q_2 \ltr \sigma p_2 \in
        \delta_{\varphi_2}\}$ and

      \item
        $\acc_{\varphi_1 \land \varphi_2}((q_1, q_2) \ltr
        \sigma (p_1, p_2)) \defiff \acc_{\varphi_1}(q_1 \ltr \sigma p_1) \land
        \acc_{\varphi_2}(q_2 \ltr \sigma p_2)$.
    \end{itemize}


  \item  
    $\autof{\varphi_1 \lor \varphi_2} = (Q_{\varphi_1} \times Q_{\varphi_2},\alphx,
    \delta_{\varphi_1 \lor \varphi_2},
    I_{\varphi_1} \times I_{\varphi_2},\acc_{\varphi_1 \lor \varphi_2})$
    where
    \begin{itemize}
      \item  
        $\delta_{\varphi_1 \lor \varphi_2} = \{(q_1, q_2) \ltr \sigma (p_1, p_2) \mid
        q_1 \ltr \sigma p_1 \in \delta_{\varphi_1}, q_2 \ltr \sigma p_2 \in
        \delta_{\varphi_2}\}$ and

      \item
        $\acc_{\varphi_1 \lor \varphi_2}((q_1, q_2) \ltr
        \sigma (p_1, p_2)) \defiff \acc_{\varphi_1}(q_1 \ltr \sigma p_1) \lor
        \acc_{\varphi_2}(q_2 \ltr \sigma p_2)$.
    \end{itemize}

  \item
    $\autof{\neg \varphi_1} = (2^{Q_{\varphi_1}}, \alphx,
     \delta_{\neg \varphi_1},\{I_{\varphi_1}\}, \acc_{\neg\varphi_1})$ where
     \begin{itemize}
       \item  $\delta_{\neg \varphi_1} =
         \{S \ltr \sigma T \mid T = \{p \in Q_{\varphi_1} \mid \exists q \in
         S\colon q \ltr \sigma p \in \delta_{\varphi_1}\}$ and
       \item  $\acc_{\neg \varphi_1}(S \ltr \sigma T) \defiff
         \forall q \in S\,\forall p \in T\colon \neg \acc_{\varphi_1}(q \ltr \sigma p)$.

     \end{itemize}

\end{itemize}

Existential quantification is more complicated.
Given a~formula $\formulaof{\exists x(\varphi)}$ and the FA
$\autof{\varphi} = (Q_{\varphi}, \alphx, \delta_\varphi, I_{\varphi},
\acc_\varphi)$, a~word~$w$ should be accepted by $\autof{\exists x(\varphi)}$ iff 
there is a~word~$w'$ accepted by~$\autof \varphi$ s.t.~$w$ and~$w'$
are the same on all tracks except the track for~$x$.
One can perform projection of~$x$ out of~$\autof
\varphi$, i.e., remove the~$x$ track from all its transitions.
This is, however, insufficient.
For instance, consider the model $\{x \mapsto 7, y \mapsto -4\}$, encoded into
the (shortest) word $\twosymb 1 0 \twosymb 1 0 \twosymb 1 1
\twosymb 0 1$ (we use the notation $\twosymb x y$).
When we remove the $x$-track from the word, we obtain $\onesymb 0\!\! \onesymb 0\!\!
\onesymb 1\!\! \onesymb 1$, which encodes the assignment $\{y \mapsto -4\}$.
It is, however, not the shortest encoding of the assignment; the shortest
encoding is~$\onesymb 0\!\! \onesymb 0\!\! \onesymb 1$.
Therefore, we further need to modify the FA obtained after projection to also
accept words that would be accepted if their sign bit were arbitrarily
extended, which we do by reachability analysis on the FA.
Formally,
    $\autof{\exists x(\varphi)} = (Q_{\varphi}, \alphx, \delta_{\exists
    x(\varphi)}, I_{\varphi}, \acc_{\exists x(\varphi)})$ where
    %
    \begin{itemize}
      \item  $\delta_{\exists x(\varphi)} = \{q \ltr {\sigma'} p \mid \exists
        q \ltr \sigma p \in \delta_{\varphi}\colon \sigma' \in \proj_x(\sigma)\}$ and
      \item  $\acc_{\exists x(\varphi)}(q \ltr \sigma p) \defiff
        \hspace*{-3mm}\displaystyle\bigvee_{\sigma'\in\proj_x(\sigma)}\hspace*{-3mm}
    \acc_\varphi(q \ltr{\sigma'} p) \lor
        \exists r,s \in \reach(\{p\}, \proj_x(\sigma)) \colon
        \hspace*{-3mm}\displaystyle\bigvee_{\sigma'\in\proj_x(\sigma)}\hspace*{-3mm}
        \acc_\varphi(r \ltr{\sigma'} s).$
    \end{itemize}

\vspace{-3mm}
After defining the base and inductive cases for constructing the FA~$\autof
\varphi$, we can establish the connection between its language and the models of~$\varphi$.
For a~word $w = a_1\dots a_n
\in\alphx$ and a variable $x \in \vars$, we define $w_x = a_1(x)\dots a_n(x)$,
i.e., $w_x$~extracts the binary number assigned to variable~$x$ in~$w$.
For a~binary assignment $\asgn$ of a LIA formula~$\varphi$, we define its language as 
$\wordof{\asgn} = \{ w \in \alphx^* \mid \forall x \in \vars\colon w_x = \asgn(x) \}$. 
We lift the language to sets of binary assignments as usual.

\begin{theorem}\label{thm:semantics}
  Let $\varphi$ be a~LIA formula. Then $\langof{\autof \varphi} = \langof{\semof{\formulaof \varphi}}$.
\end{theorem}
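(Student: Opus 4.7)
The plan is to proceed by structural induction on the formula $\formulaof{\varphi}$, mirroring the inductive construction of $\autof{\varphi}$.

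For the base case of atomic formulae, I would prove the stronger claim that for every state $\formulaof{\psi} \in \reach(\{\formulaof{\varphi_{\mathit{atom}}}\}, \alphx)$, the language $\lang_{\autof{\varphi_{\mathit{atom}}}}(\formulaof{\psi})$ equals $\langof{\semof{\formulaof{\psi}}}$. This is shown by induction on the word length and reduces to verifying two properties: the derivative correctness $w' \in \lang_{\autof{\varphi_{\mathit{atom}}}}(\postof{\formulaof{\psi}}{\sigma})$ iff $\sigma w' \in \lang_{\autof{\varphi_{\mathit{atom}}}}(\formulaof{\psi})$, and correctness of $\finof{\formulaof{\psi}}{\sigma}$ on single-symbol words. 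For the derivative, write each $x_i$ as $\sigma(i) + 2 y_i$, where $y_i$ is the integer encoded by the suffix after the LSB. Substituting into $\formulaof{\vecof{a}\cdot\vecof{x} = c}$ yields $2 \vecof{a} \cdot \vecof{y} = c - \vecof{a}\cdot\sigma$, which is solvable only if the right-hand side is even, and is then equivalent to the derivative from \cref{fig:post_atoms}. The $\leq$-case is analogous, the floor arising because $\vecof{a}\cdot\vecof{y}$ is an integer. For congruences the derivative preserves the residue class: when the modulus is $2m+1$ and $\residue$ is odd, we first add $2m+1$ to $\residue$ (which keeps its class modulo $2m+1$) before halving. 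The acceptance condition follows by the same unrolling, with the sign of the last bit flipped since it contributes $-a_n\cdot 2^n$ per \cref{eq:dec}.

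For the inductive Boolean cases, correctness of conjunction, disjunction, and negation follows from the inductive hypothesis together with the standard product and subset constructions, adapted to transition-based acceptance. Completeness of the component FAs, required for negation, is preserved since $\post$ is a total function on atomic-formula states and the Boolean combinators preserve totality.

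The main obstacle is existential quantification. Even if $\asgn$ is a binary model of $\formulaof{\exists x (\varphi)}$ of length $n$, the witnessing model $\asgn'$ of $\formulaof{\varphi}$ may require a strictly greater length $n' > n$, because a suitable value of $x$ may need more bits than $\asgn$ uses for the other variables. After projecting $x$ out via $\proj_x$, this means that a word $w$ of length $n$ should be accepted whenever running $\autof{\varphi}$ for $n$ steps leads to a state $p$ from which, by reading further symbols from $\proj_x(\sigma)$ (where $\sigma$ is the last symbol of $w$, i.e., symbols repeating the sign bit on non-$x$ variables), an accepting transition can be reached. This is precisely what the reachability term in $\acc_{\exists x(\varphi)}$ captures. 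The proof splits into two directions: (i) any binary model of $\formulaof{\exists x(\varphi)}$ lifts to a binary model of $\formulaof{\varphi}$ by adjoining a suitable $x$-value and sign-bit extensions on the remaining variables, which corresponds to a $\proj_x$-reachable accepting transition after the $n$-th step; and (ii) conversely, any phantom extension leading to acceptance witnesses a binary model of $\formulaof{\exists x(\varphi)}$ of length $n$, since $\langof{\autof{\varphi}}$ is closed under sign-bit repetition (verifiable directly from the definitions of $\post$ and $\acc$ for atomic formulae and preserved through the Boolean constructions). Together with the inductive hypothesis, this yields the claim.
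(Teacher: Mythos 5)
Your proof is correct, but it takes a genuinely different route from the paper: the paper does not prove \cref{thm:semantics} at all beyond a one-line citation to Lemma~5 of Boudet and Comon~\cite{BoudetC96}, whereas you give a self-contained structural induction. Your key strengthening of the base case --- that for \emph{every} reachable state $\formulaof{\psi}$ one has $\lang_{\autof{\varphi_{\mathit{atom}}}}(\formulaof{\psi}) = \langof{\semof{\formulaof{\psi}}}$, established via the recurrence $\modencof{\sigma w'} = \sigma + 2\modencof{w'}$ and the sign-flipped base case $\modencof{\sigma} = -\sigma$ --- is exactly the ``preservation of languages of the states/formulae'' invariant that the paper only invokes explicitly later, in the proof of the lemma in \cref{sec:opt}. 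Your treatment of the quantifier case is also the right one, and it is worth noting that it is not literally covered by the cited result: the paper's construction differs from Boudet--Comon's in using transition-based acceptance, integers (two's complement) rather than naturals, the modulo atoms of Durand-Gasselin and Habermehl~\cite{Durand-GasselinH10}, and the extra reachability closure in $\acc_{\exists x(\varphi)}$ needed precisely because the witness for $x$ may require more bits than the remaining tracks --- the subtlety you correctly isolate and resolve via closure of $\langof{\autof\varphi}$ under sign-bit repetition. What the paper's approach buys is brevity by delegating to the literature; what yours buys is an actual verification that the adapted construction (acceptance on transitions, the $\fin$ tables, the projection-plus-reachability step) is sound, which is arguably what a careful reader would have to reconstruct from~\cite{BoudetC96} anyway. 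The only cosmetic slip is writing $\sigma(i)$ for $\sigma(x_i)$; nothing in the argument depends on it.
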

\begin{proof}
  Follows from~\cite[Lemma~5]{BoudetC96}.
\end{proof}


\newcommand{\figPostInductive}[0]{ 
\begin{wrapfigure}[15]{r}{6.6cm}
\vspace*{-12mm}
\begin{align*}
  \postof{\formulaof{\varphi_1 \land \varphi_2}}{\sigma} ~ &{} \defeq~ \formulaof{\postof{\varphi_1}{\sigma} \land \postof{\varphi_2}{\sigma}}\\[-1mm]
  \postof{\formulaof{\varphi_1 \lor \varphi_2}}{\sigma} ~&{} \defeq~ \formulaof{\postof{\varphi_1}{\sigma} \lor \postof{\varphi_2}{\sigma}}\\[-1mm]
  \postof{\formulaof{\neg \varphi}}{\sigma} ~ &{} \defeq~ \formulaof{\neg\postof{\varphi}{\sigma}}\\[-1mm]
  \postof{\formulaof{\exists x(\varphi)}}{\sigma} ~ &{} \defeq~ \formulaof{\textstyle{\exists x\left(\bigvee_{\sigma'\in\proj_x(\sigma)}\postof{\varphi}{\sigma'}\right)}}\\[1mm]
  \finof{\formulaof{\neg \varphi}}{\sigma} ~ &{} \defiff~ \neg\finof{\formulaof{\varphi}}{\sigma}\\[-1mm]
  \finof{\formulaof{\varphi_1 \land \varphi_2}}{\sigma} ~ &{} \defiff~ \finof{\formulaof{\varphi_1}}{\sigma} \land \finof{\formulaof{\varphi_2}}{\sigma} \\[-1mm]
  \finof{\formulaof{\varphi_1 \lor \varphi_2}}{\sigma} ~ &{} \defiff~ \finof{\formulaof{\varphi_1}}{\sigma} \lor \finof{\formulaof{\varphi_2}}{\sigma} \\[-1mm]
  \finof{\formulaof{\exists x(\varphi)}}{\sigma} ~ &{} \defiff~ 
    \exists \formulaof{\psi} \in
    \reach(\{\formulaof{\varphi}\}, \proj_x(\sigma))
    \colon\\[-1mm]
    &\hspace{6mm}\textstyle\bigvee_{\sigma' \in \proj_x(\sigma)}\hspace{-0mm}\finof{\formulaof{\psi}}{\sigma'}
\end{align*}
\vspace{-9mm}
\caption{$\post$ and $\fin$ for non-atomic formulae.}
\label{fig:post_fin_inductive}
\end{wrapfigure}
}

\newcommand{\figStructSim}[0]{ 
\begin{wrapfigure}[10]{r}{8.3cm}
\vspace*{-10mm}
\hspace*{-2mm}
\scalebox{0.87}{
\begin{minipage}{8cm}
\begin{align*}
  \formulaof{\vecof {a_1} \cdot \vecof{x_1} \leq c_1} \structsim \formulaof{\vecof {a_2} \cdot \vecof{x_2} \leq c_2} ~&\textstyle \defiff~  \vecof{a_1} = \vecof{a_2} \wedge \vecof{x_1} = \vecof{x_2} \wedge c_1 \leq c_2 \\[-1mm]
  \textstyle{\formulaof{\bigwedge_{i = 1}^n \varphi_i} \structsim \formulaof{\bigwedge_{j = 1}^m \psi_j}} ~ &\textstyle \defiff~
  \forall 1 \leq j \leq m ~\exists 1 \leq i \leq n \colon \formulaof{\varphi_i} \structsim \formulaof{\psi_j} \\[-1mm]
  \textstyle{\formulaof{\bigvee_{i = 1}^n \varphi_i} \structsim \formulaof{\bigvee_{j = 1}^m \psi_j}} ~ &\textstyle \defiff~ \forall 1 \leq i \leq n ~\exists 1 \leq j \leq m\colon \formulaof{\varphi_i} \structsim \formulaof{\psi_j} \\[-1mm]
  \formulaof{\neg\varphi} \structsim \formulaof{\neg\psi} ~ &\textstyle \defiff~ \formulaof{\psi} \structsim \formulaof{\varphi} \\[-1mm]
  \formulaof{\exists x(\varphi)} \structsim \formulaof{\exists x(\psi)} ~ &\textstyle \defiff~ \formulaof{\varphi} \structsim \formulaof{\psi}
\end{align*}
\end{minipage}
}
\vspace{-3mm}
\caption{Definition of the subsumption preorder~$\structsim$ (we omit cases implied by reflexivity).}
\label{fig:structsim}
\end{wrapfigure}
}

\newcommand{ 
\begin{figure}[t]
\begin{center}
\begin{tikzpicture}
[
  automaton,
  node distance=30mm
]

\tikzstyle{rectstate}=[state,rectangle,rounded corners]
\tikzstyle{highlighted_edge} = [teal, thick, dashed];

\node[initial,rectstate] (q) {$\begin{array}{c}x \leq 1000\\ {}\land -x \leq 0 \land {} \\ x \modof{257} 255 \end{array}$};
\node[rectstate,right of=q,yshift=15mm,opacity=0.5] (q0) {$\begin{array}{c}x \leq 500\\ {}\land -x \leq 0 \land {} \\ x \modof{257} 256 \end{array}$};
\node[rectstate,right of=q] (q1) {$\begin{array}{c}x \leq 499\\ {}\land -x \leq 0 \land {} \\ x \modof{257} 127 \end{array}$};
\node[rectstate,right of=q1,opacity=0.5,yshift=15mm] (q10) {$\begin{array}{c}x \leq 249\\ {}\land -x \leq 0 \land {} \\ x \modof{257} 192 \end{array}$};
\node[rectstate,right of=q1,opacity=0.5] (q11) {$\begin{array}{c}x \leq 249\\ {}\land -x \leq 0 \land {} \\ x \modof{257} 63 \end{array}$};

\node[rectstate,teal,fill=white,left of=q0,node distance=20mm] (q0new) {$\begin{array}{c}x = 256\end{array}$};
\node[rectstate,teal,fill=white,right of=q10,node distance=20mm] (q10new) {$\begin{array}{c}x = 192\end{array}$};
\node[rectstate,teal,fill=white,right of=q11,node distance=20mm] (q11new) {$\begin{array}{c}x = 63\end{array}$};

\draw[->] (q) to node[labsymb] {$\onesymb 0$} (q0);
\draw[->] (q) to node[labsymb] {$\onesymb 1$} (q1);

\draw[->] (q1) to node[labsymb] {$\onesymb 0$} (q10);
\draw[->] (q1) to node[labsymb] {$\onesymb 1$} (q11);

\draw[-, highlighted_edge] (q0) edge node[above] {$\Leftrightarrow$} (q0new);
\draw[-, highlighted_edge] (q10) edge node[above] {$\Leftrightarrow$} (q10new);
\draw[-, highlighted_edge] (q11) edge node[above] {$\Leftrightarrow$} (q11new);

\end{tikzpicture}
\end{center}
\vspace{-5mm}
\caption{Example of rewriting formulae in the FA for $\protect\formulaof{0 \leq x
  \leq 1000 \land x \modof{257} 255}$.}
\label{fig:example_simple}
\vspace{-5mm}
\end{figure}
}[0]{ 
\begin{figure}[t]
\begin{center}
\begin{tikzpicture}
[
  automaton,
  node distance=30mm
]

\tikzstyle{rectstate}=[state,rectangle,rounded corners]
\tikzstyle{highlighted_edge} = [teal, thick, dashed];

\node[initial,rectstate] (q) {$\begin{array}{c}x \leq 1000\\ {}\land -x \leq 0 \land {} \\ x \modof{257} 255 \end{array}$};
\node[rectstate,right of=q,yshift=15mm,opacity=0.5] (q0) {$\begin{array}{c}x \leq 500\\ {}\land -x \leq 0 \land {} \\ x \modof{257} 256 \end{array}$};
\node[rectstate,right of=q] (q1) {$\begin{array}{c}x \leq 499\\ {}\land -x \leq 0 \land {} \\ x \modof{257} 127 \end{array}$};
\node[rectstate,right of=q1,opacity=0.5,yshift=15mm] (q10) {$\begin{array}{c}x \leq 249\\ {}\land -x \leq 0 \land {} \\ x \modof{257} 192 \end{array}$};
\node[rectstate,right of=q1,opacity=0.5] (q11) {$\begin{array}{c}x \leq 249\\ {}\land -x \leq 0 \land {} \\ x \modof{257} 63 \end{array}$};

\node[rectstate,teal,fill=white,left of=q0,node distance=20mm] (q0new) {$\begin{array}{c}x = 256\end{array}$};
\node[rectstate,teal,fill=white,right of=q10,node distance=20mm] (q10new) {$\begin{array}{c}x = 192\end{array}$};
\node[rectstate,teal,fill=white,right of=q11,node distance=20mm] (q11new) {$\begin{array}{c}x = 63\end{array}$};

\draw[->] (q) to node[labsymb] {$\onesymb 0$} (q0);
\draw[->] (q) to node[labsymb] {$\onesymb 1$} (q1);

\draw[->] (q1) to node[labsymb] {$\onesymb 0$} (q10);
\draw[->] (q1) to node[labsymb] {$\onesymb 1$} (q11);

\draw[-, highlighted_edge] (q0) edge node[above] {$\Leftrightarrow$} (q0new);
\draw[-, highlighted_edge] (q10) edge node[above] {$\Leftrightarrow$} (q10new);
\draw[-, highlighted_edge] (q11) edge node[above] {$\Leftrightarrow$} (q11new);

\end{tikzpicture}
\end{center}
\vspace{-5mm}
\caption{Example of rewriting formulae in the FA for $\protect\formulaof{0 \leq x
  \leq 1000 \land x \modof{257} 255}$.}
\label{fig:example_simple}
\vspace{-5mm}
\end{figure}
}

\newcommand{ 
\begin{figure}[t]
\begin{center}
\begin{tikzpicture}
[
  automaton,
  node distance=30mm
]

\tikzstyle{rectstate}=[state,rectangle,rounded corners]
\tikzstyle{highlighted_edge} = [teal, thick, dashed];

\node[initial,rectstate] (q1) {$\begin{array}{c}\exists x(7x \leq 1000)\end{array}$};
\node[rectstate,right of=q1,node distance=35mm,opacity=0.5] (q2) {$\begin{array}{c}\exists x(7x \leq 500 \lor 7x \leq 496) \end{array}$};
\node[rectstate,right of=q2,node distance=45mm,opacity=0.5] (q3) {$\begin{array}{r}\exists x(7x \leq 250 \lor 7x \leq 246~ {} \\ {}\lor 7x \leq 248 \lor 7x \leq 244) \end{array}$};

\node[rectstate,teal,fill=white,below of=q2,node distance=15mm] (q2new) {$\begin{array}{c}\exists x(7x \leq 500)\end{array}$};
\node[rectstate,teal,fill=white,below of=q3,node distance=15mm] (q3new) {$\begin{array}{c}\exists x(7x \leq 250)\end{array}$};

\draw[->] (q1) to node[labsymb] {$\onesymb{}$} (q2);

\draw[->] (q2) to node[labsymb] {$\onesymb{}$} (q3);

\draw[->] (q2new) to node[labsymb] {$\onesymb{}$} (q3new);

\draw[-, highlighted_edge] (q2) edge node[right] {$\Updownarrow$} (q2new);
\draw[-, highlighted_edge] (q3) edge node[right] {$\Updownarrow$} (q3new);

\end{tikzpicture}
\end{center}
\vspace{-6mm}
\caption{Example of disjunction pruning in the FA for $\protect\formulaof{\exists x(7x \leq 1000)}$.}
\label{fig:example_pruning}
\vspace{-4mm}
\end{figure}
}[0]{ 
\begin{figure}[t]
\begin{center}
\begin{tikzpicture}
[
  automaton,
  node distance=30mm
]

\tikzstyle{rectstate}=[state,rectangle,rounded corners]
\tikzstyle{highlighted_edge} = [teal, thick, dashed];

\node[initial,rectstate] (q1) {$\begin{array}{c}\exists x(7x \leq 1000)\end{array}$};
\node[rectstate,right of=q1,node distance=35mm,opacity=0.5] (q2) {$\begin{array}{c}\exists x(7x \leq 500 \lor 7x \leq 496) \end{array}$};
\node[rectstate,right of=q2,node distance=45mm,opacity=0.5] (q3) {$\begin{array}{r}\exists x(7x \leq 250 \lor 7x \leq 246~ {} \\ {}\lor 7x \leq 248 \lor 7x \leq 244) \end{array}$};

\node[rectstate,teal,fill=white,below of=q2,node distance=15mm] (q2new) {$\begin{array}{c}\exists x(7x \leq 500)\end{array}$};
\node[rectstate,teal,fill=white,below of=q3,node distance=15mm] (q3new) {$\begin{array}{c}\exists x(7x \leq 250)\end{array}$};

\draw[->] (q1) to node[labsymb] {$\onesymb{}$} (q2);

\draw[->] (q2) to node[labsymb] {$\onesymb{}$} (q3);

\draw[->] (q2new) to node[labsymb] {$\onesymb{}$} (q3new);

\draw[-, highlighted_edge] (q2) edge node[right] {$\Updownarrow$} (q2new);
\draw[-, highlighted_edge] (q3) edge node[right] {$\Updownarrow$} (q3new);

\end{tikzpicture}
\end{center}
\vspace{-6mm}
\caption{Example of disjunction pruning in the FA for $\protect\formulaof{\exists x(7x \leq 1000)}$.}
\label{fig:example_pruning}
\vspace{-4mm}
\end{figure}
}

\vspace{-0.0mm}
\section{Derivative-based Construction for Nested Formulae}\label{sec:opt}
\vspace{-0.0mm}

\figPostInductive  

This section lays down the basics of our approach to interconnecting the
automata with the algebraic approach for quantified LIA.
%
%
We aim at using methods and ideas from the algebraic approach to circumvent the
large intermediate automata constructed along the way before obtaining the
small DFAs (cf.~\cref{fig:gap}). 
To do that, we need a variation of the automata-based decision procedure
that exposes the states of the target automata without the need of generating
the complete state space of the intermediate automata first.
To~achieve this, we generalize the post-image function $\post$ (and the
acceptance condition~$\fin$) from \cref{sec:classical} to general non-atomic formulae using an approach similar to that
of~\cite{FiedorHJLV17,HavlenaHLV19,Traytel15}, which introduced derivatives of
WS1S/WS$k$S formulae.
Computing formula derivatives produces automata states that are at the same time LIA formulae,
and can be manipulated as such using algebraic methods and reasoning about their integer semantics.
We will then use basic Boolean simplification, antiprenexing, and also ideas from Cooper's algorithm and Omega test~\cite{Cooper72,Pugh91} 
to prune and simplify the state-formulae.
These particular techniques will be discussed in \cref{sec:syntactic,sec:disj_pruning,sec:simplifications}.

\begin{figure}[t]
\begin{center}
\begin{tikzpicture}
[
  automaton,
  node distance=30mm
]

\tikzstyle{rectstate}=[state,rectangle,rounded corners]
\tikzstyle{highlighted_edge} = [teal, thick, dashed];

\node[initial,rectstate] (q) {$\begin{array}{c}x \leq 1000\\ {}\land -x \leq 0 \land {} \\ x \modof{257} 255 \end{array}$};
\node[rectstate,right of=q,yshift=15mm,opacity=0.5] (q0) {$\begin{array}{c}x \leq 500\\ {}\land -x \leq 0 \land {} \\ x \modof{257} 256 \end{array}$};
\node[rectstate,right of=q] (q1) {$\begin{array}{c}x \leq 499\\ {}\land -x \leq 0 \land {} \\ x \modof{257} 127 \end{array}$};
\node[rectstate,right of=q1,opacity=0.5,yshift=15mm] (q10) {$\begin{array}{c}x \leq 249\\ {}\land -x \leq 0 \land {} \\ x \modof{257} 192 \end{array}$};
\node[rectstate,right of=q1,opacity=0.5] (q11) {$\begin{array}{c}x \leq 249\\ {}\land -x \leq 0 \land {} \\ x \modof{257} 63 \end{array}$};

\node[rectstate,teal,fill=white,left of=q0,node distance=20mm] (q0new) {$\begin{array}{c}x = 256\end{array}$};
\node[rectstate,teal,fill=white,right of=q10,node distance=20mm] (q10new) {$\begin{array}{c}x = 192\end{array}$};
\node[rectstate,teal,fill=white,right of=q11,node distance=20mm] (q11new) {$\begin{array}{c}x = 63\end{array}$};

\draw[->] (q) to node[labsymb] {$\onesymb 0$} (q0);
\draw[->] (q) to node[labsymb] {$\onesymb 1$} (q1);

\draw[->] (q1) to node[labsymb] {$\onesymb 0$} (q10);
\draw[->] (q1) to node[labsymb] {$\onesymb 1$} (q11);

\draw[-, highlighted_edge] (q0) edge node[above] {$\Leftrightarrow$} (q0new);
\draw[-, highlighted_edge] (q10) edge node[above] {$\Leftrightarrow$} (q10new);
\draw[-, highlighted_edge] (q11) edge node[above] {$\Leftrightarrow$} (q11new);

\end{tikzpicture}
\end{center}
\vspace{-5mm}
\caption{Example of rewriting formulae in the FA for $\protect\formulaof{0 \leq x
  \leq 1000 \land x \modof{257} 255}$.}
\label{fig:example_simple}
\vspace{-5mm}
\end{figure}

\begin{example}
In \cref{fig:example_simple}, we show an intuitive example of rewriting state formulae
  when constructing the FA for $\formulaof{0 \leq x \leq
  1000 \land x \modof{257} 255}$.
  After reading the first symbol \!$\onesymb 0$\!, 
  the obtained state formula
  $\formulaof{0 \leq x \leq 500 \land x \modof{257} 256}$ is satisfied only
  by~one~$x$, which is~$256$.
  We can therefore rewrite the formula into an equivalent formula~$\formulaof{x=256}$.
  Similar rewriting can be applied to the state obtained after reading
  $\onesymb{1}\!\!\onesymb{0}$ and $\onesymb{1}\!\!{\onesymb 1}$.
  The rest of the automaton constructed from the rewritten states
  $\formulaof{x=256}$, $\formulaof{x=192}$, and $\formulaof{x=63}$ is then of
  a~logarithmic size (each state in the rest will have only one successor based
  on the binary encoding of~256, 192, or 63 respectively, while if we did not
  perform the rewriting, the states would have two successors and the size
  would be linear).
\qed
\end{example}



In \cref{fig:post_fin_inductive}, we extend the derivative post-image
function~$\post$ and the acceptance condition~$\fin$ (cf.\
\cref{fig:post_atoms,fig:fin_atoms}) to non-atomic formulae.
The derivatives mimic the automata constructions in \cref{sec:classical}, with
the exception that at every step, the derivative (and therefore also the state
in the constructed FA) is a~LIA formula and can be treated as such.
One notable exception is $\postof{\formulaof{\exists x(\varphi)}}{\sigma}$,
which, since the $\post$ function is deterministic, in addition to the
projection, also mimics determinisation.
One can see the obtained disjunction-structure as a~set of states from the standard
subset construction in automata.
Correctness of the construction is stated in the following.

\begin{lemma}\label{lem:}
  Let $\formulaof \varphi$ be a~LIA formula and let~$\autof{\varphi}$ be the FA
  constructed by the procedure in this section or any combination of
  it and the classical one. Then $\langof{\autof \varphi} =
  \langof{\semof{\formulaof{\varphi}}}$.
\end{lemma}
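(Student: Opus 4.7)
The plan is to proceed by structural induction on $\formulaof{\varphi}$, using \cref{thm:semantics} as the reference point for the classical construction. The inductive hypothesis is that for every strict sub-formula $\formulaof{\psi}$ of $\formulaof{\varphi}$, the constructed FA satisfies $\langof{\autof{\psi}} = \langof{\semof{\formulaof{\psi}}}$, regardless of which of the two constructions (or a mixture thereof) was used for it. This also immediately takes care of the ``any combination'' clause, since at each inductive step the hypothesis is invoked uniformly without caring about the provenance of the sub-automaton.

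For the base case, when $\formulaof{\varphi}$ is atomic, the derivative-based construction coincides with the classical one: both use the $\post$ and $\fin$ from \cref{fig:post_atoms,fig:fin_atoms}, so the claim follows immediately from \cref{thm:semantics}.

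For the inductive cases involving conjunction, disjunction, and negation, the plan is to exhibit a language-preserving correspondence between the derivative FA and the classical product/complement constructions from \cref{sec:classical}. Concretely, for $\formulaof{\varphi_1 \land \varphi_2}$ I would prove by induction on the length of a read word $w$ that the derivative-based state reached after $w$ is a formula whose set of binary models equals that of the classical product state $(\postof{\formulaof{\varphi_1}}{w}, \postof{\formulaof{\varphi_2}}{w})$, and that the two acceptance conditions coincide pointwise; the disjunction case is symmetric. For negation the key observation is that the derivative-based $\post$ is deterministic by construction (it is defined as a function on formulae), so complementation via $\finof{\formulaof{\neg\varphi}}{\sigma} = \neg\finof{\formulaof{\varphi}}{\sigma}$ is sound and no explicit subset construction is required.

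The main obstacle is the case of $\formulaof{\exists x(\varphi)}$, where $\post$ simultaneously performs projection and an implicit subset construction via the disjunction $\bigvee_{\sigma' \in \proj_x(\sigma)}$. The plan is to show that a derivative state of the form $\formulaof{\exists x(\bigvee_i \psi_i)}$ reached by reading a word $w$ represents exactly the subset $\{\formulaof{\psi_i}\}_i$ of states reachable in the classical projected NFA after $w$, so that its binary models coincide with those accepted from that subset. The acceptance condition $\finof{\formulaof{\exists x(\varphi)}}{\sigma}$, which quantifies over formulae in $\reach(\{\formulaof{\varphi}\}, \proj_x(\sigma))$, must then be shown to faithfully mirror the sign-extension-tolerant acceptance of the classical construction, echoing its $\exists r,s \in \reach(\{p\}, \proj_x(\sigma))$ clause. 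Combining all cases with \cref{thm:semantics} then yields the stated equality.
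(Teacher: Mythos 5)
Your proposal is correct, but it is a genuinely different (and far more explicit) argument than the one the paper gives. The paper's entire proof is the single sentence ``Follows from preservation of languages of the states/formulae,'' i.e.\ it appeals directly to the semantic invariant that every derivative state $\formulaof{\psi}$ satisfies $\lang(\formulaof{\psi}) = \langof{\semof{\formulaof{\psi}}}$, which is maintained because $\post$ and $\fin$ are defined so that $\modencof{\sigma w} \models \formulaof{\psi}$ iff $\modencof{w} \models \postof{\formulaof{\psi}}{\sigma}$ (with $\fin$ handling the sign bit). You instead reduce the derivative construction to the classical one: a structural induction on $\formulaof{\varphi}$ in which each derivative state reached after a word $w$ is matched against the corresponding classical product/subset/projection state, and \cref{thm:semantics} is invoked for correctness of the classical side. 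Your route is more self-contained about the existential case (where $\post$ fuses projection with an implicit subset construction) and handles the ``any combination'' clause cleanly by making the inductive hypothesis provenance-agnostic; the paper's route avoids the bookkeeping of the simulation entirely by checking the model-preservation property of each derivative rule once. One small caution for your negation case: soundness of $\finof{\formulaof{\neg\varphi}}{\sigma} = \neg\finof{\formulaof{\varphi}}{\sigma}$ needs the sub-automaton to be deterministic \emph{and complete}, which is a property of the construction used below the $\neg$, not of its language alone; so in the mixed setting your inductive hypothesis should additionally record determinism/completeness (guaranteed for derivative-built parts since $\post$ is a total function, and restored for classically-built parts by the explicit subset construction). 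This is a strengthening of the invariant, not a flaw in the plan.
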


\begin{proof}
Follows from preservation of languages of the states/formulae.
\qed
\end{proof}

Without optimizations, 
the derivative-based construction would generate a~larger FA than
the one obtained from the classical construction,
which can perform minimization of the intermediate automata. 
The derivative-based construction cannot minimize the intermediate automata since
they are not available; they are in a sense constructed on the fly within the
construction of the automaton for the entire formula. 
%
Our algebraic optimizations mimic some effect of the minimization on the fly, while constructing the automaton, by simplifying the state formulae and detecting
entailment
between them.

In principle, when we construct a state $q$ of~$\autof \psi$ as a~result of~$\post$,
we could test whether some state~$p$ was already constructed such
that~$\formulaof{\varphi_q} \Leftrightarrow \formulaof{\varphi_p}$ and, if so,
we could merge $p$ and $q$ (drop~$q$ and redirect the edges to~$p$).
This would guarantee us to directly obtain the minimal DFA
for~$\formulaof{\psi}$ (no two states would be language-equivalent).

Solving the LIA equivalence queries precisely is, however, as hard as solving the
original problem.
Even when we restrict ourselves to quantifier-free formulae,
the equivalence problem is $\clcoNP$-complete.
Our algebraic optimizations are thus a cheaper and more practical alternative capable of merging at least some equivalent states. 
We discuss the optimizations in detail in \cref{sec:syntactic,sec:disj_pruning,sec:simplifications} and
also give a~comprehensive example of their effect in \cref{sec:example_big}.

\vspace{-2.0mm}
\section{Simple Rewriting Rules}\label{sec:syntactic}
\vspace{-1.0mm}

The simplest rewriting rules are just common simplifications generally applicable in predicate logic.
Despite their simplicity, they are quite powerful, since their use enables to apply the other optimizations (\cref{sec:disj_pruning,sec:simplifications}) more often.
%
\begin{enumerate}
  \item  We apply the propositional laws of \emph{identity} ($\formulaof{\varphi \lor
    \false}$ = $\formulaof{\varphi}$ and $\formulaof{\varphi \land \true}$ =
    $\formulaof{\varphi}$) and \emph{annihilation} ($\formulaof{\varphi \land
    \false}$ = $\formulaof{\false}$ and $\formulaof{\varphi \lor \true}$ =
    $\formulaof{\true}$) to simplify the formulae.

  \item  We use \emph{antiprenexing}~\cite{HavlenaHLVV20,Egly94} (i.e., pushing
    quantifiers as deep as possible using inverses of prenexing rules~\cite[Chapter~5]{RobinsonV01}).
    This is helpful, e.g., after a~range-based quantifier instantiation
    (cf. \cref{sec:range-based}), which yields a~disjunction.
    Since our formula analysis framework (\cref{sec:simplifications}) only works over conjunctions below
    existential quantifiers, we need to first push existential quantifiers
    inside the disjunctions to allow further applications of the heuristics.

  \item Since negation is implemented as automaton complementation,
    we apply \emph{De Morgan's laws} ($\formulaof{\neg (\varphi_1 \land
    \varphi_2)} \Leftrightarrow \formulaof{ (\neg\varphi_1) \lor (\neg\varphi_2)}$ and $
    \formulaof{\neg (\varphi_1 \lor \varphi_2)} \Leftrightarrow \formulaof{ (\neg\varphi_1)
    \land(\neg\varphi_2)}$) to push negation as deep as possible. The motivation
    is that small subformulae are likely to have small corresponding automata.
    As complementation requires the~underlying automaton to be deterministic,
    complementing smaller automata helps to mitigate the exponential blow-up of
    determinization.

\end{enumerate}

\noindent
Moreover, we also employ the following simplifications valid for LIA:

\begin{enumerate}
  \setcounter{enumi}{3}

  \item We apply simple reasoning based on variable bounds to simplify the formula,
    e.g., $\formulaof{x \ge 0 \land x \le 10 \land x \neq 0} \Leftrightarrow \formulaof{x \ge
    1 \land x \le 10}$, and to prune away some parts of the~formula, e.g.,$
    \formulaof{x \ge 3 \land (\varphi \lor (x = 0 \land \psi))} \Leftrightarrow \formulaof{x \ge 3 \land
    \varphi}$.

  \item We employ rewriting rules aimed at accelerating the automata
    construction by  minimizing the number of variables used in a formula,
    and, thus, avoiding constructing complicated transition relations, e.g., $ \formulaof{\exists
    x_1, x_2 (ay + b_1x_1 + b_2x_2 \equiv_K 0)} \Leftrightarrow \formulaof{\exists x (ay +
    bx \equiv_K 0)}$ where $b = \gcdof{b_1, b_2}$, or $\formulaof{\exists x
    (ay + bx = 0)} \Leftrightarrow \formulaof{ay \equiv_{|b|} 0}$. 
    
  \item We detect conflicts by identifying small isomorphic subformulae, i.e.,
    subformulae that have the same abstract syntax tree, except for renaming of
    quantified variables, for example, $\formulaof{\exists x (x > 3
    \land x + z \le 10)\land \neg (\exists y (y > 3 \land y + z \le 10)} \Leftrightarrow
    \formulaof{\bot}$.
    One can see this as a~variant of DAGification used in \mona~\cite{KlarlundMS02}.
\end{enumerate}

\vspace{-2.0mm}
\section{Disjunction Pruning}\label{sec:disj_pruning}
\vspace{-1.0mm}

\figStructSim    
We prune disjunctions
by removing disjunct implied by other disjuncts.
That is, 
if it holds that $\formulaof{\varphi_2 \lor \cdots \lor
\varphi_k} \Rightarrow \formulaof{\varphi_1}$, then $\formulaof{\varphi_1 \lor \varphi_2 \lor
\cdots \lor \varphi_k}$ can be replaced by just $\formulaof{\varphi_2 \lor \cdots \lor \varphi_k}$.
Testing the entailment precisely is hard, so we use a stronger but cheaper relation of \emph{subsumpion}.
%
%
%
Our subsumption is a preorder (a~reflexive and
transitive relation)~$\structsim$ between LIA formulae in \cref{fig:structsim}\footnote{
The subsumption is similar to the one used in efficient decision procedures for
WS1S/WS$k$S~\cite{FiedorHJLV17,HavlenaHLV19} with two important differences:
\begin{inparaenum}[(i)]
  \item  it can look inside atomic formulae and use semantics of states and
  \item  it does not depend on the initial structure of the initial formula.
\end{inparaenum}
Both of these make the subsumption relation larger.
}.
When we encounter the said macro\-state and establish
$\formulaof{\varphi_1} \structsim  \formulaof{\varphi_2 \lor \cdots \lor
\varphi_k}$, we perform the rewriting.
This optimization has effect mainly in formulae of the form
$\formulaof{\exists x(\psi)}$: their~$\post$ contains a~disjunction of
formulae of a~similar structure.

\begin{restatable}{lemma}{lemSubsumption}
  For LIA formulae~$\formulaof{\varphi_1}$ and~$\formulaof{\varphi_2}$, if 
  $\formulaof{\varphi_1} \structsim \formulaof{\varphi_2}$, then
  $\formulaof{\varphi_1} \Rightarrow \formulaof{\varphi_2}$.
\end{restatable}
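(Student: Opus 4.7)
The plan is to prove the lemma by structural induction on the derivation of $\formulaof{\varphi_1} \structsim \formulaof{\varphi_2}$, following the case split given in \cref{fig:structsim}. The reflexivity base case gives entailment trivially (any formula entails itself). The only nontrivial atomic case is inequality subsumption: if $\formulaof{\vecof a \cdot \vecof x \leq c_1} \structsim \formulaof{\vecof a \cdot \vecof x \leq c_2}$ with $c_1 \leq c_2$, then any assignment $\asgn$ satisfying $\vecof a \cdot \modencof{\asgn(\vecof x)} \leq c_1$ also satisfies $\vecof a \cdot \modencof{\asgn(\vecof x)} \leq c_2$ by transitivity of~$\leq$ on integers.

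For the inductive cases, I would proceed as follows. For the conjunction case, assume $\forall j ~\exists i_j\colon \formulaof{\varphi_{i_j}} \structsim \formulaof{\psi_j}$; a binary model of $\formulaof{\bigwedge_i \varphi_i}$ satisfies every $\formulaof{\varphi_{i_j}}$, hence by the induction hypothesis satisfies every $\formulaof{\psi_j}$, and therefore satisfies $\formulaof{\bigwedge_j \psi_j}$. For the disjunction case, assume $\forall i ~\exists j_i\colon \formulaof{\varphi_i} \structsim \formulaof{\psi_{j_i}}$; a model of $\formulaof{\bigvee_i \varphi_i}$ satisfies some $\formulaof{\varphi_i}$, so by IH satisfies $\formulaof{\psi_{j_i}}$, hence satisfies $\formulaof{\bigvee_j \psi_j}$. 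For existential quantification, IH gives $\formulaof{\varphi} \Rightarrow \formulaof{\psi}$, and entailment is preserved under $\exists x$ by standard monotonicity. For negation, the antitone flip in the definition ($\formulaof{\psi} \structsim \formulaof{\varphi}$ in the premise) is exactly what is needed: by IH, $\formulaof{\psi} \Rightarrow \formulaof{\varphi}$, so contrapositively $\formulaof{\neg\varphi} \Rightarrow \formulaof{\neg\psi}$.

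I do not expect real obstacles: the definition of $\structsim$ was tailored precisely so each case mirrors a semantic entailment principle, and the induction is routine. The one point that deserves a brief remark is that the rules in \cref{fig:structsim} are exhaustive only modulo reflexivity (e.g., equality atoms, modular atoms, and the mixing of different top-level connectives fall under reflexivity), so the induction must explicitly cover the reflexive case $\formulaof{\varphi_1} = \formulaof{\varphi_2}$ as a catch-all where entailment is immediate. Mild care is also needed in the conjunction/disjunction cases to note that unary conjunctions/disjunctions ($n = 1$ or $m = 1$) are handled uniformly by the same quantifier pattern, so no separate treatment of the ``singleton versus list'' distinction is required.
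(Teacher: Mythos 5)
Your proof is correct and takes essentially the same route as the paper's: structural induction on $\structsim$ with the inequality atoms as the only interesting base case and routine semantic (model-based) reasoning for each connective. The paper's own proof is merely a two-line sketch of this argument, so your more explicit treatment of the reflexivity catch-all, the $\forall\exists$ quantifier patterns for conjunction/disjunction, and the antitone negation case is just a faithful elaboration of what the paper leaves implicit.
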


\begin{example}
In \cref{fig:example_pruning}, we show an example of pruning disjunctions in the FA for the
formula $\formulaof{\exists x(7x \leq 1000)}$.
\qed
\end{example}

\begin{figure}[t]
\begin{center}
\begin{tikzpicture}
[
  automaton,
  node distance=30mm
]

\tikzstyle{rectstate}=[state,rectangle,rounded corners]
\tikzstyle{highlighted_edge} = [teal, thick, dashed];

\node[initial,rectstate] (q1) {$\begin{array}{c}\exists x(7x \leq 1000)\end{array}$};
\node[rectstate,right of=q1,node distance=35mm,opacity=0.5] (q2) {$\begin{array}{c}\exists x(7x \leq 500 \lor 7x \leq 496) \end{array}$};
\node[rectstate,right of=q2,node distance=45mm,opacity=0.5] (q3) {$\begin{array}{r}\exists x(7x \leq 250 \lor 7x \leq 246~ {} \\ {}\lor 7x \leq 248 \lor 7x \leq 244) \end{array}$};

\node[rectstate,teal,fill=white,below of=q2,node distance=15mm] (q2new) {$\begin{array}{c}\exists x(7x \leq 500)\end{array}$};
\node[rectstate,teal,fill=white,below of=q3,node distance=15mm] (q3new) {$\begin{array}{c}\exists x(7x \leq 250)\end{array}$};

\draw[->] (q1) to node[labsymb] {$\onesymb{}$} (q2);

\draw[->] (q2) to node[labsymb] {$\onesymb{}$} (q3);

\draw[->] (q2new) to node[labsymb] {$\onesymb{}$} (q3new);

\draw[-, highlighted_edge] (q2) edge node[right] {$\Updownarrow$} (q2new);
\draw[-, highlighted_edge] (q3) edge node[right] {$\Updownarrow$} (q3new);

\end{tikzpicture}
\end{center}
\vspace{-6mm}
\caption{Example of disjunction pruning in the FA for $\protect\formulaof{\exists x(7x \leq 1000)}$.}
\label{fig:example_pruning}
\vspace{-4mm}
\end{figure}




\vspace{-5.0mm}
\section{Quantifier Instantiation}\label{sec:simplifications}
\vspace{-2.0mm}


The next optimization is an instance of quantifier instantiation, a well known class of algebraic techniques.
We gather information about the formulae with a~focus on the way
a~particular variable, usually a~quantified one, affects the models of the
whole formula.
If one can find ``the best'' value for such a~variable (e.g., a~value such that
using it preserves all models of the formula), then the (quantified) variable
can be substituted with a~concrete value.
For instance, let $\formulaof{\varphi} = \formulaof{\exists y(x - y \leq 33 \land y \leq 12 \land y \modof 7 2)}$.
The variable~$y$ is quantified so we can think about instantiating it (it will not occur in a~model).
The first atom $\formulaof{x-y \leq 33}$ says that we want to pick~$y$ as large as possible (larger~$y$'s have higher chance to satisfy the inequation), but, on the other hand, the second atom $\formulaof{y \leq 12}$ says that~$y$ can be at most~$12$.
The last atom $\formulaof{y \modof 7 2}$ adds an additional constraint on~$y$.
Intuitively, we can see that the best value of~$y$---i.e., the value that
preserves all models of~$\formulaof \varphi$---would be~$9$, allowing to
rewrite~$\formulaof\varphi$ to~$\formulaof{x \leq 42}$.

To define the particular ways of gathering such kind of information in a~uniform way, we
introduce the following \emph{formula analysis} framework that uses the
function~$\flow$ to extract information from formulae.
Consider a~\emph{meet-semilattice} $(\datadom, \meet)$ where $\undefined \in
\datadom$ is the bottom element.
Let $\atom$ be a~function that, given an atomic formula
$\formulaof{\varphi_{\mathit{atom}}}$ and a~variable~$y\in \vars$, outputs an
element of~$\datadom$ that represents the behavior of~$y$
in~$\formulaof{\varphi_{\mathit{atom}}}$ (e.g., bounds on~$y$).
The function~$\flow$ then aggregates the information from atoms into an
information about the behavior of~$y$ in the whole formula using the meet
operator~$\meet$ recursively as follows:
%
%
%
%
\begin{align}
  \flow(\formulaof{\varphi_{\mathit{atom}}}, y, \atom, \meet) = {} & \atom(\formulaof{\varphi_{\mathit{atom}}}, y) \\
  \flow(\formulaof{\varphi_1 \land \varphi_2}, y, \atom, \meet) = {} &
  \flow(\formulaof{\varphi_1}, y, \atom, \meet) \meet
  \flow(\formulaof{\varphi_2}, y, \atom, \meet)
\end{align}
(By default, a~missing case in the pattern matching evaluates to~$\undefined$.)
We note that the framework is defined only for conjunctions of formulae, which
is the structure of subformulae that was usually causing troubles in our
experiments (cf.\ \cref{sec:experiments}).

The optimizations defined later are based on substituting certain variables in a~formula with
concrete values to obtain an equivalent (simpler) formula.
For this, we extend standard substitution as follows.
Let~$\formulaof{\varphi(\vecof x, y)}$ be a~formula with free variables~$\vecof x = (x_1,
\ldots, x_n)$ and~$y \notin \vecof x$.
For $k \in \integers$, substituting~$k$ for~$y$ in~$\formulaof\varphi$ yields the formula
$\formulaof{\varphi\subst y k}$ obtained in the usual way (with all constant expressions
being evaluated).
For $k = \pm \infty$, the resulting formula is obtained for inequalities containing~$y$ as
\begin{equation}
  (\formulaof{\vecof a \cdot \vecof x + a_y \cdot y \leq c})\subst{y}{k} =
    \formulaof{\true}\quad  \text{if } a_y \cdot k = {-\infty}
\end{equation}
and is undefined for all other atomic formulae.

\vspace{-2.0mm}
\subsection{Quantifier Instantiation based on Formula Monotonicity}\label{sec:monotonicity}
\vspace{-1.0mm}

The first optimization based on quantifier instantiation uses the so-called
\emph{monotonicity} of formulae w.r.t.\ some variables (a~similar technique is
used in the Omega test~\cite{Pugh91}).
Consider the following two formulae:
\begin{equation}
\formulaof{\varphi_1} = \formulaof{\exists y(\psi \land 3y - x \geq z)}
  \quad\text{and}\quad
\formulaof{\varphi_2} = \formulaof{\exists y(\psi \land 3y - x \geq z \land 5y \leq 42)}
\end{equation}
where~$\formulaof{\psi}$ does not contain occurrences of~$y$, and~$x$, $z$ are
free variables in~$\formulaof{\varphi_1}, \formulaof{\varphi_2}$.
For~$\formulaof{\varphi_1}$, since~$y$ is existentially quantified, the
inequation~$\formulaof{3y -x \geq z}$ can be always satisfied by picking
an arbitrarily large value for~$y$, so~$\formulaof{\varphi_1}$ can be simplified
to just~$\formulaof{\psi}$.
On~the other hand, for~$\formulaof{\varphi_2}$, we cannot pick an arbitrarily large~$y$
because of the other inequation $\formulaof{5y \leq 42}$.
We can, however, observe, that $\lfloor\frac{42}{5}\rfloor = 8$ is the largest
value that we can substitute for~$y$ to satisfy $\formulaof{5y \leq 42}$.
As a~consequence, since the possible value of~$y$ in $\formulaof{3y-x \geq z}$
is not bounded from above, we can substitute~$y$ by the value~8, i.e.,
$\formulaof{\varphi_2}$ can be simplified to $\formulaof{\psi \land 24 - x \geq z}$.

Formally, let $c \in \integers \cup \{{+\infty}\}$ and $y \in \vars$.
We say that a~formula $\formulaof{\varphi(\vecof{x},y)}$ is \emph{$c$-best from below
w.r.t.~$y$} if 
\begin{inparaenum}[(i)]
  \item $\semof{\formulaof{\varphi\subst{y}{y_1}}} \subseteq
         \semof{\formulaof{\varphi\subst{y}{y_2}}}$ for all $y_1 \leq y_2 \leq
         c$ (for $c \in \integers$) or for all $y_1 \leq y_2$ (for $c =
         {+\infty}$) and
  \item $\semof{\formulaof{\varphi\subst{y}{y'}}} = \emptyset$ for all $y' > c$.
\end{inparaenum}
Intuitively, substituting bigger values for~$y$ (up to~$c$)
in~$\formulaof{\varphi}$ preserves all models obtained by substituting smaller
values, so~$c$ can be seen as the most conservative limit of~$y$ (and for $c=
{+\infty}$, this means that~$y$ does not have an upper bound, so it can be
chosen arbitrarily large for concrete values of other variables).
%
Similarly, $\formulaof{\varphi(\vecof{x},y)}$ is called \emph{$c$-best from
above} (w.r.t.~$y$) for $c \in \integers \cup \{{-\infty}\}$ if 
\begin{inparaenum}[(i)]
  \item $\semof{\formulaof{\varphi\subst{y}{y_1}}} \subseteq
    \semof{\formulaof{\varphi\subst{y}{y_2}}}$ for all $y_1 \geq y_2 \geq
         c$ (for $c \in \integers$) or for all $y_1 \geq y_2$ (for $c =
         {-\infty}$) and
  \item $\semof{\formulaof{\varphi\subst{y}{y'}}} = \emptyset$ for all $y' < c$.
\end{inparaenum}
If a~formula is $c$-best from below or above, we call it \emph{$c$-monotone} (w.r.t.~$y$).
\begin{restatable}{lemma}{lemMonotoneSubst}\label{lem:monotone-subst}
  Let $c \in \integers \cup \{\pm \infty\}$ and
  $\formulaof{\varphi(\vecof{x},y)}$ be a~formula $c$-monotone w.r.t.~$y$ such
  that $\formulaof{\varphi\subst{y}{c}}$ is defined.
  Then the formula $\formulaof{\exists y(\varphi(\vecof{x},y))}$ is equivalent to
  the formula $\formulaof{\varphi\subst{y}{c}}$.
\end{restatable}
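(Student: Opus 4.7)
The plan is to prove the equivalence $\formulaof{\exists y(\varphi)} \Leftrightarrow \formulaof{\varphi\subst{y}{c}}$ by a case split on whether $c$ is finite or lies at $\pm\infty$, and on whether $\formulaof{\varphi}$ is $c$-best from below or from above. Since the from-above case mirrors from-below, I would only write out the from-below case in detail and indicate symmetry for the other.

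For the easy case $c \in \integers$, I would argue directly from the semantic conditions (i) and (ii). The direction $(\Leftarrow)$ is immediate because ordinary substitution guarantees $\asgn \models \formulaof{\varphi\subst{y}{c}}$ iff $\asgn[y \mapsto c] \models \formulaof{\varphi}$, so such an $\asgn$ trivially satisfies $\formulaof{\exists y(\varphi)}$. For $(\Rightarrow)$, if $\asgn[y \mapsto y_1] \models \formulaof{\varphi}$, then condition (ii) forces $y_1 \leq c$ (otherwise $\semof{\formulaof{\varphi\subst{y}{y_1}}}$ would be empty, contradicting membership of $\asgn$), and condition (i) then lifts the model up to $y = c$, yielding $\asgn \in \semof{\formulaof{\varphi\subst{y}{c}}}$.

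The substantive work is the case $c = {+\infty}$. Here I would prove, by structural induction on a formula $\formulaof{\psi}$ whose $+\infty$-substitution is defined, the auxiliary statement
\begin{equation*}
\asgn \models \formulaof{\psi\subst{y}{+\infty}} \quad\Longleftrightarrow\quad \exists N \in \integers \;\forall y' \geq N\colon \asgn[y \mapsto y'] \models \formulaof{\psi}.
\end{equation*}
The base cases are direct: if $y$ does not occur in the atom, both sides are the same; and if the atom is an inequality $\formulaof{\vecof a \cdot \vecof x + a_y y \leq c}$ with $a_y \cdot (+\infty) = -\infty$, i.e.\ $a_y < 0$, the left side is $\formulaof{\true}$ while the right side is also true because the term $a_y y'$ is unbounded below as $y'$ grows. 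Conjunction and disjunction are handled by taking the maximum of the witness thresholds; existential quantification over a variable $z \neq y$ and negation are the crux. For both, the inductive step goes through because the definedness of the substitution restricts every $y$-occurrence to appear only in inequalities with $a_y < 0$, which makes the truth set $\{y' \mid \asgn[y \mapsto y'] \models \formulaof{\psi}\}$ eventually stable (either all sufficiently large $y'$ satisfy $\formulaof{\psi}$, or none do). This stability is what lets the biconditional commute with negation.

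Finally, I would combine the auxiliary fact with the monotonicity hypothesis (i) to conclude Case 2: direction $(\Leftarrow)$ reads any $y' \geq N$ supplied by the lemma as a witness for $\formulaof{\exists y(\varphi)}$, and direction $(\Rightarrow)$ uses (i) to extend a single witness $y_1$ upwards to all $y' \geq y_1$, giving the threshold the auxiliary fact needs. The symmetric case $c = {-\infty}$ is handled by dualising ``from above''. The main obstacle is precisely the auxiliary lemma's negation case: I expect to argue carefully that under definedness of the $\pm\infty$-substitution the truth sets considered are eventually stable, so that ``not eventually true'' coincides with ``eventually false''.
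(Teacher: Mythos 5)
Your decomposition is the same as the paper's: the finite case is dispatched directly from conditions (i) and (ii), and the $c=+\infty$ case reduces to showing that, for each assignment of the free variables, $\formulaof{\varphi(\vecof{x},y')}$ agrees with $\formulaof{\varphi\subst{y}{+\infty}}$ for all sufficiently large $y'$, after which monotonicity gives both inclusions exactly as you describe. Where you differ is in how that agreement is established. You prove it by structural induction on the formula, which forces you to confront negation and inner quantifiers and to argue that the truth set $\{y' \mid \asgn[y\mapsto y']\models\formulaof{\psi}\}$ is eventually stable. The paper instead argues at the level of atoms in one shot: definedness of $\formulaof{\varphi\subst{y}{+\infty}}$ forces every atom containing $y$ to be an inequality with $a_y<0$, each such atom becomes true once $y$ exceeds $f(\vecof a\cdot\vecof x + a_y y \leq c) = \frac{c-\vecof a\cdot\vecof x}{a_y}$, and taking $g(\vecof x)$ to be the maximum of these thresholds plus one gives a single bound beyond which every $y$-atom evaluates to $\top$, i.e.\ to its substituted value. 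The whole formula then agrees with $\formulaof{\varphi\subst{y}{+\infty}}$ by congruence, \emph{regardless} of how many negations or quantifiers sit above those atoms---so the negation case you single out as the main obstacle simply does not arise in the paper's version. Your "eventual stability" claim is in fact the weak point of your route: under an inner $\exists z$ whose bound variable occurs in a $y$-atom, the per-$z$ thresholds are unbounded and a union of eventually-false sets can fail to be eventually false, so stability does not commute with negation in general. The paper's proof quietly assumes the $y$-atoms involve only free variables (its $g$ is a function of $\vecof x$), so it shares this soft spot, but its atom-level construction at least avoids having to prove stability inductively. If you adopt the explicit threshold $g(\vecof x)$, your proof collapses to the paper's and the problematic induction disappears.
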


Moreover, the following lemma utilizes formula monotonicity to provide
a~tool for simplification of formulae containing a~modulo atom.

\begin{restatable}{lemma}{lemBounded}\label{lem:bounded}
  Let $c\in\mathbb{Z}$ and $\formulaof{\varphi(\vecof{x},y)}$ be a~formula
  $c$-monotone w.r.t.~$y$ for $c\in \integers$.
  Then, the formula $\formulaof{\exists y(\varphi(\vecof{x}, y) \land y \modof{m} k)}$ is
  equivalent to the formula $\formulaof{\varphi\subst{y}{c'}}$ where 
  \begin{inparaenum}[(i)]
    \item $c' = \max\{ \ell \in \integers \mid  \ell \modof{m} k, \ell \leq c \}$ if $\formulaof{\varphi}$ is $c$-best from below, and 
    \item $c' = \min\{ \ell \in \integers \mid  \ell \modof{m} k, \ell \geq c \}$ if $\formulaof{\varphi}$ is $c$-best from above.
  \end{inparaenum}
\end{restatable}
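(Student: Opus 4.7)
The plan is to prove both directions of the equivalence by leveraging the monotonicity of $\formulaof{\varphi}$ w.r.t.\ $y$. I will give the argument in detail for the $c$-best-from-below case; the other case is completely symmetric.

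First, I would observe that $c'$ is well defined: the set $\{\ell\in\integers\mid \ell\modof{m}k,\ \ell\leq c\}$ is always non-empty (for any $c$, shifting the residue $\modclof{k}{m}$ down by a suitable multiple of $m$ yields an admissible $\ell$), so its maximum exists. By construction, $c'\leq c$ and $c'\modof{m}k$, and since $c'\in\integers$ the substitution $\formulaof{\varphi\subst{y}{c'}}$ is defined as a standard LIA formula.

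For the $(\Leftarrow)$ direction, assume $\asgn\models\formulaof{\varphi\subst{y}{c'}}$. Then extending $\asgn$ by $y\mapsto c'$ is a model of~$\formulaof{\varphi}$, and since $c'\modof{m}k$ it is also a model of $\formulaof{y\modof{m}k}$; hence $\asgn\models\formulaof{\exists y(\varphi\land y\modof{m}k)}$. For the $(\Rightarrow)$ direction, assume there exists $y_0\in\integers$ such that $\asgn[y\mapsto y_0]\models\formulaof{\varphi}$ and $y_0\modof{m}k$. Since $\formulaof{\varphi}$ is $c$-best from below, condition~(ii) in its definition rules out $y_0>c$ (otherwise $\semof{\formulaof{\varphi\subst{y}{y_0}}}=\emptyset$), so $y_0\leq c$. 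Because $y_0$ satisfies both $y_0\modof{m}k$ and $y_0\leq c$, maximality of $c'$ in that set gives $y_0\leq c'\leq c$. Then condition~(i) of $c$-best from below yields $\semof{\formulaof{\varphi\subst{y}{y_0}}}\subseteq\semof{\formulaof{\varphi\subst{y}{c'}}}$, so $\asgn\models\formulaof{\varphi\subst{y}{c'}}$.

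The $c$-best-from-above case is mirrored by reversing inequalities and replacing $\max$ by $\min$ throughout. The main (and really only) subtle point is the step where we exclude $y_0>c$ via clause~(ii) of monotonicity and then invoke clause~(i) to lift models from~$y_0$ up to~$c'$; once one has this chain $y_0\leq c'\leq c$ in place, the rest is bookkeeping. In particular, the proof reduces essentially to a careful combination of \cref{lem:monotone-subst} with the observation that the congruence $y\modof{m}k$ simply restricts the pool of candidate witnesses to an arithmetic progression, on which the ``best'' representative below (resp.\ above) $c$ is exactly~$c'$.
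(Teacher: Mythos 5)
Your proof is correct and follows essentially the same route as the paper's: the backward direction by extending the assignment with $y\mapsto c'$ (using $c'\modof{m}k$), and the forward direction by bounding the witness $y_0\leq c$ via clause~(ii) of monotonicity, concluding $y_0\leq c'\leq c$ from maximality, and lifting the model to $c'$ via clause~(i). Your additional remark that $c'$ is well defined (the set is non-empty, so the maximum exists) is a small but welcome addition the paper leaves implicit.
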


In general, it is, however, expensive to decide whether a~formula is
$c$-monotone and find the tight~$c$.
Therefore, we propose a~cheap approximation working on the structure of LIA
formulae, which uses the formula analysis function~$\flow$ introduced above.
First, we propose the partial function $\decat(\formulaof{\varphi}, y)$ 
(whose result is in $\integers \cup \{+\infty\}$)
estimating the~$c$ for atomic formulae $c$-best from above w.r.t.~$y$: 
\begin{align}
  \decat(\formulaof{a \cdot y \leq c}, y) ={} &
    \!\!\textstyle\left\lfloor\frac{c}{a}\right\rfloor &&
    \text{ if } a > 0
  \\
  \decat(\formulaof{\vecof a \cdot \vecof x \leq c}, x_i) ={} &
    {+\infty} &&
    \text{ if } a_i = 0 \lor \exists j\colon i \neq j \land a_j \neq 0 \land a_i < 0
\end{align}
Intuitively, if~$y$ is in an inequation $\formulaof{a \cdot y \leq c}$ without
any other variable and $a > 0$, then $y$'s value is bounded from above by
$\left\lfloor\frac{c}{a}\right\rfloor$.
On the other hand, if $y=x_i$ is in an inequation $\formulaof{\vecof a \cdot
\vecof x \leq c}$ where $\vecof a$ has at least two nonzero coefficients and
$y$'s coefficient is negative, or~$y$ does not appear in the inequation at all,
then $y$'s value is not bounded (larger values of~$y$ make
it easier to satisfy the inequation).
The value for other cases is undefined.

Similarly, $\incat(\formulaof{\varphi}, y)$
(with the result in $\integers \cup \{-\infty\}$)
estimates the~$c$ for atomic formulae $c$-best from above:
%
%
\begin{align}
  \incat(\formulaof{a \cdot y \leq c}, y) ={} &
    \!\!\textstyle\left\lfloor\frac{c}{a}\right\rfloor &&
    \text{ if } a < 0
  \\
  \incat(\formulaof{\vecof a \cdot \vecof x \leq c}, x_i) ={} &
    {-\infty} &&
    \text{ if } a_i = 0 \lor \exists j\colon i \neq j \land a_j \neq 0 \land a_i > 0
\end{align}

Based on $\decat$ and $\incat$ and using the $\flow$ framework, we define the functions $\dec$ and $\inc$ estimating the $c$ for
general formulae $c$-best from below and above as
\begin{equation}
\dec(\formulaof{\varphi}, y) = \flow(\formulaof{\varphi}, y, \decat, \min),
~~
\inc(\formulaof{\varphi}, y) = \flow(\formulaof{\varphi}, y, \incat, \max).
\end{equation}
%


For a formula $\formulaof\psi = \formulaof{\exists y(\varphi(\vecof{x}, y) \land y
\modof{m} k)}$, the simplification algorithm then determines whether~$\varphi$ is
$c$-monotone for some~$c$, which is done using the $\inc$ and $\dec$ functions.
In particular, if $\dec(\formulaof\varphi, y) = c$ for some $c\in \integers \cup \{\pm
\infty\}$, we have that~$\formulaof\varphi$ is $c$-best from below w.r.t.~$y$ (analogously for
$\inc$).
Then, in the positive case and if $c\in\mathbb{Z}$, we apply 
\cref{lem:bounded} to simplify the formula $\formulaof\psi$.
If~$\formulaof\psi$ is of the simple form $\formulaof{\exists y(\varphi(\vecof{x}, y))}$
where $\formulaof\varphi$ is $c$-monotone w.r.t.~$y$, we can directly use
\cref{lem:monotone-subst} to simplify~$\formulaof\psi$ to $\formulaof{\varphi\subst{y}{c}}$.

\begin{example}
  Consider the formula $\formulaof\psi = \formulaof{\exists y( x-y \leq 1 \land y \leq -1
  \land y \modof{5} 0)}$.
  In order to simplify~$\psi$, 
  we first need to check if the formula $\formulaof{\varphi(x,y)} = \formulaof{x-y \leq 1
  \land y \leq -1}$ is $c$-monotone.
  Using $\dec$, we can deduce that $\dec(\formulaof{x-y\leq 1}, y) =
  \infty$, $\dec(\formulaof{y\leq -1}, y) = -1$, and hence $\dec(\formulaof\varphi, y) =
  -1$ meaning that $\varphi$ is ($-1$)-best from below w.r.t.\ $y$ ($\inc(\formulaof\varphi,
  y)$ is undefined). 
  \cref{lem:bounded} yields that~$\formulaof\psi$ is equivalent to
  $\formulaof{x\leq -4}$ (using $c' = -5$).
  \qed
\end{example}

\vspace{-2.0mm}
\subsection{Range-Based Quantifier Instantiation}\label{sec:range-based}
\vspace{-0.0mm}

Similarly as in Cooper's elimination algoroithm \cite{Cooper72}, we can compute
the \emph{range} of possible values for a~given variable~$y$ and
instantiating~$y$ with all values in the range.
For instance, $\formulaof{\exists y(y \leq 2 \land 2y \geq 3 \land
x + 3y = 42)}$ can be simplified into $\formulaof{x = 36}$.

To obtain the range of possible values of~$y$ in the formula~$\formulaof{\varphi}$,
we use the formula analysis framework with the following function
$\rangeat$ (whose result is an interval of integers) defined for atomic
formulae as follows:
\begin{align}
  \rangeat(\formulaof{a \cdot y \leq c}, y) ={} &
    \Big({-\infty}, \textstyle\left\lfloor\frac{c}{a}\right\rfloor\Big] &&
    \text{ if } a > 0
  \\
  \rangeat(\formulaof{a \cdot y \leq c}, y) ={} &
    \Big[\textstyle\left\lceil\frac{c}{a}\right\rceil, {+\infty} \Big) &&
    \text{ if } a < 0
  \\
  \rangeat(\formulaof{\vecof a \cdot \vecof x \leq c}, x_i) ={} &
    \Big(\textstyle {-\infty}, {+\infty} \Big) &&
    \text{ if } \exists j,k\colon j \neq k \land  a_j \neq 0 \land a_k \neq 0
\end{align}
%
We then employ our formula analysis framework to get the range of $y$
in~$\formulaof{\varphi}$ using the function $\range(\formulaof{\varphi}, y) =
\flow(\formulaof{\varphi}, y, \rangeat, \cap)$. 

\begin{lemma}
  Let $\formulaof{\psi} = \formulaof{\exists y(\varphi(\vecof{x},y))}$ be a~formula such that
  $\range(\formulaof{\varphi}, y) = [a,b]$ with $a,b \in \integers$.
  Then~$\formulaof{\psi}$ is equivalent to the formula $\formulaof{\bigvee_{a \leq c \leq
  b} \varphi\subst{y}{c}}$.
\end{lemma}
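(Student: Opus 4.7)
The plan is to reduce the lemma to a soundness invariant for $\range$: whenever $\range(\formulaof\varphi, y)$ yields an interval $I$ (i.e., not $\undefined$), every model $\asgn \models \formulaof\varphi$ satisfies $\asgn(y) \in I$. Once this invariant is established, the lemma follows immediately: any witness $d$ for the outer existential quantifier must lie in $[a,b]$, so the quantifier can be equivalently unfolded into the finite disjunction over $c \in [a,b]$.

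To prove soundness, I would proceed by structural induction along the recursion of $\flow$. The base case is an atomic formula. For $\formulaof{a \cdot y \leq c}$ with $a > 0$, any model $\asgn$ satisfies $\asgn(y) \leq c/a$ and, since $\asgn(y)$ is an integer, $\asgn(y) \leq \lfloor c/a \rfloor$, matching the upper endpoint of $\rangeat(\formulaof{a \cdot y \leq c}, y)$. The case $a < 0$ is symmetric using the ceiling. For an inequality with at least two nonzero coefficients, $\rangeat$ returns $(-\infty, +\infty)$, which is trivially sound. Atoms for which $\rangeat$ is not defined (equalities, congruences) yield $\undefined$, which under the meet operator $\cap$ propagates to $\range(\formulaof\varphi, y) = \undefined$ for the entire formula; such formulae are excluded by the hypothesis $\range(\formulaof\varphi, y) = [a,b]$, so they need not enter the inductive argument.

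The inductive step handles conjunction: for $\formulaof{\varphi_1 \land \varphi_2}$, any model $\asgn$ satisfies both conjuncts, so by the inductive hypothesis $\asgn(y)$ belongs to both $\range(\formulaof{\varphi_1}, y)$ and $\range(\formulaof{\varphi_2}, y)$, and thus to their intersection, which equals $\range(\formulaof{\varphi_1 \land \varphi_2}, y)$ by the recursive definition of $\flow$ with $\meet = \cap$. Since $\range$ is defined solely on conjunctions of atoms, the induction is complete.

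With soundness in place, both directions of the equivalence follow. If $\asgn \models \formulaof{\exists y(\varphi)}$, there is some $d \in \integers$ such that the extension $\asgn'$ of $\asgn$ with $\asgn'(y) = d$ satisfies $\formulaof\varphi$; by soundness $d \in [a,b]$, and $\asgn \models \formulaof{\varphi\subst{y}{d}}$, witnessing the disjunction. Conversely, if $\asgn \models \formulaof{\varphi\subst{y}{c}}$ for some $c \in [a,b]$, extending $\asgn$ with $\asgn'(y) = c$ gives $\asgn' \models \formulaof{\varphi}$, so $\asgn \models \formulaof{\exists y(\varphi)}$. The main obstacle is the soundness of $\rangeat$ on single-variable inequality atoms, where the floor/ceiling rounding must be justified against the integer semantics; everything else reduces to routine induction and standard substitution reasoning.
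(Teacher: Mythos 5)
Your proposal is correct and follows essentially the same route as the paper: the paper's proof is the single observation that $\semof{\formulaof{\varphi(\vecof x, c)}} = \emptyset$ for all $c \notin [a,b]$, which is exactly the contrapositive of the soundness invariant you state and then establish by induction over the $\flow$ recursion. You simply supply the details (floor/ceiling rounding for single-variable atoms, intersection for conjunctions) that the paper leaves implicit.
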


\begin{proof}
  It suffices to notice that for all $c \notin [a,b]$ we have 
    $\semof{\formulaof{\varphi(\vecof x, c)}} = \emptyset$.
    \qed
\end{proof}

In our decision procedure, given a~formula $\formulaof{\psi} = \formulaof{\exists
y(\varphi(\vecof{x},y))}$, if $\range(\formulaof{\varphi}, y) = [a,b]$ for
$a,b\in \integers$ and $b-a \leq N$ for a~parameter~$N$ (set by the user), we
simplify~$\formulaof{\psi}$ to $\formulaof{\bigvee_{a \leq c \leq b} \varphi\subst{y}{c}}$.
In our experiments, we set~$N=0$.


\newcommand{
\begin{wrapfigure}[14]{r}{4.1cm}
\vspace*{-9mm}
\hspace*{-3mm}
\scalebox{0.9}{
\begin{minipage}{\linewidth}
%
%
%
%
\begin{tikzpicture} 
    \def \miny {-2}
    \def \maxy {2}
    \def \minm {0}
    \def \maxm {5.2}

    \draw[black!15, line width=0.05pt,  step=0.1] (\miny,\minm) grid (\maxy,\maxm);
    \draw[black!25, line width=1.3pt, step=1] (\miny,\minm) grid (\maxy,\maxm);

    \begin{pgfonlayer}{background}
      \draw[draw=blue!20,fill=blue!5] (-1.9,0.1) rectangle (1.7,5);
    \end{pgfonlayer}

    \draw[thick, latex-latex] (\miny-0.3,0)--(\maxy+0.3,0);
    \draw[thick, latex-latex] (0,\minm-0.3)--(0,\maxm+0.3);
    \node at (\maxy+0.2,0.3) {$y$};
    \node at (0.25,\maxm+0.2) {$m$};

    \foreach \y in {\miny,...,-1}
     \node[below,black] at (\y,0) {$\y 0$};
    \foreach \y in {1,2,...,\maxy}
     \node[below,black] at (\y,0) {$\y 0$};
    \foreach \m in {1,2,...,\maxm}
     \node[left,black] at (0,\m) {$\m 0$};
    \node[below left,black] at (0,0) {$0$};

    \draw[red!70!black!80, very thick, domain=-1.9:1.1, smooth,variable=\x]  plot (\x,1.2-\x);
    \draw[red!70!black!80, very thick, domain=-0.1:1.7, smooth,variable=\x]  plot (\x,4.9-\x);


\end{tikzpicture}
\vspace{-8mm}
\end{minipage}
}
\caption{Modulo linearization.}
\label{fig:linearization}
\end{wrapfigure}
}[0]{
\begin{wrapfigure}[14]{r}{4.1cm}
\vspace*{-9mm}
\hspace*{-3mm}
\scalebox{0.9}{
\begin{minipage}{\linewidth}
%
%
%
%
\begin{tikzpicture} 
    \def \miny {-2}
    \def \maxy {2}
    \def \minm {0}
    \def \maxm {5.2}

    \draw[black!15, line width=0.05pt,  step=0.1] (\miny,\minm) grid (\maxy,\maxm);
    \draw[black!25, line width=1.3pt, step=1] (\miny,\minm) grid (\maxy,\maxm);

    \begin{pgfonlayer}{background}
      \draw[draw=blue!20,fill=blue!5] (-1.9,0.1) rectangle (1.7,5);
    \end{pgfonlayer}

    \draw[thick, latex-latex] (\miny-0.3,0)--(\maxy+0.3,0);
    \draw[thick, latex-latex] (0,\minm-0.3)--(0,\maxm+0.3);
    \node at (\maxy+0.2,0.3) {$y$};
    \node at (0.25,\maxm+0.2) {$m$};

    \foreach \y in {\miny,...,-1}
     \node[below,black] at (\y,0) {$\y 0$};
    \foreach \y in {1,2,...,\maxy}
     \node[below,black] at (\y,0) {$\y 0$};
    \foreach \m in {1,2,...,\maxm}
     \node[left,black] at (0,\m) {$\m 0$};
    \node[below left,black] at (0,0) {$0$};

    \draw[red!70!black!80, very thick, domain=-1.9:1.1, smooth,variable=\x]  plot (\x,1.2-\x);
    \draw[red!70!black!80, very thick, domain=-0.1:1.7, smooth,variable=\x]  plot (\x,4.9-\x);


\end{tikzpicture}
\vspace{-8mm}
\end{minipage}
}
\caption{Modulo linearization.}
\label{fig:linearization}
\end{wrapfigure}
}

\vspace{-1.0mm}
\subsection{Modulo Linearization}\label{sec:mod_lin}
\vspace{-0.0mm}

The next optimization is more complex and helps mainly in practical cases
in the benchmarks containing congruences with large moduli.
It does not substitute the value of a~variable by a~constant, but, instead,
substitutes a~congruence with an equation.

\begin{wrapfigure}[14]{r}{4.1cm}
\vspace*{-9mm}
\hspace*{-3mm}
\scalebox{0.9}{
\begin{minipage}{\linewidth}
%
%
%
%
\begin{tikzpicture} 
    \def \miny {-2}
    \def \maxy {2}
    \def \minm {0}
    \def \maxm {5.2}

    \draw[black!15, line width=0.05pt,  step=0.1] (\miny,\minm) grid (\maxy,\maxm);
    \draw[black!25, line width=1.3pt, step=1] (\miny,\minm) grid (\maxy,\maxm);

    \begin{pgfonlayer}{background}
      \draw[draw=blue!20,fill=blue!5] (-1.9,0.1) rectangle (1.7,5);
    \end{pgfonlayer}

    \draw[thick, latex-latex] (\miny-0.3,0)--(\maxy+0.3,0);
    \draw[thick, latex-latex] (0,\minm-0.3)--(0,\maxm+0.3);
    \node at (\maxy+0.2,0.3) {$y$};
    \node at (0.25,\maxm+0.2) {$m$};

    \foreach \y in {\miny,...,-1}
     \node[below,black] at (\y,0) {$\y 0$};
    \foreach \y in {1,2,...,\maxy}
     \node[below,black] at (\y,0) {$\y 0$};
    \foreach \m in {1,2,...,\maxm}
     \node[left,black] at (0,\m) {$\m 0$};
    \node[below left,black] at (0,0) {$0$};

    \draw[red!70!black!80, very thick, domain=-1.9:1.1, smooth,variable=\x]  plot (\x,1.2-\x);
    \draw[red!70!black!80, very thick, domain=-0.1:1.7, smooth,variable=\x]  plot (\x,4.9-\x);


\end{tikzpicture}
\vspace{-8mm}
\end{minipage}
}
\caption{Modulo linearization.}
\label{fig:linearization}
\end{wrapfigure}

Let $\formulaof{\varphi} = \formulaof{\exists y \exists m(\psi \land y + m
\modof{37} 12)}$ such that~$\formulaof{\psi}$ is $17$-best from below w.r.t.~$y$
and $\range(\formulaof{\psi}, m) = [1,50]$.
Since the modulo constraint $\formulaof{y + m \modof{37} 12}$ contains two variables
($y$~and~$m$), we cannot use the optimization from \cref{sec:monotonicity}.
From the modulo constraint and the fact that~$\formulaof \psi$ is 17-best from below
w.r.t.~$y$, we can infer that it is sufficient to consider~$y$ only in the
interval $[-19,17]$ (we obtained $-19$ as $17-37+1$).
The reason is that any other~$y$ can be mapped to a~$y'$ from the same congruence
class (modulo~$37$) that is in the interval and, therefore, gives the same result in the
modulo constraint.
This, together with the other fact (i.e., $\range(\formulaof{\psi}, m) = [1,50]$)
tells us that it is sufficient to only consider the (possibly multiple) linear
relations between~$y$ and~$m$ in the rectangle $[-19,17] \times [1,50]$
(cf.~\cref{fig:linearization}).
The modulo constraint can, therefore, be substituted by the linear relations to obtain the formula
\begin{equation}
\scalebox{0.87}{
\formulaof{\exists y \exists m(\psi \land ((y \geq -19 \land y \leq 11 \land
y+m = 12) \lor (y \geq -1 \land y \leq 17 \land y+m = 49)))}.
  }
\end{equation}

Although the formula seems more complex than the original one, it avoids the
large FA to be
generated for the modulo constraint (a~modulo constraint with~$\modof k$ needs
an~FA with~$k$ states) and, instead, generates the usually much smaller FAs
for the (in)equalities.

The general rewriting rule can be given by the following lemma:

\begin{restatable}{lemma}{lemLinearizationRewrite} \label{lem:LinearizationRewrite}
Let $\formulaof{\psi(\vec{x}, y, m)}$ be a~formula
s.t.~$\range(\formulaof{\psi}, m) = [r, s]$ \mbox{for $r,s \in \integers$,}
let $\formulaof{\varphi} = \formulaof{\exists y \exists m(\psi
\land a_y \cdot y + a_m \cdot m \modof M k)}$, with $a_y \neq 0 \neq a_m$, and
$\alpha = \frac{M}{\gcdof{a_y, M}}$.
%
    If $\formulaof{\psi}$ is $c$-best from below w.r.t.~$y$, then
    $\formulaof{\varphi}$ is equivalent to the formula
\begin{equation}
\formulaof{\exists y \exists m\left(\psi \land \left(\bigvee_{i=0}^{N-1} a_y \cdot y + a_m \cdot m = k + (\ell_1 + i)\cdot \alpha\right)\right)}
\end{equation}
where
\begin{equation}
  \ell_1\!=\!
  \begin{cases}
  \left\lceil
    \frac{a_y \cdot (c-\alpha+1) + a_m \cdot r - k}
    {\alpha}
  \right\rceil
    & \text{for } \frac{a_m}{a_y} \!>\! 0  \\[2mm]
  \left\lceil
    \frac{a_y \cdot (c-\alpha+1) + a_m \cdot s - k}
    {\alpha}
  \right\rceil
    & \text{for } \frac{a_m}{a_y} \!<\! 0
  \end{cases},
  y_1\!=\!\begin{cases}
    \frac{k+ \ell_1 \cdot \alpha -a_m \cdot r}
    {a_y}
    & \text{for } \frac{a_m}{a_y} \!>\! 0  \\[2mm]
    \frac{k+ \ell_1 \cdot \alpha -a_m \cdot s}
    {a_y}
    & \text{for } \frac{a_m}{a_y} \!<\! 0
  \end{cases},
\end{equation}
and
\begin{equation}
  N =
  \left\lceil
    \frac{
      a_y(c - y_1) + a_m(s - r) + 1
    }{\alpha}
  \right\rceil.
\end{equation}

\end{restatable}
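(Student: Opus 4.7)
The plan is to prove the equivalence in two directions, with the main technical work being the restriction of $y$ to a bounded interval using the $c$-best-from-below property combined with the periodicity of the congruence modulo $M$.

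First I would show that the existential over $y$ can be restricted to $y \in [c - \alpha + 1, c]$. For the upper bound, condition~(ii) of $c$-best from below forces $y \leq c$, since $y > c$ already makes $\formulaof{\psi}$ unsatisfiable. For the lower bound, observe that $a_y \cdot \alpha = a_y \cdot M / \gcdof{a_y, M}$ is a multiple of $M$, so $a_y (y + \alpha) \equiv a_y y \pmod M$. Hence for $y \leq c - \alpha$, the pair $(y + \alpha, m)$ still satisfies the congruence and, by condition~(i) of $c$-best from below applied to $y \leq y + \alpha \leq c$, also satisfies $\formulaof{\psi}$. Iterating this shift, any model can be moved into $[c - \alpha + 1, c]$.

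Next I would carry out a sign-based case analysis on $a_y$ and $a_m$ to identify the tight interval $[u_{\min}, u_{\max}]$ of values that $u = a_y y + a_m m$ attains when $y$ and $m$ range over $[c - \alpha + 1, c]$ and $[r, s]$ respectively. The two cases in the definitions of $\ell_1$ and $y_1$ correspond precisely to whether $u_{\min}$ is realized at $(y, m) = (c - \alpha + 1, r)$ or at $(c - \alpha + 1, s)$, determined by the sign of $a_m / a_y$. Since $\alpha$ divides $M$, the congruence $u \equiv k \pmod M$ in particular implies $u \in k + \alpha \integers$, and the desired disjunction is then exactly the enumeration of elements of $k + \alpha \integers$ lying in $[u_{\min}, u_{\max}]$: the ceiling in $\ell_1$ picks the smallest such element above $u_{\min}$, and $N$ counts the valid offsets. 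The auxiliary $y_1$ is chosen as the value of $y$ realizing $u = k + \ell_1 \alpha$ at the extremal endpoint of $m$, which lets the length $u_{\max} - (k + \ell_1 \alpha)$ be rewritten uniformly as $a_y(c - y_1) + a_m(s - r)$, matching the closed form for $N$.

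For the backward direction ($\Leftarrow$), each model $(y, m)$ of an enumerated equation that also satisfies $\formulaof{\psi}$ gives $u \in k + \alpha \integers$; if $u \in k + M \integers$ the congruence holds directly, and otherwise one invokes the $c$-best-from-below shift once more to move $y$ within $[c - \alpha + 1, c]$ to a nearby $y'$ for which $u'$ lands in $k + M \integers$ with $\formulaof{\psi}$ preserved. The forward direction follows by composing the two preceding steps on an arbitrary model of $\formulaof{\varphi}$. The main obstacle will be the careful sign-and-ceiling bookkeeping: tracking which endpoint of $[r, s]$ is extremal depending on signs, and checking that the closed forms for $\ell_1$, $y_1$, and $N$ correctly count the valid right-hand sides in each case. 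A subtler point is justifying the backward direction when $\gcdof{a_y, M} \neq 1$, since some enumerated equations have $(\ell_1 + i) \alpha \not\equiv 0 \pmod M$; one must argue that the $(y, m)$-pairs satisfying such an equation together with $\formulaof{\psi}$ contribute no new models over the free variables $\vecof{x}$.
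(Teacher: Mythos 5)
Your proposal follows essentially the same route as the paper's: first restrict the existential on $y$ to the window $[c-\alpha+1,c]$ using the $\alpha$-periodicity of the congruence in $y$ together with monotonicity (the paper isolates exactly this as \cref{lem:linearization_y_restr}, proved by showing that for fixed $m$ any two $y$-solutions differ by a multiple of $\alpha$; your shift-by-$\alpha$ argument is an equivalent, slightly more direct way to get the same restriction), and then enumerate the lines $a_y\cdot y+a_m\cdot m=k+\ell\cdot\alpha$ meeting the rectangle $[c-\alpha+1,c]\times[r,s]$ by the sign-dependent corner analysis --- which the paper waves off as ``high school analytic geometry'' and does not spell out either. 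Up to that point the two arguments match.

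However, the point you flag as ``subtler'' in the backward direction is not bookkeeping that can be discharged; it is a genuine gap, and the paper's proof silently skips over it. When $g=\gcdof{a_y,M}>1$, only the lines with $g\mid\ell$ satisfy $k+\ell\alpha\modof{M}k$; the remaining enumerated lines can contain lattice points satisfying $\formulaof{\psi}$, and these \emph{do} contribute new models, so the claimed equivalence fails. Concretely, take $a_y=4$, $a_m=1$, $M=6$, $k=0$ (so $g=2$, $\alpha=3$) and $\formulaof{\psi}=\formulaof{m\le 1\land -m\le -1\land y\le 2}$, which is $2$-best from below w.r.t.~$y$ with $\range(\formulaof{\psi},m)=[1,1]$. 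The congruence $\formulaof{4y+m\modof{6}0}$ with $m=1$ requires $4y\modof{6}5$, which is impossible since $4y$ is even modulo $6$, so $\formulaof{\varphi}$ is unsatisfiable; yet the stated formulae give $\ell_1=1$, $y_1=\tfrac12$, $N=3$, and the line for $\ell=3$, namely $\formulaof{4y+m=9}$, passes through $(y,m)=(2,1)$, which satisfies $\formulaof{\psi}$ --- so the rewritten disjunction is satisfiable. Your argument (and the lemma) is sound precisely when $\gcdof{a_y,M}=1$, where $\alpha=M$ and every enumerated right-hand side lies in the correct residue class; otherwise one must either add that hypothesis or restrict the disjunction to indices $\ell$ with $g\mid\ell$. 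So rather than trying to ``argue that such pairs contribute no new models,'' you should recognize that they can, and repair the statement.
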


%

Due to space constraints, we omit a~similar lemma for the
case when~$\formulaof{\psi}$ is $c$-best from above.
In our implementation, we use the linearization if the~$N$ from
\cref{lem:LinearizationRewrite} is~1, which is sufficient with many practical
cases with large moduli.

\newcommand{\figAutomataBased}[0]{
\begin{figure}[t]
  \centering
\includegraphics[width=6cm,keepaspectratio]{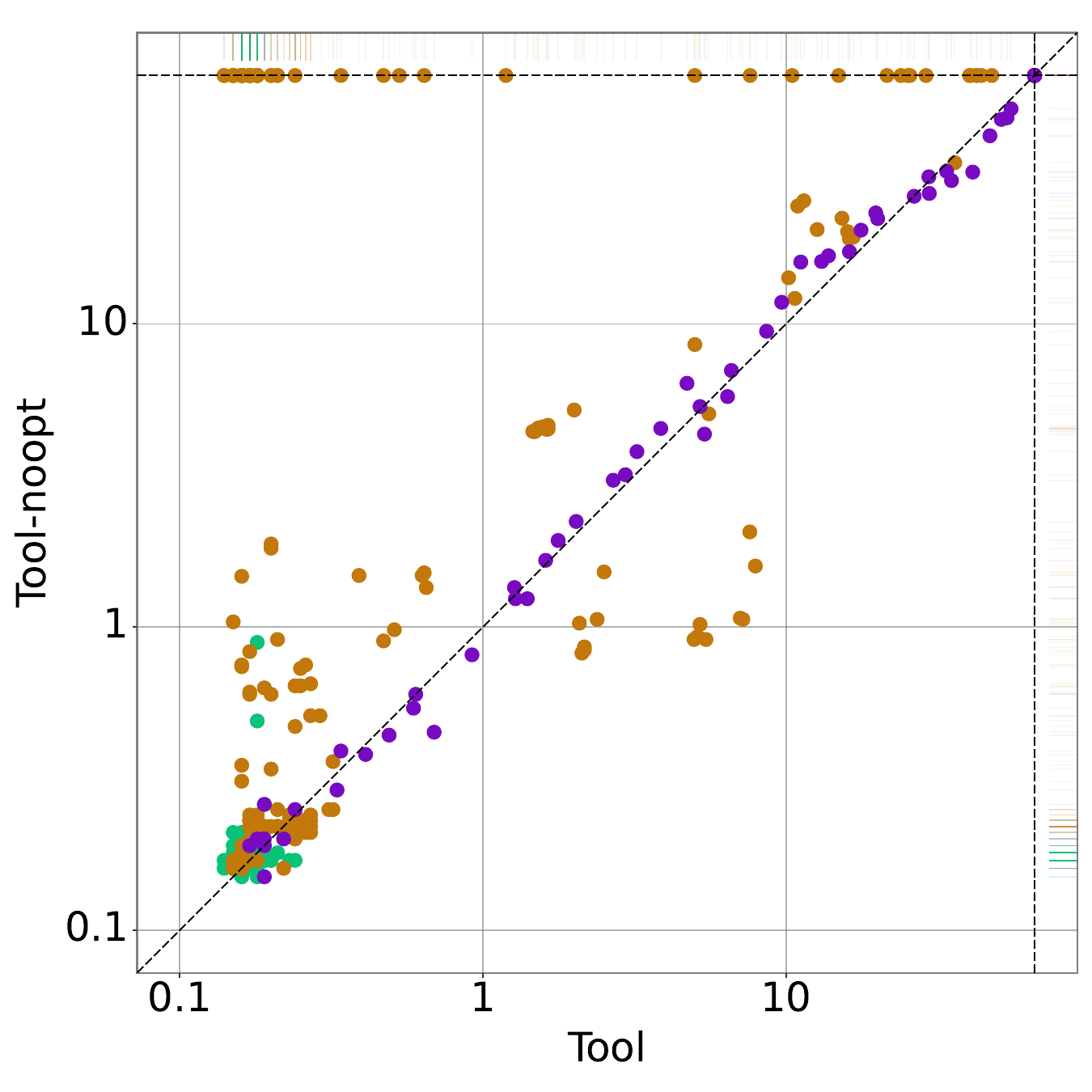}  
\includegraphics[width=6cm,keepaspectratio]{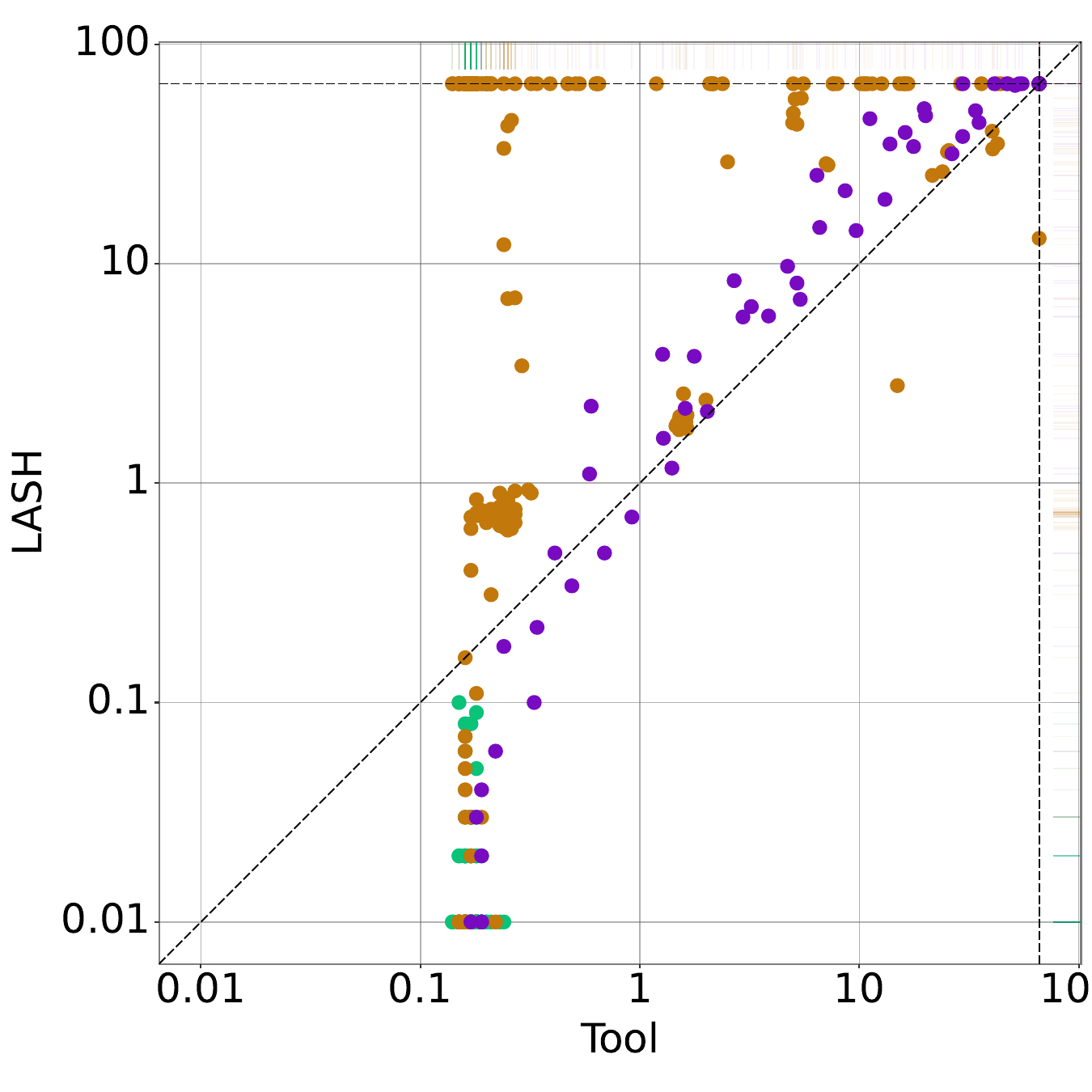}  
\caption{Comparison of automata-based LIA solvers on formulae from SMT-COMP:
  \RGBcircle{0,255,0}\,\texttt{UltimateAutomizer} (153),
  \RGBcircle{186,109,13}\,\texttt{UltimateAutomizerSvcomp2019} (219)
  and \RGBcircle{120,10,194}\,\texttt{Frobenius Coin Problem} (55).
  Times are in seconds, axes are logarithmic.
  Dashed lines represent timeouts (60\,s).}
\label{fig:automata_based}
\vspace{-3mm}
\end{figure}
}

\newcommand{\figFrobenius}[0]{
\begin{figure}[t]
  \centering
\captionsetup{font=footnotesize}
\vspace*{-8mm}
\hspace*{-4mm}
\includegraphics[width=\textwidth,keepaspectratio]{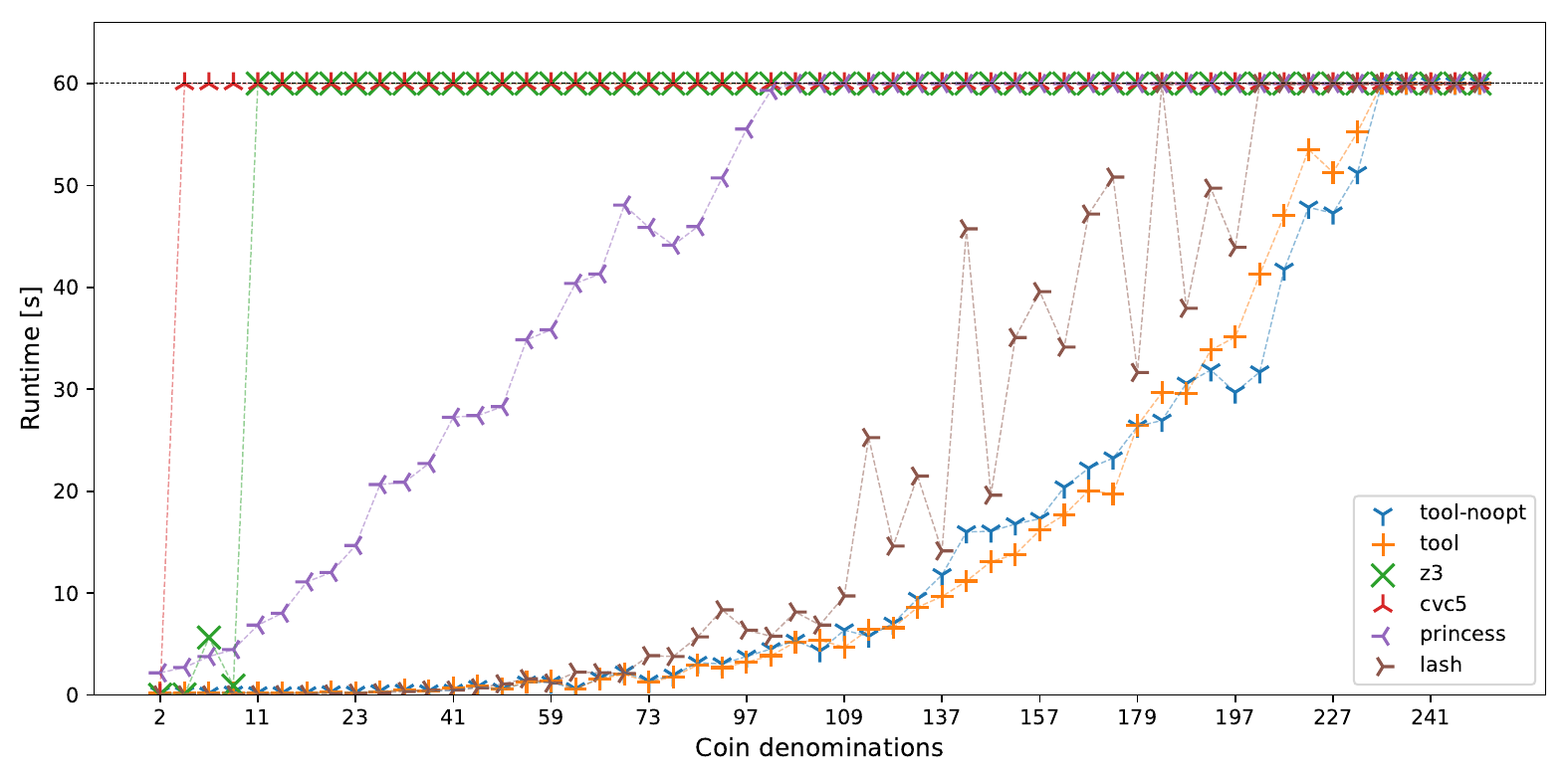}
\vspace{-4mm}
\caption{Run times of solvers on \frobenius.}
\label{fig:frobenius}
\end{figure}
}


\newcommand{
\begin{table}[t]
\captionsetup{font=footnotesize}
\hspace*{0mm}
\caption{Comparison of solvers on formulae from the \smtcomp and the \frobenius benchmark.  Times
  are given in seconds.  The columns \textbf{wins} and \textbf{losses} show
  numbers of formulae where \amaya performed better and worse
  (wins/losses caused by timeouts are in parentheses).}
\hspace*{-3.5mm}
\scalebox{0.84}{
\begin{tabular}{c}
\newcolumntype{g}{>{\columncolor{Gray!30}}r}
\newcolumntype{f}{>{\columncolor{Gray!30}}l}
\newcolumntype{h}{>{\columncolor{Gray!30}}c}
\begin{tabular}{lgrgrggrrcgrgrggrr}
    \toprule
    & \multicolumn{8}{c}{\smtcomp (372)} & \multicolumn{8}{c}{\frobenius (55)} \\
    \cmidrule(lr){2-9}\cmidrule(lr){11-18}
  \multicolumn{1}{c}{\textbf{solver}} & \multicolumn{1}{h}{\textbf{timeouts}} & \multicolumn{1}{c}{\textbf{mean}}   & \multicolumn{1}{h}{\textbf{median}} & \multicolumn{1}{c}{\textbf{std.\ dev.}}       &   \multicolumn{2}{h}{\textbf{wins}} & \multicolumn{2}{c}{\textbf{losses}}&~~& \multicolumn{1}{h}{\textbf{timeouts}}   & \multicolumn{1}{c}{\textbf{mean}}   & \multicolumn{1}{h}{\textbf{median}} & \multicolumn{1}{c}{\textbf{std.\ dev.}}     &   \multicolumn{2}{h}{\textbf{wins}} & \multicolumn{2}{c}{\textbf{losses}} \\
\cmidrule(lr){1-9}\cmidrule(lr){11-18}

  \cellcolor{GreenYellow}\amaya & \cellcolor{GreenYellow}\textbf{17}& \cellcolor{GreenYellow}1.12   & \cellcolor{GreenYellow}0.26   & \cellcolor{GreenYellow}3.58     &\cellcolor{GreenYellow}    &\cellcolor{GreenYellow}      &\cellcolor{GreenYellow}    &\cellcolor{GreenYellow} &\cellcolor{GreenYellow} & \cellcolor{GreenYellow}\textbf{5} & \cellcolor{GreenYellow}11.79   & \cellcolor{GreenYellow}3.54   & \cellcolor{GreenYellow}16.03    &\cellcolor{GreenYellow}    &\cellcolor{GreenYellow}      &\cellcolor{GreenYellow}    &\cellcolor{GreenYellow} \\
  \amayanoopt                  &  73 & 2.32  & 0.27 & 8.16 &  232 & (56) &  113 & (0) &&  \textbf{5}&  11.54  & 4.06  & 14.65  &  27 & (0)  &  21 & (0) \\
  \lash                        & 114 & 3.04  & \textbf{0.01} & 9.94  & 178 & (98) & 178 & (1) && 9& 15.72  & 5.74  & 20.32  &  37 & (5)  &  14 & (0) \\
\cmidrule(lr){1-9}\cmidrule(lr){10-17}
  \ziii                        &  31& \textbf{0.11}  & \textbf{0.01} & 1.35 &  31 & (28) & 338 & (14)&& 51 &   1.66  &  0.49 &  2.69 &  48 & (46) & 2 & (0) \\
  \cvc                         &  28& 0.20  & 0.02 & 2.42 &  32 & (28) & 340 & (17) && 54 &  \textbf{0.05}  &  \textbf{0.05} &  ---    &  49 & (49) & 1 & (0) \\
  \princess                    &  50& 4.14  & 1.14 & 9.31 &  354 & (40) &  8 & (7) && 13 & 46.32  & 45.92 & 29.03 &  50 & (8) & 0 & (0) \\
\bottomrule
\end{tabular}

\end{tabular}
}
\label{tab:results}
\end{table}
}[0]{
\begin{table}[t]
\captionsetup{font=footnotesize}
\hspace*{0mm}
\caption{Comparison of solvers on formulae from the \smtcomp and the \frobenius benchmark.  Times
  are given in seconds.  The columns \textbf{wins} and \textbf{losses} show
  numbers of formulae where \amaya performed better and worse
  (wins/losses caused by timeouts are in parentheses).}
\hspace*{-3.5mm}
\scalebox{0.84}{
\begin{tabular}{c}
\newcolumntype{g}{>{\columncolor{Gray!30}}r}
\newcolumntype{f}{>{\columncolor{Gray!30}}l}
\newcolumntype{h}{>{\columncolor{Gray!30}}c}
\begin{tabular}{lgrgrggrrcgrgrggrr}
    \toprule
    & \multicolumn{8}{c}{\smtcomp (372)} & \multicolumn{8}{c}{\frobenius (55)} \\
    \cmidrule(lr){2-9}\cmidrule(lr){11-18}
  \multicolumn{1}{c}{\textbf{solver}} & \multicolumn{1}{h}{\textbf{timeouts}} & \multicolumn{1}{c}{\textbf{mean}}   & \multicolumn{1}{h}{\textbf{median}} & \multicolumn{1}{c}{\textbf{std.\ dev.}}       &   \multicolumn{2}{h}{\textbf{wins}} & \multicolumn{2}{c}{\textbf{losses}}&~~& \multicolumn{1}{h}{\textbf{timeouts}}   & \multicolumn{1}{c}{\textbf{mean}}   & \multicolumn{1}{h}{\textbf{median}} & \multicolumn{1}{c}{\textbf{std.\ dev.}}     &   \multicolumn{2}{h}{\textbf{wins}} & \multicolumn{2}{c}{\textbf{losses}} \\
\cmidrule(lr){1-9}\cmidrule(lr){11-18}

  \cellcolor{GreenYellow}\amaya & \cellcolor{GreenYellow}\textbf{17}& \cellcolor{GreenYellow}1.12   & \cellcolor{GreenYellow}0.26   & \cellcolor{GreenYellow}3.58     &\cellcolor{GreenYellow}    &\cellcolor{GreenYellow}      &\cellcolor{GreenYellow}    &\cellcolor{GreenYellow} &\cellcolor{GreenYellow} & \cellcolor{GreenYellow}\textbf{5} & \cellcolor{GreenYellow}11.79   & \cellcolor{GreenYellow}3.54   & \cellcolor{GreenYellow}16.03    &\cellcolor{GreenYellow}    &\cellcolor{GreenYellow}      &\cellcolor{GreenYellow}    &\cellcolor{GreenYellow} \\
  \amayanoopt                  &  73 & 2.32  & 0.27 & 8.16 &  232 & (56) &  113 & (0) &&  \textbf{5}&  11.54  & 4.06  & 14.65  &  27 & (0)  &  21 & (0) \\
  \lash                        & 114 & 3.04  & \textbf{0.01} & 9.94  & 178 & (98) & 178 & (1) && 9& 15.72  & 5.74  & 20.32  &  37 & (5)  &  14 & (0) \\
\cmidrule(lr){1-9}\cmidrule(lr){10-17}
  \ziii                        &  31& \textbf{0.11}  & \textbf{0.01} & 1.35 &  31 & (28) & 338 & (14)&& 51 &   1.66  &  0.49 &  2.69 &  48 & (46) & 2 & (0) \\
  \cvc                         &  28& 0.20  & 0.02 & 2.42 &  32 & (28) & 340 & (17) && 54 &  \textbf{0.05}  &  \textbf{0.05} &  ---    &  49 & (49) & 1 & (0) \\
  \princess                    &  50& 4.14  & 1.14 & 9.31 &  354 & (40) &  8 & (7) && 13 & 46.32  & 45.92 & 29.03 &  50 & (8) & 0 & (0) \\
\bottomrule
\end{tabular}

\end{tabular}
}
\label{tab:results}
\end{table}
}

\vspace{-1.0mm}
\section{Experimental Evaluation}\label{sec:experiments}
\vspace{-0.0mm}

We implemented the proposed procedure in a~prototype tool called \amaya.
\amaya is written in Python and contains a~basic automata library with
alphabets encoded using \emph{multi-terminal binary decision diagrams}
(MTBDDs), for which it uses the C-based \sylvan library~\cite{DijkP17} (cf.\
\cref{sec:impl_details} for implementation details).
We ran all our experiments on Debian GNU/Linux~12 system with Intel(R) Xeon(R)
CPU E5-2620 v3 @ 2.40\,GHz and 32\,GiB of RAM with the timeout of 60\,s.

\vspace{-1mm}
\paragraph{Tools.}
We selected the following tools for comparison:
\ziii~\cite{MouraB08} (version 4.12.2),
\cvc~\cite{BarbosaBBKLMMMN22} (version 1.0.5),
\princess~\cite{Rummer08} (version 2023-06-19), and
\lash~\cite{WolperB00} (version~0.92).
Out of these, only \lash is an automata-based LIA solver; the other tools are
general purpose SMT solvers with the LIA theory.

\vspace{-1mm}
\paragraph{Benchmarks.}
Our main benchmark comes from SMT-COMP'22~\cite{SMTCOMP22}, in particular, the
categories LIA~\cite{LIA} and NIA (nonlinear integer arithmetic)~\cite{NIA}.
We concentrate on formulae in directories \texttt{UltimateAutomizer} and
(\texttt{20190429-})\texttt{UltimateAutomizer\\Svcomp2019}
of these categories (the main difference between LIA and NIA is
that LIA formulae are not allowed to use the modulo operator) and remove
formulae from NIA that contain multiplication between variables, giving us 372 formulae.
We denote this benchmark as \smtcomp.
The formulae come from verification of real-world C~programs using Ultimate
Automizer~\cite{HeizmannHP13}.
Other benchmarks in the
categories were omitted. Namely, the \texttt{tptp}
benchmark is easy for all tools (every tool finished within 1.3\,s on each formula). 
The \texttt{psyco} benchmark resembles Boolean reasoning more than integer reasoning.
In particular, 
its formulae
contain simple integer constraints (e.g., $x = y + 1$ or just $x = y$) and complex 
Boolean structure with \texttt{ite} operators and  quantified Boolean variables. 
Our prototype tool is not optimized for these features, 
but with a naive implementation of unwinding of \texttt{ite} and with encoding of Boolean variables in a~special automaton track, \amaya could solve 46 out of the 196 formulae in \texttt{psyco}.
%
%

Our second benchmark consists of the \emph{Frobenius coin problem}~\cite{Haase18} asking the following question:
Given a~pair of coins of certain coprime denominations~$a$ and~$b$, what is the
largest number not obtainable as a~sum of values of these coins?
Or, as a~formula,
\begin{equation}\label{eq:frob}
  \varphi(p) \defiff
  (\forall x, y\colon p \neq ax + by)
      \land
  (\forall r (\neg \exists u, v\colon
              r = au + bv) \Rightarrow r \leq p).
\end{equation}
%
Each formula is specified by a pair of denominations $(a,b)$, e.g, $(3,7)$ for which the model is $11$.
Apart from theoretical interest,
the Frobenius coin problem can be used, e.g., for liveness checking of markings 
of conservative weighted circuits (a variant of Petri nets)
\cite{petriNetsFrobenius} or reasoning about automata with
counters~\cite{oopslamatching,HolikLSTVV19,HuW23}.
We created a~family of 55 formulae encoding the problem with various
increasing coin denominations. We denote this benchmark as \frobenius. 
%
The input format of the benchmarks is SMT-LIB~\cite{BarFT-SMTLIB}, which all
tools can handle except \lash---for this, we implemented a~simple translator in
\amaya for translating LIA problems in SMT-LIB into \lash's input format (the
time of translation is not included in the runtime of \lash).

\paragraph{Results.}
We show the results in \cref{tab:results}. For each benchmark we show 
the run time statistics together with the number of timeouts and the number of wins/loses for 
each competitor of \amaya (e.g., the value ``354~(40)'' in the row for
\princess in \smtcomp means that \amaya was faster than \princess on 354
\smtcomp formulae and in 40 cases out of these, this was because \princess
timeouted).
Note that statistics about times tend to be biased in favour of tools that
timeouted more since the timeouts are not counted.

The first part of the table contains automata-based solvers and 
the second part contains general SMT solvers.
We also measure the effect of our optimizations against \amayanoopt,
a~version of the tool that only performs the classical automata-based procedure
from~\cref{sec:classical} without our optimizations.

\begin{table}[t]
\captionsetup{font=footnotesize}
\hspace*{0mm}
\caption{Comparison of solvers on formulae from the \smtcomp and the \frobenius benchmark.  Times
  are given in seconds.  The columns \textbf{wins} and \textbf{losses} show
  numbers of formulae where \amaya performed better and worse
  (wins/losses caused by timeouts are in parentheses).}
\hspace*{-3.5mm}
\scalebox{0.84}{
\begin{tabular}{c}

\end{tabular}
}
\label{tab:results}
\end{table}

\paragraph{Discussion.}

In the comparison with other SMT solvers, from \cref{tab:results}, 
automata-based approaches are clearly superior to current SMT solvers on \frobenius
(confirming the conjecture made in~\cite{Haase18}). 
\cvc fails already for denominations $(3,5)$ (where the result is~7) and \ziii
follows suite soon; \princess can solve significantly more formulae than \ziii and \cvc, 
but is still clearly dominated by \amaya. 
See \cref{sec:plots} for details.

On the \smtcomp benchmark, 
\amaya can solve the most formulae among all tools.
It has 17 timeouts, followed by \cvc with 28 timeouts (out of 372 formulae).
On individual examples, the comparison of \amaya against \ziii and \cvc almost always falls under one of the two cases:
\begin{inparaenum}[(i)]
  \item  the solver is one or two orders of magnitude faster than \amaya or
  \item  the solver times out.
\end{inparaenum}
This probably corresponds to specific heuristics of \ziii and \cvc taking effect or not, 
while \amaya has a more robust performance, but is still a~prototype and
nowhere near as optimized.
The performance of \princess is, however, usually much worse.  
\amaya is often complementary to the SMT solvers and was able to solve 6
formulae that no SMT solver did.

Comparison with \amayanoopt shows that the optimizations introduced in this paper have a~profound
effect on the number of solved cases (which is
a~proper superset of the cases solved without them). 
This is most visible on the \smtcomp benchmark, where 
\amaya has 56 TOs less than \amayanoopt. 
On the \frobenius benchmark, the results of \amayanoopt and \amaya are 
comparable. Our optimizations had limited impact here since the formulae are
built only from a small number of simple atoms (cf.~\cref{eq:frob}).
In some cases, \amaya takes even longer than \amayanoopt; this is because the lazy
construction explores parts of the state space that would be pruned by the
classical construction (e.g., when doing an intersection with a~minimized FA with an
empty language).
This could be possibly solved by algebraic rules tailored for lightweight unsatisfiability checking.

We also tried to evaluate the effect of individual optimizations by selectively turning them off.
It turns out that the most critical optimizations are the \emph{simple
rewriting rules} (\cref{sec:syntactic}; when turned off, \amaya gave additional
33 timeouts) and \emph{quantifier instantiation} (\cref{sec:simplifications};
when turned off, \amaya gave additional 28 timeouts).
On the other hand, surprisingly, turning off \emph{disjunction pruning}
(\cref{sec:disj_pruning}) did not have a~significant effect on the result.
By itself (without other optimizations), it can help the basic procedure solve
some hard formulae, but its effect is diluted when used with the rest of the
optimizations.
Still, even though it comes with an additional cost, it still has
a~sufficient effect to compensate for this overhead.

Comparing with the older automata-based solver \lash, \amaya 
solves more examples in both benchmarks; \lash 
has 123 TOs in total compared to 22 TOs of \amaya. 
The lower median of \lash on \smtcomp is partially caused by the facts that
\begin{inparaenum}[(i)]
  \item  \lash is a~compiled \C code while \amaya uses a~Python frontend, which
    has a~non-negligible overhead and
  \item  \lash times out on harder formulae.
\end{inparaenum}

\vspace{-0.0mm}
\section{Related Work}\label{sec:label}
\vspace{-0.0mm}

The decidability of Presburger arithmetic was established already at the
beginning of the 20th century by Presburger~\cite{Presburger29} via
\emph{quantifier elimination}.
Over time, more efficient quantifier-elimination-based decision procedures
occurred, such as the one of Cooper~\cite{Cooper72} or the one used within the
Omega test~\cite{Pugh91} (which can be seen as a~variation of Fourier-Motzkin
variable elimination for linear real
arithmetic~\cite[Section~5.4]{KroeningS16}).
The complexity bounds of 2-$\clNEXP$-hardness and 2-$\clEXPSPACE$ membership for
satisfiability checking were obtained by Fischer and Rabin~\cite{FischerR74}
and Berman~\cite{Berman80} respectively.
Quantifier elimination is often considered impractical due to the blow up in
the size of the resulting formula.
\emph{Counterexample-guided quantifier instantiation}~\cite{ReynoldsKK17} is
a~proof-theoretical approach to establish (one-shot) satisfiability of LIA
formulae, which can be seen as a~lazy version of Cooper's
algorithm~\cite{Cooper72}.
It is based on approximating a~quantified formula by a~set of formulae with the
approximation being refined in case it is found too coarse.
The approach focuses on formulae with one alternation, but is also extended to
any number of alternations (according to the authors, the procedure was
implemented in CVC4).

The first \emph{automata-based} decision procedure for Presburger arithmetic
can be obtained from B\"{u}chi's decision procedure for the second-order logic
WS1S~\cite{Buchi60} by noticing that addition is WS1S-definable.
A~similar construction for LIA is used by Wolper and Boigelot
in~\cite{WolperB95}, except that they avoid performing explicit automata product
constructions by using the notion of \emph{concurrent number automata}, which
are essentially tuples of synchronized FAs.

Boudet and Comon~\cite{BoudetC96} propose a~more direct construction of
automata for atomic constraints of the form $a_1 x_1 + \ldots + a_n x_n \sim c$
(for ${\sim} \in \{=, \leq\}$) over natural numbers; we use a~construction
similar to theirs extended to integers (as used, e.g.,
in~\cite{Durand-GasselinH10}).  Moreover, they give a direct construction for
a~conjunction of equations, which can be seen as a~special case of our
construction from \cref{sec:opt}.
Wolper and Boigelot in~\cite{WolperB00} discuss optimizations of the procedure
from~\cite{BoudetC96} (they use the \emph{most-significant bit first} encoding
though), in particular how to remove some states in the construction for
automata for inequations based on subsumption obtained syntactically from the
formula representing the state (a~restricted version of disjunction pruning,
cf.\ \cref{sec:disj_pruning}).
The works~\cite{BoigelotRW98,BoigelotJW01,BoigelotW02,BoigelotJW05} extend
the techniques from~\cite{WolperB00} to solve the mixed \emph{linear integer
real arithmetic} (LIRA) using weak B\"{u}chi automata, implemented in \lash~\cite{LASH}.

\emph{WS1S}~\cite{Buchi60} is a~closely related logic with an automata-based
procedure similar to the one discussed in this paper (as mentioned above,
Presburger arithmetic can be encoded into WS1S).
The automata-based decision procedure for WS1S is, however, of nonelementary
complexity (which is also a~lower bound for the logic), it was, however,
postulated that the sizes of the obtained automata (when reduced or minimized)
describing Presburger-definable sets of integers are bounded by a~tower of
exponentials of a~fixed height.
(3-$\clEXPSPACE$).
This postulate was proven by Klaedtke~\cite{Klaedtke08} (refined later
by Durand-Gasselin and Habermehl~\cite{Durand-GasselinH10} who show that
all automata during the construction do not exceed size 3-$\clEXP$).
The automata-based decision procedure for WS1S itself has been a~subject of
extensive study, making many pioneering contributions in the area of automata
engineering~\cite{GlennG96,HenriksenJJKPRS95,ElgaardKM98,Klarlund99,KlarlundMS02,TopnikWMS06,FiedorHLV15,FiedorHJLV17,HavlenaHLV19,FiedorHLV19,HavlenaHLVV20},
showcasing in the well-known tool
\mona~\cite{HenriksenJJKPRS95,ElgaardKM98,Klarlund99,KlarlundMS02}.

\bibliographystyle{splncs04}
\bibliography{literature}

\appendix
\clearpage

\vspace{-0.0mm}
\section{More Comprehensive Example of Our Optimizations}\label{sec:example_big}
\vspace{-0.0mm}

\newcommand{
\begin{figure}[t]
  \resizebox{\textwidth}{!}{
    \begin{tikzpicture}[level distance=1.2cm]

\newcommand{\exFormulaState}[4] {$\exists y,m$\LARGE{$\bigwedge$}\hspace*{-10mm} \footnotesize{$\begin{aligned} x + 3m - y &\le #1 \\ y &\le -1 \\ m &\le #2 \\ -m &\le #3 \\ y - m &\equiv_7 #4\end{aligned}$}}
\newcommand{\exTransitionSymbol}[3]
  {$\begin{smallmatrix}m: \\ x: \\ y:\end{smallmatrix} \left[\begin{smallmatrix} \text{\sout{\ensuremath{#1}}} \\ #2 \\ \text{\sout{\ensuremath{#3}}} \end{smallmatrix}\right]$}


  
  \tikzstyle{formula_state} = [draw,
                               rectangle,
                               rounded corners=2];
  \tikzstyle{transition_label} = [scale=1.0, transform shape];
  \tikzstyle{highlighted_edge} = [teal, thick, dashed];

  \node[formula_state, initial, initial text=] (s_9_6_0_0) {\exFormulaState{9}{6}{0}{0}};
  \node[formula_state, right = 1.5cm of s_9_6_0_0]                                      (s_3_2_0_0) {\exFormulaState{3}{2}{0}{0}};
  \node[formula_state, above right = -7mm and 1.5cm of s_3_2_0_0]                      (s_1_1_0_0) {\exFormulaState{1}{1}{0}{0}};
  \node[formula_state, below right = -7mm and 1.5cm of s_3_2_0_0]          (s_0_0_0_0) {\exFormulaState{0}{0}{0}{0}};


  \node[formula_state, right = 0.5cm of s_1_1_0_0, teal, fill=white]           (s_1_1_0_0_equiv) {$x \leq -6$};

  \node[formula_state, above= 10mm of s_3_2_0_0] (s_3_2_0_4) {\exFormulaState{3}{2}{0}{4}};
  \node[formula_state, above = 11mm of s_1_1_0_0]   (s_0_0_0_6) {\exFormulaState{0}{0}{0}{6}};
  \node[formula_state, teal, right = 0.5cm of s_0_0_0_6] (s_0_0_0_6_equiv) {$x \le -1$};

  \node () at ($(s_3_2_0_4)!0.5!(s_3_2_0_0)$) {\huge{$\bigvee$}};
  \node () at ($(s_0_0_0_6)!0.5!(s_1_1_0_0)$) {\huge{$\bigvee$}};
  \node () at ($(s_1_1_0_0)!0.5!(s_0_0_0_0)$) {\huge{$\bigvee$}};

  \node[above=of s_0_0_0_6.south west,xshift=3mm,yshift=-10mm] {$\varphi_1$};
  \node[above=of s_1_1_0_0.south west,xshift=3mm,yshift=-10mm] {$\varphi_2$};
  \node[above=of s_0_0_0_0.south west,xshift=3mm,yshift=-10mm] {$\varphi_3$};

  \draw[->] (s_9_6_0_0) edge node[transition_label, above]        {\exTransitionSymbol{1}{0}{1}} (s_3_2_0_0);
  \draw[->] (s_9_6_0_0) edge node[transition_label, above left]   {\exTransitionSymbol{1}{0}{0}} (s_3_2_0_4);

  \draw[->] (s_3_2_0_0) edge node[transition_label, xshift=0mm,yshift=7mm]  {\exTransitionSymbol{0}{0}{0}} (s_1_1_0_0);
  \draw[->] (s_3_2_0_0) edge node[transition_label, xshift=0mm,yshift=-7mm] {\exTransitionSymbol{1}{0}{1}} (s_0_0_0_0);

  \draw[->] (s_3_2_0_4) edge node[transition_label, above,yshift=2mm]  {\exTransitionSymbol{1}{0}{0}} (s_0_0_0_6);


  \draw[highlighted_edge]  (s_0_0_0_0.60) edge[->] node[right] {$\structsim$} (s_1_1_0_0.300);
  \draw[->, highlighted_edge] (s_1_1_0_0_equiv) edge[] node[right] {$\structsim$} (s_0_0_0_6_equiv);

  \draw[-, highlighted_edge]  (s_0_0_0_6) edge node[above] {$\Leftrightarrow$} (s_0_0_0_6_equiv);
  \draw[-, highlighted_edge]  (s_1_1_0_0) edge node[above] {$\Leftrightarrow$} (s_1_1_0_0_equiv);

  %

  \begin{pgfonlayer}{background}
    \node[draw,dashed,rectangle,fill=black!10,draw=black!70,rounded corners=8pt,inner sep=1.5mm,fit=(s_3_2_0_4) (s_3_2_0_0)] () {};
    \node[draw,dashed,rectangle,fill=black!10,draw=black!70,rounded corners=8pt,inner sep=1.5mm,fit=(s_0_0_0_6) (s_1_1_0_0) (s_0_0_0_0)] () {};
  \end{pgfonlayer}
\end{tikzpicture}
  }
  \caption{Fragment of the generated space for the formula in the example.}
  \label{fig:frameworkIllustr}
\end{figure}
}[0]{
\begin{figure}[t]
  \resizebox{\textwidth}{!}{
    \begin{tikzpicture}[level distance=1.2cm]

\newcommand{\exFormulaState}[4] {$\exists y,m$\LARGE{$\bigwedge$}\hspace*{-10mm} \footnotesize{$\begin{aligned} x + 3m - y &\le #1 \\ y &\le -1 \\ m &\le #2 \\ -m &\le #3 \\ y - m &\equiv_7 #4\end{aligned}$}}
\newcommand{\exTransitionSymbol}[3]
  {$\begin{smallmatrix}m: \\ x: \\ y:\end{smallmatrix} \left[\begin{smallmatrix} \text{\sout{\ensuremath{#1}}} \\ #2 \\ \text{\sout{\ensuremath{#3}}} \end{smallmatrix}\right]$}


  
  \tikzstyle{formula_state} = [draw,
                               rectangle,
                               rounded corners=2];
  \tikzstyle{transition_label} = [scale=1.0, transform shape];
  \tikzstyle{highlighted_edge} = [teal, thick, dashed];

  \node[formula_state, initial, initial text=] (s_9_6_0_0) {\exFormulaState{9}{6}{0}{0}};
  \node[formula_state, right = 1.5cm of s_9_6_0_0]                                      (s_3_2_0_0) {\exFormulaState{3}{2}{0}{0}};
  \node[formula_state, above right = -7mm and 1.5cm of s_3_2_0_0]                      (s_1_1_0_0) {\exFormulaState{1}{1}{0}{0}};
  \node[formula_state, below right = -7mm and 1.5cm of s_3_2_0_0]          (s_0_0_0_0) {\exFormulaState{0}{0}{0}{0}};


  \node[formula_state, right = 0.5cm of s_1_1_0_0, teal, fill=white]           (s_1_1_0_0_equiv) {$x \leq -6$};

  \node[formula_state, above= 10mm of s_3_2_0_0] (s_3_2_0_4) {\exFormulaState{3}{2}{0}{4}};
  \node[formula_state, above = 11mm of s_1_1_0_0]   (s_0_0_0_6) {\exFormulaState{0}{0}{0}{6}};
  \node[formula_state, teal, right = 0.5cm of s_0_0_0_6] (s_0_0_0_6_equiv) {$x \le -1$};

  \node () at ($(s_3_2_0_4)!0.5!(s_3_2_0_0)$) {\huge{$\bigvee$}};
  \node () at ($(s_0_0_0_6)!0.5!(s_1_1_0_0)$) {\huge{$\bigvee$}};
  \node () at ($(s_1_1_0_0)!0.5!(s_0_0_0_0)$) {\huge{$\bigvee$}};

  \node[above=of s_0_0_0_6.south west,xshift=3mm,yshift=-10mm] {$\varphi_1$};
  \node[above=of s_1_1_0_0.south west,xshift=3mm,yshift=-10mm] {$\varphi_2$};
  \node[above=of s_0_0_0_0.south west,xshift=3mm,yshift=-10mm] {$\varphi_3$};

  \draw[->] (s_9_6_0_0) edge node[transition_label, above]        {\exTransitionSymbol{1}{0}{1}} (s_3_2_0_0);
  \draw[->] (s_9_6_0_0) edge node[transition_label, above left]   {\exTransitionSymbol{1}{0}{0}} (s_3_2_0_4);

  \draw[->] (s_3_2_0_0) edge node[transition_label, xshift=0mm,yshift=7mm]  {\exTransitionSymbol{0}{0}{0}} (s_1_1_0_0);
  \draw[->] (s_3_2_0_0) edge node[transition_label, xshift=0mm,yshift=-7mm] {\exTransitionSymbol{1}{0}{1}} (s_0_0_0_0);

  \draw[->] (s_3_2_0_4) edge node[transition_label, above,yshift=2mm]  {\exTransitionSymbol{1}{0}{0}} (s_0_0_0_6);


  \draw[highlighted_edge]  (s_0_0_0_0.60) edge[->] node[right] {$\structsim$} (s_1_1_0_0.300);
  \draw[->, highlighted_edge] (s_1_1_0_0_equiv) edge[] node[right] {$\structsim$} (s_0_0_0_6_equiv);

  \draw[-, highlighted_edge]  (s_0_0_0_6) edge node[above] {$\Leftrightarrow$} (s_0_0_0_6_equiv);
  \draw[-, highlighted_edge]  (s_1_1_0_0) edge node[above] {$\Leftrightarrow$} (s_1_1_0_0_equiv);

  %

  \begin{pgfonlayer}{background}
    \node[draw,dashed,rectangle,fill=black!10,draw=black!70,rounded corners=8pt,inner sep=1.5mm,fit=(s_3_2_0_4) (s_3_2_0_0)] () {};
    \node[draw,dashed,rectangle,fill=black!10,draw=black!70,rounded corners=8pt,inner sep=1.5mm,fit=(s_0_0_0_6) (s_1_1_0_0) (s_0_0_0_0)] () {};
  \end{pgfonlayer}
\end{tikzpicture}
  }
  \caption{Fragment of the generated space for the formula in the example.}
  \label{fig:frameworkIllustr}
\end{figure}
}

\begin{figure}[t]
  \resizebox{\textwidth}{!}{
    \begin{tikzpicture}[level distance=1.2cm]

\newcommand{\exFormulaState}[4] {$\exists y,m$\LARGE{$\bigwedge$}\hspace*{-10mm} \footnotesize{$\begin{aligned} x + 3m - y &\le #1 \\ y &\le -1 \\ m &\le #2 \\ -m &\le #3 \\ y - m &\equiv_7 #4\end{aligned}$}}
\newcommand{\exTransitionSymbol}[3]
  {$\begin{smallmatrix}m: \\ x: \\ y:\end{smallmatrix} \left[\begin{smallmatrix} \text{\sout{\ensuremath{#1}}} \\ #2 \\ \text{\sout{\ensuremath{#3}}} \end{smallmatrix}\right]$}


  
  \tikzstyle{formula_state} = [draw,
                               rectangle,
                               rounded corners=2];
  \tikzstyle{transition_label} = [scale=1.0, transform shape];
  \tikzstyle{highlighted_edge} = [teal, thick, dashed];

  \node[formula_state, initial, initial text=] (s_9_6_0_0) {\exFormulaState{9}{6}{0}{0}};
  \node[formula_state, right = 1.5cm of s_9_6_0_0]                                      (s_3_2_0_0) {\exFormulaState{3}{2}{0}{0}};
  \node[formula_state, above right = -7mm and 1.5cm of s_3_2_0_0]                      (s_1_1_0_0) {\exFormulaState{1}{1}{0}{0}};
  \node[formula_state, below right = -7mm and 1.5cm of s_3_2_0_0]          (s_0_0_0_0) {\exFormulaState{0}{0}{0}{0}};


  \node[formula_state, right = 0.5cm of s_1_1_0_0, teal, fill=white]           (s_1_1_0_0_equiv) {$x \leq -6$};

  \node[formula_state, above= 10mm of s_3_2_0_0] (s_3_2_0_4) {\exFormulaState{3}{2}{0}{4}};
  \node[formula_state, above = 11mm of s_1_1_0_0]   (s_0_0_0_6) {\exFormulaState{0}{0}{0}{6}};
  \node[formula_state, teal, right = 0.5cm of s_0_0_0_6] (s_0_0_0_6_equiv) {$x \le -1$};

  \node () at ($(s_3_2_0_4)!0.5!(s_3_2_0_0)$) {\huge{$\bigvee$}};
  \node () at ($(s_0_0_0_6)!0.5!(s_1_1_0_0)$) {\huge{$\bigvee$}};
  \node () at ($(s_1_1_0_0)!0.5!(s_0_0_0_0)$) {\huge{$\bigvee$}};

  \node[above=of s_0_0_0_6.south west,xshift=3mm,yshift=-10mm] {$\varphi_1$};
  \node[above=of s_1_1_0_0.south west,xshift=3mm,yshift=-10mm] {$\varphi_2$};
  \node[above=of s_0_0_0_0.south west,xshift=3mm,yshift=-10mm] {$\varphi_3$};

  \draw[->] (s_9_6_0_0) edge node[transition_label, above]        {\exTransitionSymbol{1}{0}{1}} (s_3_2_0_0);
  \draw[->] (s_9_6_0_0) edge node[transition_label, above left]   {\exTransitionSymbol{1}{0}{0}} (s_3_2_0_4);

  \draw[->] (s_3_2_0_0) edge node[transition_label, xshift=0mm,yshift=7mm]  {\exTransitionSymbol{0}{0}{0}} (s_1_1_0_0);
  \draw[->] (s_3_2_0_0) edge node[transition_label, xshift=0mm,yshift=-7mm] {\exTransitionSymbol{1}{0}{1}} (s_0_0_0_0);

  \draw[->] (s_3_2_0_4) edge node[transition_label, above,yshift=2mm]  {\exTransitionSymbol{1}{0}{0}} (s_0_0_0_6);


  \draw[highlighted_edge]  (s_0_0_0_0.60) edge[->] node[right] {$\structsim$} (s_1_1_0_0.300);
  \draw[->, highlighted_edge] (s_1_1_0_0_equiv) edge[] node[right] {$\structsim$} (s_0_0_0_6_equiv);

  \draw[-, highlighted_edge]  (s_0_0_0_6) edge node[above] {$\Leftrightarrow$} (s_0_0_0_6_equiv);
  \draw[-, highlighted_edge]  (s_1_1_0_0) edge node[above] {$\Leftrightarrow$} (s_1_1_0_0_equiv);

  %

  \begin{pgfonlayer}{background}
    \node[draw,dashed,rectangle,fill=black!10,draw=black!70,rounded corners=8pt,inner sep=1.5mm,fit=(s_3_2_0_4) (s_3_2_0_0)] () {};
    \node[draw,dashed,rectangle,fill=black!10,draw=black!70,rounded corners=8pt,inner sep=1.5mm,fit=(s_0_0_0_6) (s_1_1_0_0) (s_0_0_0_0)] () {};
  \end{pgfonlayer}
\end{tikzpicture}
  }
  \caption{Fragment of the generated space for the formula in the example.}
  \label{fig:frameworkIllustr}
\end{figure}
Consider the formula
\begin{equation}
\formulaof{\exists y, m(x+3m -y \leq 9 \land y \leq -1 \land m \leq 6 \land -m
  \leq 0 \land y-m \modof 7 0)}
\end{equation}
and see \cref{fig:frameworkIllustr} for a~part of the generated FA (for
  simplicity, we only consider a~fragment of the constructed automaton to
  demonstrate our technique).

Let us focus on the configuration after reading the word $x\colon \!\onesymb 0\!\!\onesymb
0$\!: $\formulaof{\varphi_1 \lor \varphi_2 \lor \varphi_3}$.
First, we examine the relation between $\formulaof{\varphi_2}$ and $\formulaof{\varphi_3}$.
We notice that the two formulae look similar with the only difference being in
two pairs of atoms: $\formulaof{x +3m -y \leq 1}$ and $\formulaof{x+3m-y \leq 0}$,
and $\formulaof{m \leq 1}$ and $\formulaof{m \leq 0}$ respectively.
Since $0 \leq 1$ there are structural subsumptions $\formulaof{x+3m-y \leq 0}
\structsim \formulaof{x+3m-y \leq 1}$ and $\formulaof{m \leq 0}
\structsim \formulaof{m \leq 1}$, which yields $\formulaof{\varphi_3} \structsim
\formulaof{\varphi_2}$, and we can therefore use disjunction pruning
(\cref{sec:disj_pruning}) to simplify $\formulaof{\varphi_1 \lor \varphi_2
\lor \varphi_3}$ to $\formulaof{\varphi_1 \lor \varphi_2}$.

Next, we analyze~$\formulaof{\varphi_1} = \formulaof{\exists y,m(\psi_1)}$.
First, we compute $\range(\formulaof{\psi_1}, m) = [0,0]$ and, based on
that, perform the substitution $\formulaof{\psi_1\subst{m}{0}}$, obtaining
(after simplifications) the formula $\formulaof{\psi_1'} = \formulaof{x-y \leq 0 \land y \leq -1
\land y \modof 7 6}$.
Then, we analyze the behaviour of~$y$ in~$\formulaof{\psi_1'}$ by computing
$\dec(\formulaof{x-y \leq 0 \land y \leq -1}, y) = -1$.
Based on this, we know that $\formulaof{x-y \leq 0 \land y \leq -1}$ is
$(-1)$-best from below, so we can use \cref{lem:bounded} to instantiate~$y$
in~$\formulaof{\psi_1'}$ with~$-5$ (the largest number less than~$-1$ satisfying the
modulo constraint), obtaining the (quantifier-free) formula $\formulaof{x \leq -1}$.

Finally, we focus on~$\formulaof{\varphi_2} = \formulaof{\exists y,m(\psi_2)}$ again.
First, we compute $\range(\formulaof{\psi_2}, m) = [0,1]$ and rewrite
$\formulaof{\varphi_2}$ to $\formulaof{\exists y(\psi_2\subst{m}{0} \lor
\psi_2\subst{m}{1})}$.
After antiprenexing, this will be changed to $\formulaof{\exists
y(\psi_2\subst{m}{0}) \lor \exists y(\psi_2\subst{m}{1})}$.
Using similar reasoning as in the previous paragraph, we can analyze the two
disjuncts in the formula to obtain the formula $\formulaof{x \leq -6 \lor x \leq -8}$.
With disjunction pruning, we obtain the final result of simplification
of~$\formulaof{\varphi_2}$ as the formula $\formulaof{x \leq -6}$.
In the end, again using disjunction pruning, the whole formula
$\formulaof{\varphi_1 \lor \varphi_2 \lor \varphi_3}$ can be simplified to
$\formulaof{x \leq -1}$.

\vspace{-0.0mm}
\section{Implementation Details}\label{sec:impl_details}
\vspace{-0.0mm}

\amaya's is architecturally divided into a frontend written in Python responsible for
preprocessing the input formula and orchestrating the decision procedure, and two
user-selectable backends providing the frontend with automata operations.
The first of \amaya's backends is implemented in Python and is geared towards
experimentation while sacrificing execution speed. The backend is capable of
tracking the origin of automata states all the way to the atomic constraints,
allowing analysis of phenomena in solver's runtime traces, e.g., situations when
applying minimization removing large number of automata states. Having rich
information thus allows one to use impactful minimization as a~natural guide
for finding new formula rewriting rules. The backend represents transitions
symbols explicitly, avoiding cumbersome propagation of state-origin information
that happens when manipulating automata that represent the transition function symbolically.

\amaya's second backend is written in \Cpp and provides algorithms for
automata with their transition function represented symbolically using \emph{multi-terminal
binary decision diagrams} (MTBDDs), for which it uses the \sylvan
library~\cite{DijkP17}. The need for a symbolic representation is natural, as the size of the
automaton alphabet grows exponentially with the number of variables.  
As \sylvan is implemented in C, it would be possible to write a~small wrapper providing
\amaya's frontend with low-level MTBDD manipulation (actually, a~previous
version of \amaya was using such a~wrapper).
However, this approach quickly
becomes impractical, as a nontrivial amount of information needs to be 
exchanged between the frontend and the low-level MTBDD operations. Therefore, \amaya's MTBDD backend
is written entirely in \Cpp and exposes only high-level API calls manipulating entire automata.


\vspace{-0.0mm}
\section{Proofs}\label{sec:proofs}
\vspace{-0.0mm}

\lemSubsumption*

\begin{proof}
By induction.
\begin{itemize}
  \item  Base case: The only interesting base case is $\formulaof{\vecof {a}
    \cdot \vecof{x} \leq c_1} \structsim \formulaof{ \vecof {a} \cdot
    \vecof{x} \leq c_2}$.
    For showing that if $c_1 \leq c_2$, then $\semof{\formulaof{\vecof {a}
    \cdot \vecof{x} \leq c_1}} \subseteq \semof{\formulaof{\vecof {a}
    \cdot \vecof{x} \leq c_2}}$, it suffices to notice that every model of the
    formula on the left-hand side is also a model of the formula on the
    right-hand side.

  \item  Inductive case:  All inductive cases are easy to prove by reasoning
    over the models of the formulae.
  \qed
\end{itemize}
\end{proof}


\lemMonotoneSubst*

\begin{proof}
  The case for $c\in\integers$ follows directly from the definition of monotonicity. 
  For the case of $c= \pm \infty$, we prove only the case of $c = {+\infty}$ and 
  $\formulaof{\varphi}$ being $c$-best from below wrt.~$y$ (the other case can be proven analogously).
  From the definition of best from below formulae, we have that $\semof{\formulaof{\varphi(\vecof{x},y_1)}} \subseteq
  \semof{\formulaof{\varphi(\vecof{x},y_2)}}$ for each $y_1 \leq y_2$. 
  %
  Since $\varphi\subst{y}{\infty}$ is defined, let $A$ be atoms of $\formulaof\varphi$ s.t. the substitution $y/{+\infty}$ yields $\top$ or $\bot$.
  These are all atoms 
  containing with $y$ as a free variable. Further consider the function
  $$
    f(\vecof a \cdot \vecof x + a_y \cdot y \leq c) = \frac{c - \vecof a \cdot \vecof x}{a_y}
  $$
  and based on it the maximum for $\vecof x$ as
  $$
    g(\vecof x) = \max\{ f(\psi)(\vecof x) + 1 \mid \psi \in A \}.
  $$
  Then, we have that $\formulaof{\varphi\subst{y}{\infty}} = \formulaof{\varphi(\vecof{x},g(\vecof{x}))}$ and 
  moreover $\formulaof{\varphi\subst{y}{\infty}} = \formulaof{\varphi(\vecof{x},y)}$ for each $y \geq g(\vecof{x})$. 
  which implies that $\semof{\formulaof{\varphi\subst{y}{\infty}}} \subseteq \semof{\formulaof{\exists y(\varphi(\vecof{x},y))}}$.
  Now consider a model $\asgn$ of the formula $\formulaof{\exists y(\varphi(\vecof{x},y))}$. Then, 
  there is a model $\asgn' = \asgn \cup \{ y \mapsto \ell \}$. Since $\ell < \infty$, from the monotonicity of 
  $\varphi$ and from the reasoning above we get that $\asgn$ is a model of $\formulaof{\varphi\subst{y}{\infty}}$, which concludes the 
  proof.
\qed
\end{proof}


\lemBounded*

\begin{proof}
  We prove the $c$-best-from-below case only. The other case can be proven analogously. 
  First, we show that each model $\asgn$ of $\formulaof{\varphi(\vecof{x},c')}$ is also a model of 
  $\formulaof{\exists y(\varphi(\vecof{x}, y) \land y \modof{m} k)}$. Consider $\asgn' = \asgn \cup \{ y \mapsto c' \}$.
  Since $c' \modof{m} k$, we have that $\asgn'$ is a model of $\formulaof{\varphi(\vecof{x}, y) \land y \modof{m} k}$.

  Now, we prove that each model $\asgn$ of $\formulaof{\exists y(\varphi(\vecof{x}, y) \land y \modof{m} k)}$ is 
  also a model of $\formulaof{\varphi(\vecof{x},c')}$. Since $\asgn$ is a model of the first formula, 
  we have that there is some $\ell$ s.t. $\asgn'= \asgn\cup \{ y \mapsto \ell \}$ is a model of $\formulaof{\varphi(\vecof{x}, y) \land y \modof{m} k}$.
  Since $\formulaof{\varphi}$ is $c$-best from below w.r.t.~$y$, we have that $\ell \leq c$. Secondly, 
  from the second conjunct we get that $\ell \modof{m} k$. From the definition of $c'$ we have that $\ell \leq c'$.
  Finally, if we again use the property of best-from-belowness of $\formulaof{\varphi}$ together with $\ell \leq c'$, 
  we obtain that $\asgn$ is a model of $\formulaof{\varphi(\vecof{x},c')}$.
\qed
\end{proof}



\begin{restatable}{lemma}{lemLinearizationYRestr}\label{lem:linearization_y_restr}
For $a_y, a_m, m, k, c \in \integers$ and $M \in \integers^+$, if there is $y_1
\in \integers$ such that $a_y \cdot y_1 + a_m \cdot m \modof M k$, then there
exist $y_2 \in [c-\alpha+1 , c]$ and $y_3 \in [c, c + \alpha - 1]$ such that
$a_y \cdot y_2 + a_m \cdot m \modof M k$ and
$a_y \cdot y_3 + a_m \cdot m \modof M k$ for $\alpha = \frac{M}{\gcdof{a_y, M}}$.
\end{restatable}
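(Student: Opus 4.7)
The plan is to show that the set $S = \{y \in \integers \mid a_y \cdot y + a_m \cdot m \modof M k\}$, under the hypothesis that it is nonempty (witnessed by $y_1$), forms an arithmetic progression with common difference exactly $\alpha = M/\gcd(a_y, M)$. Once this is established, the existence of $y_2$ and $y_3$ in the stated windows follows by a routine pigeonhole-style argument: any interval of $\alpha$ consecutive integers contains exactly one element of such a progression.

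First I would argue that $S = y_1 + \alpha \cdot \integers$. Since $y_1 \in S$, an integer $y$ lies in $S$ iff $a_y \cdot (y - y_1) \modof M 0$. Writing $d = \gcd(a_y, M)$ and $a_y = d a_y'$, $M = d \alpha$ with $\gcd(a_y', \alpha) = 1$, we get $a_y \cdot (y - y_1) \modof M 0$ iff $d a_y' (y - y_1)$ is a multiple of $d \alpha$, iff $a_y' (y - y_1) \modof \alpha 0$. Because $a_y'$ is coprime to $\alpha$, this is equivalent to $y - y_1 \modof \alpha 0$, giving $S = \{y_1 + j \alpha \mid j \in \integers\}$.

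Next, given the characterization $S = y_1 + \alpha \integers$, I would exhibit the required $y_2$ and $y_3$ explicitly. Set $j_2 = \lfloor (c - y_1)/\alpha \rfloor$, and let $y_2 = y_1 + j_2 \alpha$. By the defining property of the floor, $y_2 \leq c$ and $y_2 > c - \alpha$, hence $y_2 \in [c - \alpha + 1, c]$. Symmetrically, set $j_3 = \lceil (c - y_1)/\alpha \rceil$ and $y_3 = y_1 + j_3 \alpha$, which lies in $[c, c + \alpha - 1]$. Both $y_2, y_3 \in S$ by construction, so they satisfy the congruence $a_y \cdot y + a_m \cdot m \modof M k$.

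The only step requiring any care is the first one: specifically, the justification that the common difference is exactly $\alpha$ and not a proper divisor or multiple thereof, which hinges on the standard number-theoretic fact that the solutions to $a_y \cdot z \modof M 0$ are precisely the multiples of $M/\gcd(a_y, M)$. Apart from this, the proof is purely combinatorial.
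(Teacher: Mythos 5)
Your proof is correct and follows essentially the same route as the paper's: both arguments reduce the claim to the fact that the solutions of $a_y \cdot z \modof{M} 0$ are exactly the multiples of $\alpha = M/\gcdof{a_y, M}$, via the factorization $a_y = d\,a_y'$, $M = d\,\alpha$ with $a_y'$ and $\alpha$ coprime. If anything, your endgame is tighter than the paper's: you exhibit $y_2$ and $y_3$ explicitly via $\lfloor (c-y_1)/\alpha \rfloor$ and $\lceil (c-y_1)/\alpha \rceil$, where the paper closes with an informal remark that any interval of length $\alpha$ contains a point of the progression.
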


\begin{proof}
Assume that $a_y \cdot y_1 + a_m \cdot m \modof M k$ and let us show how
to choose $y_2$ from the interval such that $a_y \cdot y_2 + a_m \cdot m
\modof M k$.
Rewriting the two modulo constraints into equations, we get the following two linear constraints,
\begin{align}
  a_y \cdot y_1 + a_m \cdot m + d_1 \cdot M &= k \\ 
  a_y \cdot y_2 + a_m \cdot m + d_2 \cdot M &= k,
\end{align}
for some $d_1, d_2 \in \integers$.
From equality of the left-hand sides, we obtain
\begin{equation}
  a_y \cdot y_1 + d_1 \cdot M = a_y \cdot y_2 + d_2 \cdot M
\end{equation}
and, after rearrangement,
\begin{equation}
  a_y \cdot y_1 - a_y \cdot y_2 = d_2 \cdot M - d_1 \cdot M,
\end{equation}
which is equivalent to
\begin{equation}
  a_y (y_1 - y_2) = M (d_2 - d_1).
\end{equation}
Since~$d_2$ and~$d_1$ are both existentially quantified, we can substitute $d_2
  - d_1$ by an existentially quanfitied variable~$d$, obtaining
\begin{equation}
  a_y (y_1 - y_2) = M\cdot d.
\end{equation}
Note that since we are reasoning over integers, we cannot just divide the
right-hand side by~$a_y$.
Instead, let $\beta = \gcdof{a_y, M}$.
Then we can express~$a_y$ and~$M$ as $a_y = a_y' \cdot \beta$ and $M = M'
\cdot \beta$, so
\begin{equation}
  a_y' \cdot \beta \cdot (y_1 - y_2) = M' \cdot \beta \cdot d.
\end{equation}
The $\beta$'s cancel out, so we get
\begin{equation}
  a_y' (y_1 - y_2) = M' \cdot d
\end{equation}
where~$a_y'$ and~$M'$ are coprime.
Because of this, the equation can only hold if~$d$ is a~multiple of~$a_y'$,
i.e., $d = h\cdot a_y'$ for some~$h \in \integers$:
\begin{equation}
  a_y' (y_1 - y_2) = M' \cdot h \cdot a_y'
\end{equation}
Now, $a_y'$'s cancel out, i.e.,
\begin{equation}
  y_1 - y_2 = M' \cdot h\qquad \implies \qquad y_2 = y_1 - M' \cdot h.
\end{equation}
Because $M = M' \cdot \gcdof{a_y, M}$, we have that $M' =
\frac{M}{\gcdof{a_y, M}}$, which is the same as the definition of~$\alpha$.
Therefore, for any interval of the size at least~$\alpha$, there will be
a~value of~$h$ such that $y_2 = y_1 - \alpha \cdot h$ is in the interval.
We can duplicate similar reasoning for~$y_3$.
\qed
\end{proof}


\lemLinearizationRewrite*

\begin{proof}
  From \cref{lem:linearization_y_restr} and the fact that $\formulaof{\psi}$ is
  $c$-decreasing w.r.t.~$y$, it follows that it is sufficient to consider~$y$ in the interval
  $[c-\alpha +1, c]$.
  Therefore, for linearizing the modulo constraint $\formulaof{a_y \cdot y +
  a_m \cdot m \modof M k}$ into a~disjunction of linear equations of the form
  $\formulaof{a_y \cdot y + a_m \cdot m = k + \ell_j \cdot \alpha}$ for different~$\ell_j$ that
  intersect the rectangle $[c-\alpha+1,c] \times [r,s]$, we need to
  \begin{inparaenum}[(a)]
    \item  find the~$\ell_1$ of the left-most line intersecting the rectangle (we
      assume~$y$ is on the horizontal axis and~$m$ is on the vertical axis) and
    \item  find the number~$N$ of lines intersecting the rectangle.
  \end{inparaenum}
  The value of~$\ell_1$ is computed using a~technical excursion into high
  school analytic geometry that we do not reproduce here (we just note that the
  case split is to distinguish lines representing increasing/decreasing
  functions).
  From~$\ell_1$, we compute the~$y$-coordinate of the left-most line
  intersecting the rectangle, denoted as~$y_1$, and also~$N$, both by using
  analytic geometry one more time.
\qed
\end{proof}

\vspace{-2.0mm}
\section{Additional Plots}\label{sec:plots}
\vspace{-1.0mm}

In this section, we present additional plots underlying results presented in \cref{sec:experiments}. 
In particular, in \cref{fig:automata_based} we show scatter plots comparing \amaya with other 
automata-based tools and in \cref{fig:frobenius} we show comparison of \amaya with other 
tools on \frobenius.

\figAutomataBased 

\figFrobenius 

\end{document}